\definecolor{citegreen}{HTML}{208054}
\definecolor{citeblue}{HTML}{0055cc}
\renewcommand{\backref}[1]{}
\renewcommand{\backrefalt}[4]{%
\ifcase #1 %
\or
[p.\ #2]%
\else
[pp.\ #2]%
\fi}
\newcommand{\indic}{\mathbf{1}}   %
\newtheorem{theorem}{Theorem}[section]
\newtheorem{lemma}[theorem]{Lemma}
\newtheorem{proposition}[theorem]{Proposition}
\newtheorem{corollary}[theorem]{Corollary}
\newtheorem{fact}[theorem]{Fact}
\crefname{fact}{Fact}{Facts}
\theoremstyle{definition}
\newtheorem{definition}[theorem]{Definition}
\newtheorem{remark}[theorem]{Remark}
\newcommand{\eps}{\epsilon}
\newcommand{\td}{d_{\mathrm{tr}}}
\renewcommand{\Pr}{\mathop{\bf Pr\/}}
\newcommand{\E}{\mathop{\bf E\/}}
\newcommand{\ketbra}[2]{\ket{#1}\!\!\bra{#2}}
\newcommand{\tr}{\mathrm{tr}}
\newcommand{\poly}{\mathrm{poly}}
\newcommand{\spn}{\mathrm{span}}
\newcommand{\yes}{{\mathrm{yes}}}
\newcommand{\no}{{\mathrm{no}}}
\newcommand{\R}{\mathbb{R}}
\newcommand{\Ayes}{A_{\mathrm{yes}}}
\newcommand{\Ano}{A_{\mathrm{no}}}
\newcommand{\Oyes}{O_{\mathrm{yes}}}
\newcommand{\Ono}{O_{\mathrm{no}}}
\newcommand{\lmin}{\lambda_{\mathrm{min}}}
\newcommand{\lmax}{\lambda_{\mathrm{max}}}
\newcommand{\pureSuperQMA}{\mathsf{pureSuperQMA}}
\newcommand{\reals}{\mathbb R}
\newcommand{\complex}{\mathbb C}
\newcommand{\CC}{\complex}
\newcommand{\RR}{\reals}
\newcommand{\nats}{\mathbb N}
\newcommand{\NN}{\nats}
\newcommand{\class}[1]{\mathsf{#1}}
\mathchardef\mhyphen="2D %
\newcommand{\NP}{\mathsf{NP}}
\newcommand{\coNP}{\class{coNP}}
\newcommand{\PSPACE}{\mathsf{PSPACE}}
\newcommand{\BPP}{\mathsf{BPP}}
\newcommand{\NEXP}{\mathsf{NEXP}}
\newcommand{\BQP}{\mathsf{BQP}}
\newcommand{\QMA}{\mathsf{QMA}}
\newcommand{\coQMA}{\mathsf{coQMA}}
\newcommand{\QMAtwo}{\mathsf{QMA(2)}}
\newcommand{\PH}{\mathsf{PH}}
\newcommand{\QPH}{\mathsf{QPH}}
\newcommand{\QSigmai}[1][i]{\mathsf{Q\Sigma_{#1}}}
\newcommand{\QPii}[1][i]{\mathsf{Q\Pi_{#1}}}
\newcommand{\pureQPH}{\mathsf{pureQPH}}
\newcommand{\pureQSigmai}[1][i]{\mathsf{pureQ\Sigma_{#1}}}
\newcommand{\pureQPii}[1][i]{\mathsf{pureQ\Pi_{#1}}}
\newcommand{\PSHSigmai}[1][i]{\mathsf{PSH\mhyphen\Sigma}_{#1}}
\newcommand{\PSHPii}[1][i]{\mathsf{PSH\mhyphen\Pi}_{#1}}
\newcommand{\MSHSigmai}[1][i]{\mathsf{MSH\mhyphen\Sigma}_{#1}}
\newcommand{\MSHPii}[1][i]{\mathsf{MSH\mhyphen\Pi}_{#1}}
\newcommand{\calA}{\mathcal{A}}
\newcommand{\calB}{\mathcal{B}}
\newcommand{\calC}{\mathcal{C}}
\newcommand{\calD}{\mathcal{D}}
\newcommand{\calH}{\mathcal{H}}
\newcommand{\calS}{\mathcal{S}}
\DeclarePairedDelimiter\abs{\lvert}{\rvert}
\DeclarePairedDelimiter\pars{\lparen}{\rparen}
\DeclarePairedDelimiter\maxnorm{\lVert}{\rVert_{\mathrm{max}}}
\newcommand{\norm}[1]{\left\lVert #1 \right\rVert}
\newcommand{\Pprod}{P_{\mathrm{prod}}}
\newcommand{\Pswap}{P_{\mathrm{swap}}}
\newcommand{\Piprod}{\Pi_{\mathrm{prod}}}
\newcommand{\Piswap}{\Pi_{\mathrm{swap}}}
\newcommand{\Pisym}{\Pi_{\mathrm{sym}}}
\newcommand{\pacc}{p_{\mathrm{acc}}}
\newcommand{\Nacc}{N_{\mathrm{acc}}}
\newcommand{\wteta}{\widetilde{\eta}}
\let\epsilon\varepsilon
\title{
On the Pure Quantum Polynomial Hierarchy\\and Quantified Hamiltonian Complexity
}
\author{Sabee Grewal\thanks{\texttt{sabee@cs.utexas.edu}. The University of Texas at Austin.} 
\and{Dorian Rudolph\thanks{\texttt{dorian.rudolph@upb.de}. Paderborn University and PhoQS.}}
}
\date{}
\begin{document}

\maketitle

\begin{abstract}
We prove several new results concerning the pure quantum polynomial hierarchy $\pureQPH$. 
First, we show that $\QMA(2) \subseteq \pureQSigmai[2]$, i.e., two unentangled existential provers can be simulated by competing existential and universal provers. 
We further prove that $\pureQSigmai[2]\subseteq \QSigmai[3] \subseteq \NEXP$. 
Second, we give an error reduction result for 
$\pureQPH$, and, as a consequence, prove that 
$\pureQPH = \QPH$. 
A key ingredient in this result is an improved dimension-independent disentangler.
Finally, we initiate the study of quantified Hamiltonian complexity, the quantum analogue of quantified Boolean formulae. 
We prove that the quantified pure sparse Hamiltonian problem is $\pureQSigmai$-complete. 
By contrast, other natural variants (pure/local, mixed/local, and mixed/sparse) admit nontrivial containments but fail to be complete under known techniques. 
For example, we show that the $\exists\forall$-mixed local Hamiltonian problem lies in $\NP^\QMA \cap \coNP^{\QMA}$.
\end{abstract}

\hypersetup{linktocpage}
\tableofcontents

\newpage
\section{Introduction}
The polynomial hierarchy ($\PH$) \cite{stockmeyerPolynomialtimeHierarchy1976} plays a central role in complexity theory. 
It has been instrumental in understanding the power of computational models such as
$\BPP$~\cite{m.sipserComplexityTheoreticApproach1983,c.lautemannBPPPolynomialTime1983}, low-depth classical circuits~\cite{furstParityCircuitsPolynomialtime1984}, counting classes~\cite{todaPPHardPolynomialTime1991}, and non-uniform computation~\cite{karplipton}. 
More recently, $\PH$ has been used to provide evidence for the hardness of simulating quantum circuits and has underpinned theoretical foundations for quantum supremacy demonstrations~\cite{bremnerClassicalSimulationCommuting2010,aaronsonComputationalComplexityLinear2011,boulandComplexityVerificationQuantum2019}.

Quantum generalizations of the polynomial hierarchy have been explored intermittently over the past two decades~\cite{t.yamakamiQuantumNPQuantum2002,j.lockhartQuantumStateIsomorphism2017,gharibian2022quantum,falor2023collapsiblepolynomialhierarchypromise,grewal_et_al:LIPIcs.CCC.2024.6,agarwal_et_al:LIPIcs.MFCS.2024.7}, and have recently begun to find broader applications within quantum complexity theory~\cite{agarwal2025cautionarynotequantumoracles}. 
Given the central role of $\PH$, it is natural to study its quantum analogues and their role in quantum complexity theory.

As with most prior work, we focus on the quantifier-based definitions of the quantum polynomial hierarchy. In this setting, the $i$th level, denoted $\QSigmai$, consists of promise problems that can be decided by a quantum polynomial-time verifier interacting with $i$ rounds of quantum proofs, where the quantifiers alternate between existential and universal. 
More concretely, $\QSigmai$ (resp. $\QPii$) consists of problems where the interaction begins with an existential (resp. universal) quantum proof, followed by alternating quantifiers over polynomial-size quantum states, with the verifier required to accept with high probability in the YES case and reject in the NO case.
The union over all levels defines the quantum polynomial hierarchy $\QPH$. The \emph{pure} quantum polynomial hierarchy $\pureQPH$ is defined analogously, except that the quantified quantum proofs are restricted to be pure states.\footnote{There is also a third natural variant: the \emph{entangled} quantum polynomial hierarchy, where the provers are allowed to entangle their proofs across rounds. Grewal and Yirka \cite{grewal_et_al:LIPIcs.CCC.2024.6} showed that this variant collapses to its second level.}

A central challenge in defining a quantum analogue of the polynomial hierarchy is determining the ``right'' formulation among several possible variants. This question has been raised before~\cite{gharibian2022quantum,agarwal_et_al:LIPIcs.MFCS.2024.7}, but it remains unresolved. This work tackles the problem directly: we prove several new results about $\pureQPH$ that, taken together, provide compelling evidence that it is the most natural quantifier-based definition of the quantum polynomial hierarchy.

At the same time, each of our results is interesting in its own right. 
Our first contribution gives a new upper bound on $\QMA(2)$, placing it in the second level of $\pureQPH$. This shows that two existential provers can be simulated by competing provers and recasts $\QMA(2)$ as a min-max optimization problem (rather than a nonconvex optimization over separable states). Our second and most technical result is an error reduction procedure for $\pureQPH$, which in turn implies $\pureQPH = \QPH$; the key tool is a new dimension-independent disentangler, extending recent work of Jeronimo and Wu~\cite{JW24}. 
Finally, we initiate the study of quantified Hamiltonian complexity, a quantum analogue of quantified Boolean formulae. These problems capture robust ground-state questions, such as whether there exists a state on one subsystem that ensures the overall system remains low-energy regardless of perturbations to the rest.

\subsection{Our Results}

Our first result establishes that $\QMAtwo \subseteq \pureQSigmai[2] \subseteq \QSigmai[3]$, i.e., that the second level of $\pureQPH$ is sandwiched between $\QMAtwo$ and the third level of $\QPH$. 
Prior work has shown that $\QSigmai[3] \subseteq \NEXP$ \cite{gharibian2022quantum}. 

\begin{theorem}[Combination of \cref{thm:qma2-in-psigma2,thm:pqsigma2-in-qsigma3}]\label{thm:intro-sandwich}
  $\QMA(2)\subseteq\pureQSigmai[2]\subseteq\QSigmai[3] \subseteq \NEXP$.
\end{theorem}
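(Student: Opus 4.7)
The theorem bundles three containments: $\QMA(2) \subseteq \pureQSigmai[2]$ (\cref{thm:qma2-in-psigma2}), $\pureQSigmai[2] \subseteq \QSigmai[3]$ (\cref{thm:pqsigma2-in-qsigma3}), and $\QSigmai[3] \subseteq \NEXP$ (from \cite{gharibian2022quantum}). Since the last is imported, my plan concerns only the two new inclusions.

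For $\QMA(2) \subseteq \pureQSigmai[2]$, the strategy is to simulate a strongly gap-amplified $\QMA(2)$ verifier $V$ (with completeness $1-2^{-\poly(n)}$ and soundness $2^{-\poly(n)}$) by a $\pureQSigmai[2]$ protocol. In that protocol, the existential prover sends a joint pure state $|\Psi\rangle$ on the combined register of the two $\QMA(2)$ proofs, claiming it is the honest product $|\psi_1\rangle|\psi_2\rangle$; the universal prover sends a pure challenge $|\phi\rangle$ on one subregister. The verifier then runs a randomized test: with some probability it executes $V$ directly on $|\Psi\rangle$, and with the remaining probability it substitutes $|\phi\rangle$ into one subregister of $|\Psi\rangle$ and runs $V$ on the resulting hybrid input. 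Completeness is handled by sending a genuinely product $|\Psi\rangle$, so the unaltered branch accepts overwhelmingly and the substitution branch's accuracy loss is bounded by a fixed mixing weight. The main obstacle is the soundness analysis: I must argue that for every (possibly highly entangled) $|\Psi\rangle$, the universal prover has a strategy forcing overall acceptance below the threshold. The plan is to case-split on how close $|\Psi\rangle$ is to a product state: near-product $|\Psi\rangle$ is handled by $\QMA(2)$ soundness of $V$ in the direct branch; genuinely entangled $|\Psi\rangle$ is exposed in the substitution branch, where the universal prover can choose $|\phi\rangle$ to make the hybrid input behave like a product state on which $V$ rejects. Pinning down this tradeoff will require careful balancing of the randomized test weights against the amplified gap of $V$.

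For $\pureQSigmai[2] \subseteq \QSigmai[3]$, the relevant subtlety is that the pure-state value $\max_\psi \min_\phi V(\psi,\phi)$ and the mixed-state value $\max_\rho \min_\sigma V(\rho,\sigma)$ differ in general: the outer $\max$ over concave $\min_\phi V(\cdot,\phi)$ can be strictly larger on mixed states than on pure states, so one cannot simply embed $\pureQSigmai[2]$ into $\QSigmai[2]$. My plan is to use the third existential prover in $\QSigmai[3]$ to enforce purity of the first prover's message. Concretely, the first prover sends $\rho$ (claimed pure), the universal prover sends $\sigma$, and the third prover sends a pure-state certificate $\tau$. The $\QSigmai[3]$ verifier randomly performs either a SWAP test between $\rho$ and $\tau$ or a run of the original $\pureQSigmai[2]$ verifier on the pair $(\tau,\sigma)$. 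In a YES instance, taking $\tau=\rho$ equal to the honest pure witness passes the SWAP test and makes $V$ accept for any $\sigma$. In a NO instance, either $\rho$ is far from pure so the SWAP test fails with non-negligible probability, or $\rho$ is near-pure which forces $\tau$ to approximately equal $\rho$, and then pure-state soundness of the underlying $\pureQSigmai[2]$ protocol lets the universal prover pick $\sigma$ so that $V(\tau,\sigma)$ is small. The main obstacle here is quantitatively trading off the purity of $\rho$ against the NO-soundness bound; a suitable gap amplification of the inner $\pureQSigmai[2]$ protocol and a standard analysis of SWAP-test fidelity estimates should suffice.

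Overall, the dominant technical difficulty will be the soundness argument in the first reduction, since it must simultaneously handle arbitrary entangled-state cheating strategies and must do so with only a single adversarial message available to the universal prover.
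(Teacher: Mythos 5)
Your plan for $\pureQSigmai[2]\subseteq\QSigmai[3]$ is essentially the paper's argument: use the third (existential) message as a second copy of the first, SWAP-test the two copies to certify approximate purity, and otherwise feed a copy together with the universal message into the inner $\pureQSigmai[2]$ verifier. The only cosmetic difference is that you route $\tau$ (the third message) rather than $\rho_1$ into the inner verifier; the paper's quantitative analysis ($\lambda_{\max}(\rho_1)\ge 1-\delta$ via \Holder, then a careful choice of the branching probability $p$) carries over.

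The proposal for $\QMA(2)\subseteq\pureQSigmai[2]$, however, has a concrete flaw that breaks completeness. In your protocol the existential prover sends a joint state $\ket\Psi$ and the universal prover sends $\ket\phi$ to be substituted into one subregister; the verifier then either runs $V$ directly on $\ket\Psi$ or on the substituted hybrid. In a YES instance, the honest product witness makes the direct branch accept, but the adversarial universal prover controls $\ket\phi$ in the substitution branch and can always choose it so that $V$ rejects the hybrid (the substituted input is a product state $\tr_B(\ketbra\Psi\Psi)\otimes\ketbra\phi\phi$, and $\QMA(2)$ completeness only guarantees a \emph{particular} second register accepts). So completeness is bounded above by the probability $p$ of the direct branch. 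In a NO instance, an entangled $\ket\Psi$ can be accepted by $V$ with probability close to $1$ in the direct branch (soundness of $\QMA(2)$ only constrains separable inputs), while the substitution branch always rejects; soundness is then about $p+(1-p)\epsilon$. These bounds do not separate for any $p$, so the gap you need does not exist, independent of how the mixing weights are tuned.

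The fix, and what the paper does, requires two ideas your plan lacks. First, invoke Harrow--Montanaro to assume WLOG the honest $\QMA(2)$ witness is symmetric, $\ket{\psi}\ket{\psi}$. Second, have the existential prover send only $\ket\psi$ (one register), require the universal prover to \emph{replicate} it in the other register, and run a SWAP test between the two single-register messages, \emph{accepting} when the SWAP test fails. This inversion is crucial: in the YES case, if the universal prover deviates from $\ket\psi$, the SWAP test fails with probability $\Theta(\delta^2)$ and the verifier accepts; if the prover stays close, the post-measurement state (via gentle measurement) is close to $\ket\psi\ket\psi$, on which $V$ accepts. In the NO case the universal prover honestly copies, the SWAP test passes with probability $1$, and $V$ rejects the resulting product. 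Your scheme never gives the universal prover an ``honest duplication'' task whose failure the verifier can exploit, which is why the completeness side collapses.
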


Informally, $\QMAtwo$ is the class of promise problems decidable given two unentangled quantum proofs. Since its introduction in 2001~\cite{kobayashi2001quantumcertificateverificationsingle}, it has been known that $\QMA \subseteq \QMAtwo \subseteq \NEXP$. However, whether $\QMAtwo$ is ``closer'' to $\QMA$ or to $\NEXP$ remains a central open problem in quantum complexity theory.

What \cref{thm:intro-sandwich} contributes to this question is nuanced. 
Our result shows that $\QMAtwo$ can be captured by a one-round quantum refereed-game model with an existential prover followed by an (adversarial) universal prover who has perfect knowledge of the first message.
Conceptually, this recasts the nonconvex ``maximize over separable witnesses'' view of $\QMAtwo$ as a single alternation
\[
\max_{\ket{\psi}} \min_{\ket\phi} \tr\left(\Pi(\ketbra{\psi}{\psi} \otimes \ketbra{\phi}{\phi})\right),
\]
that is, a saddle-point optimization problem. 
This perspective opens the door to applying techniques from min–max optimization, game theory, and the study of quantum refereed games to better understand the true power of $\QMAtwo$.
Moreover, in closely related models where provers are allowed \emph{mixed-state strategies}, the game value can be approximated in $\PSPACE$~\cite{jain2009parallel}. We view this as qualitative evidence that $\QMAtwo$ is perhaps not equal to $\NEXP$.

Our second result is an error reduction result for $\pureQPH$, resolving an open problem of Gharibian et al.\ \cite{gharibian2022quantum} and Agarwal et al.\ \cite{agarwal_et_al:LIPIcs.MFCS.2024.7}.

\begin{theorem}[Restatement of \cref{thm:amplification}]\label{thm:intro-amplification}
  $\pureQSigmai[r]\subseteq\QSigmai[7r](c',s')$ with $c'\ge 1-1/q(n)$ and $s'\le 1/q(n)$, where $q$ is an arbitrary polynomial.
\end{theorem}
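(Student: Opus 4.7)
The plan is to combine parallel repetition with the improved dimension-independent disentangler advertised earlier in the introduction, so that any mixed-state strategy in $\QPH$ can be reduced, round by round, to a pure product-state strategy for which the original $\pureQSigmai[r]$ soundness guarantee applies. First, I would ask each of the $r$ provers, in their respective round, to send $k = \poly(n, q(n), r)$ copies of the pure state they would have played in the original protocol; the verifier then executes the $\pureQSigmai[r]$ verification circuit $k$ times in parallel on these copies and accepts by majority. If all provers play honest pure tensor powers $\ket{\psi_i}^{\otimes k}$, a Chernoff bound yields completeness $\geq 1 - 1/q(n)$ in the \YES\ case for suitable $k$.

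Second, to handle the \NO\ case, where an adversarial prover might cheat by sending a mixed or entangled state instead of a true tensor power, I would precede each round by a disentangler subprotocol. The dimension-independent disentangler takes as input the prover's purported tensor power register and, through a constant number of rounds of adversarial quantifier alternation, outputs a state which is $\epsilon$-close in trace distance to a convex combination of pure product states $\int \ketbra{\psi}{\psi}^{\otimes k}\, d\mu(\psi)$. Since each original round is replaced by at most a constant number of rounds (giving the $7r$ bound), and since the disentangler's output is in general mixed, the overall amplified protocol quantifies over \emph{mixed} strategies — which is why the result lives in $\QSigmai[7r]$ rather than $\pureQSigmai[7r]$. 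By linearity of the verifier's acceptance probability, one may then fix the mixture component and reduce to an adversary playing a genuine pure tensor power, at which point the per-copy acceptance probability is bounded by the original soundness $s$ and Chernoff gives the desired $1/q(n)$ bound.

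The main obstacle is error composition. Each of the $r$ disentanglers introduces an inverse-polynomial error in trace distance, and these errors must be shown to propagate benignly through subsequent rounds despite later provers adaptively choosing strategies depending on the (already-disentangled) earlier registers. The dimension-independence of the new disentangler is essential here: unlike quantum de Finetti-based arguments, its error bound depends only on the local subsystem dimension and the number of copies $k$, not on the (exponentially large) workspace dimension accumulated across rounds. This independence lets me set $k$ and the disentangler precision as small fixed polynomials in $n$ and $q(n)$, absorb the $O(r)$ accumulated errors into the $1/q(n)$ slack, and thereby preserve the gap throughout the $r$ alternations. The remaining step is a careful induction on rounds, arguing that after each insertion the value of the residual subgame is within inverse-polynomial distance of the value of the corresponding subgame in the original pure protocol.
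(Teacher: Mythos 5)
Your proposal has a genuine gap at its central step, and it concerns exactly the issue the paper's proof is designed to overcome. You write that once the disentangler produces a state close to $\int \ketbra{\psi}{\psi}^{\otimes k}\, d\mu(\psi)$, "by linearity of the verifier's acceptance probability, one may then fix the mixture component and reduce to an adversary playing a genuine pure tensor power." This works for a single prover in a one-round game, where acceptance probability is linear in that prover's state. It fails in an alternating game: if the universal prover sends a mixture $\rho = \sum_i p_i \ketbra{\psi_i}{\psi_i}$, the \emph{existential} prover in the next round must commit to a response \emph{before} the mixture is resolved, and cannot condition on which $\ket{\psi_i}$ was "observed." Her optimal responses to different $\ket{\psi_i}$ may be entirely different pure states, so there need not be any single pure response that works on average, and linearity does not let you push the expectation past the outer $\min/\max$. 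This is precisely the danger the paper illustrates with the SWAP-test toy game in the introduction, and it is the reason $\pureQPH = \QPH$ is nontrivial rather than a routine convexity argument.

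Because of this, the specific strength of the paper's disentangler — not merely dimension independence, which Jeronimo--Wu already have, but that the output mixture is supported on only $m = O(\delta^{-2})$ pure product states — is essential, and your proposal does not make use of it. With a small support, the prover in round $i$ can enumerate all $M_i$ polynomially-many possible mixture components (equivalently, partial transcripts) that could have occurred in rounds $1,\ldots,i-1$ and send a \emph{table} of transcript--response pairs $\bigotimes_j \ket{T_{ij}}\ket{\psi_{ij}}$ in tensor product form; the verifier then uses repeated SWAP tests against its canonical transcript $\ket{C_{i-1}}$ to select the matching entry and extend the transcript. This table-plus-SWAP-selection mechanism is what replaces your invalid linearity step, and it is absent from your outline. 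Relatedly, the factor $7$ is not "a constant number of extra rounds per disentangler call"; it comes from grouping blocks of seven alternations and discarding the interleaved opponent messages so that each simulated round consists of \emph{four unentangled} registers from the same prover, as required by the four-input disentangler (the fourth input being the price paid for the small-support guarantee). Finally, the Jeronimo--Wu disentangler and the improved one are quantum channels the verifier applies, not subprotocols involving extra quantifier alternations as you describe.
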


That is, we show that any protocol in $\pureQSigmai[r]$ can be converted into one in $\QSigmai[7r]$ whose completeness (resp. soundness) is $1/\poly$-close to $1$ (resp. $0$) at the cost of a constant-factor increase in the number of alternations. 

Observe that \cref{thm:intro-amplification} says that every level of $\pureQPH$ is contained in some level of $\QPH$, i.e., that $\pureQPH \subseteq \QPH$. The reverse containment $\pureQPH \supseteq \QPH$ is easy to see: the provers can simply send purifications of the proofs in the $\QPH$ protocol.
Hence, the following corollary is immediate. 

\begin{corollary}[Restatement of \cref{cor:equal}]\label{cor:intro-equal}
   $\pureQPH = \QPH$. 
\end{corollary}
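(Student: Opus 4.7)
The plan is to prove the two containments separately; the hard technical work is packaged in \cref{thm:intro-amplification}, and the corollary itself is a direct consequence.

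For the direction $\pureQPH \subseteq \QPH$, I would invoke \cref{thm:intro-amplification} to obtain $\pureQSigmai[r] \subseteq \QSigmai[7r]$ for every polynomially bounded $r$, and derive the dual $\pureQPii[r] \subseteq \QPii[7r]$ either by complementing the protocol or by padding with a vacuous quantifier and reusing the $\Sigma$-version. Taking the union over all such $r$ then yields $\pureQPH \subseteq \QPH$.

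For the other direction $\QPH \subseteq \pureQPH$, I would prove level by level that $\QSigmai[r] \subseteq \pureQSigmai[r]$, and analogously for $\QPii[r]$, via a standard purification argument. Given a $\QSigmai[r]$ protocol in which the $i$th prover sends a mixed state on a register $\calA_i$, the corresponding $\pureQSigmai[r]$ protocol asks the $i$th prover to send a pure state on the enlarged register $\calA_i \otimes \calR_i$ with $\dim(\calR_i) = \dim(\calA_i)$; the new verifier traces out every $\calR_i$ and then applies the original verification circuit. Since partial trace commutes with the verifier's linear measurement, the acceptance probability on inputs $(\ket{\Psi_1}, \ldots, \ket{\Psi_r})$ equals that of the original protocol on the reduced states $(\tr_{\calR_i}\ket{\Psi_i}\!\bra{\Psi_i})_i$. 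In the YES case, each existential prover sends a purification of their optimal mixed message (relative to the traced-down transcript so far); in the NO case, each universal prover sends a purification of their worst-case mixed response. Either way, completeness and soundness thresholds transfer exactly.

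The only subtle point in the purification reduction is verifying that it composes correctly across all $r$ alternating quantifiers. This holds because the purifying registers $\calR_i$ remain under the verifier's control and are never transmitted to any subsequent prover, so no adversarial quantifier can exploit them; the argument therefore iterates cleanly regardless of the quantifier pattern or its length. The main intellectual content of the corollary thus lies entirely in \cref{thm:intro-amplification}, and the write-up itself should be short.
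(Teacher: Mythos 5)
Your proposal is correct and follows essentially the same route as the paper: $\pureQPH \subseteq \QPH$ directly from \cref{thm:intro-amplification}, and $\QPH \subseteq \pureQPH$ by having each prover send a purification with the verifier tracing out the purifying registers. (One small remark: since \cref{def:QSigmai} defines $\QPH$ and $\pureQPH$ as unions of $\Sigma$-levels only, your separate handling of the $\Pi$-levels is unnecessary, though harmless.)
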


We emphasize that the equivalence of $\pureQPH$ and $\QPH$ is far from obvious. To illustrate, consider the following simple two-player game: Player 1 sends a state to the verifier, and Player 2, after learning Player 1’s message, must send the same state.\footnote{All of the protocols and proof systems we study can be viewed as games of \emph{perfect information}, meaning that each prover (or player) is fully aware of all moves made in the game so far. Indeed, even $\PH$ has this game-theoretic interpretation.}
The verifier runs a SWAP test, declaring ``Player 1 wins'' if the test fails and ``Player 2 wins'' if it passes. 
If the players are restricted to sending pure states, then Player 2 can always win with probability $1$, by perfectly replicating Player 1’s state. By contrast, if mixed states are allowed, Player 1 can send the maximally mixed state, in which case Player 2’s winning probability drops to approximately $1/2$.

Indeed, \cref{thm:intro-amplification,cor:intro-equal} are our most technically involved results. To establish them, we construct a new dimension-independent disentangler.

\begin{lemma}[Restatement of \cref{lem:pure-disentangler}]\label{lem:intro-pure-disentangler}
  Let $\calH = \CC^{d_1}\otimes\dotsm\otimes\CC^{d_s}$, $k\in \NN$, and $\delta>0$.
  There exist parameters $\ell\in\poly(\delta^{-1},k),m\in O(\delta^{-2})$ and a quantum channel 
  \[\Gamma\colon\calD(\calH^{\otimes 4\ell})\to\calD(\calH^{\otimes k}),\] 
  with the following properties:
  for all states $\rho_1,\rho_2,\rho_3,\rho_4\in\calD(\calH^{\otimes\ell})$, there exists a distribution $\{p_i\}_{i=1}^m$ over product states 
  \[
    \ket{\zeta_i} = \ket{\zeta_{i,1}} \otimes \dots \otimes \ket{\zeta_{i,s}}, \quad \ket{\zeta_{i,j}} \in \CC^{d_j}
    \]
  such that
  \begin{equation}
    \norm{\Gamma(\rho_1\otimes\rho_2\otimes\rho_3\otimes\rho_4) - \sum_{i=1}^m p_i \ketbra{\zeta_i}{\zeta_i}^{\otimes k}}_1 \le \delta.
  \end{equation}
  Furthermore, for 
 every pure product state $\ket{\psi}=\ket{\psi_1}\otimes\dotsm\otimes\ket{\psi_s}\in\calH$,
  $\Gamma(\ketbra{\psi}{\psi}^{\otimes4\ell}) = \ketbra{\psi}{\psi}^{\otimes k}$. 
\end{lemma}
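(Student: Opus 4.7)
The plan is to construct $\Gamma$ by combining the dimension-independent disentangler of Jeronimo and Wu~\cite{JW24}, which I denote $\Gamma_{\mathrm{JW}}$, with additional per-tensor-factor tests that enforce product structure across $\calH = \CC^{d_1}\otimes\dotsm\otimes\CC^{d_s}$ and that are designed so pure product inputs pass with certainty. I would partition the $4\ell$ input systems into four $\ell$-copy registers $R_1,R_2,R_3,R_4$, each playing a distinct role in the construction.

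First, I would apply $\Gamma_{\mathrm{JW}}$ to $R_1$ and $R_2$, producing an output state on $\calH^{\otimes k'}$ (with $k'\ge k$) that is $O(\delta)$-close in trace distance to $\sum_i p_i \ket{\phi_i}\bra{\phi_i}^{\otimes k'}$ for some distribution over pure states $\ket{\phi_i}\in\calH$ --- but not yet product across the tensor factors. I would verify that $\Gamma_{\mathrm{JW}}$ can be arranged to act as the identity on $\ket{\psi}^{\otimes 2\ell}$ for any pure $\ket{\psi}$, by inspecting its internal symmetric-subspace projection and partial trace; this is the first ingredient for perfect completeness on pure product inputs. The parameter $\ell$ is chosen polynomially in $\delta^{-1}$ and $k$ so that the JW24 error bound is $O(\delta)$ and $k'$ is large enough to feed the remaining steps.

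Next, I would use $R_3$ to enforce product structure. For each tensor factor $j\in\{1,\dotsc,s\}$, perform a SWAP test between one copy of the output of step 1 and one copy from $R_3$, both restricted to the $j$th factor (with the other factors traced out), and accept only if every per-factor test passes. A pure state $\ket{\phi}\in\calH$ passes all $s$ such tests with certainty if and only if it is fully product across the factors, and this extends quantitatively by a perturbative analysis of the SWAP test with dimension-independent bounds. Averaging against the mixture from step 1 then forces the output to be close to a mixture of pure product states. Finally, I would use $R_4$ to amplify to $k$ copies (via symmetric-subspace projection followed by a partial trace) and to discretize the continuous mixture to support size $m = O(\delta^{-2})$ by a Chernoff/empirical-mean argument, which can be made dimension-independent because after the product-structure tests the mixture is effectively supported on the low-dimensional manifold of pure product states.

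The hard part will be the quantitative error analysis, in particular ensuring every step introduces only dimension-independent error and that composition preserves this. The key auxiliary statement I expect to need is: if a pure state $\ket{\phi}\in\calH$ passes the $j$th per-factor SWAP test with probability $1-\eps_j$, then $\ket{\phi}$ is $O(\sqrt{\sum_j \eps_j})$-close in trace distance to some pure product state $\ket{\zeta_1}\otimes\dotsm\otimes\ket{\zeta_s}$, with constants independent of the $d_j$. Combined with the JW24 error bound and a suitable choice of $\ell\in\poly(\delta^{-1},k)$, this should yield total error at most $\delta$ after composition. Perfect completeness on pure product inputs is then by design: on input $\ket{\psi}^{\otimes 4\ell}$ with $\ket{\psi}$ product, every constituent operation --- $\Gamma_{\mathrm{JW}}$, each per-factor SWAP test, and the final amplification --- accepts with probability exactly $1$ and outputs $\ket{\psi}\bra{\psi}^{\otimes k}$ exactly.
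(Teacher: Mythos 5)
Your high-level idea---couple the Jeronimo--Wu disentangler with Harrow--Montanaro-style product tests, and get perfect completeness on pure product inputs because every subroutine acts identically there---is the right shape, but the proposal has two genuine gaps.

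The first is the role of $R_3$ and $R_4$. You treat $R_3$ as a supply of reference copies against which to SWAP-test the JW output, and $R_4$ as a post-processing register for amplification/discretization. However, $\rho_3$ is an arbitrary state in $\calD(\calH^{\otimes\ell})$ with no guaranteed relationship to $\Lambda(\rho_1\otimes\rho_2)$, so a per-factor SWAP test between a register of $\rho_3$ and a register of the JW output carries no useful meaning on its own. The paper instead applies a \emph{second} JW disentangler to $\rho_3\otimes\rho_4$, obtaining $\sigma_2 \approx \sum_j q_j\ketbra{\phi_j}{\phi_j}^{\otimes(k+k')}$, and then runs $k'$ product tests between registers of $\sigma_1$ and $\sigma_2$. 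This conditions the joint distribution $p_iq_j$ on $\Pprod(\ketbra{\psi_i}{\psi_i},\ketbra{\phi_j}{\phi_j})^{k'}$ being non-negligible, which is what makes the product-structure conclusions go through. Your stated auxiliary lemma (passing the per-factor SWAP tests forces proximity to a product state) is essentially Harrow--Montanaro, but it only yields what you need once you have arranged two near-copies of the same state, which your construction does not provide.

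The second gap is the crux: you do not actually establish the support bound $m\in O(\delta^{-2})$, which is the entire point of this lemma relative to the plain JW disentangler. You appeal to a ``Chernoff/empirical-mean'' discretization, justified by the claim that the mixture is ``effectively supported on the low-dimensional manifold of pure product states.'' But the product-state manifold is not low-dimensional (its dimension scales with $\sum_j d_j$), and naive Carathéodory or empirical-mean truncations give support sizes depending on the ambient dimension, which is exactly what must be avoided. The paper's mechanism is genuinely combinatorial: \cref{lem:peaked} shows that, conditioned on the product tests accepting, a set $X$ of $O(\delta^{-2})$ indices on the $\sigma_2$ side hits almost all accepting pairs $(i,j)$; for each such $i$, $\ket{\psi_i}$ is close to the chosen $\ket{\phi_{j_i}}$ with $j_i\in X$, which is itself close to a product state, so the final output is close to a mixture over only $\abs{X}+1$ product states. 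This step exploits the product decomposition $\sum_{i,j}p_iq_j$ coming from the \emph{two} JW applications in an essential way; nothing in your scheme has an analogue of it.
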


Our construction builds on the disentangler of Jeronimo and Wu~\cite{JW24}, but strengthens it in a key way. While the Jeronimo-Wu channel guarantees closeness to a convex combination of product states, our disentangler ensures that the output is close to a convex combination of only $m = O(\delta^{-2})$ product states, where $\delta$ is a parameter that can be chosen. 
In other words, not only is the disentangled output structured, but it is also supported on a small set of product states, which is crucial for our amplification procedure. 
The tradeoff is that we use four unentangled input states whereas the Jeronimo-Wu channel only uses two.

Our final set of results concerns \emph{quantified Hamiltonian complexity}. We study natural generalizations of the local Hamiltonian problem, which asks: given a local Hamiltonian $H$, decide whether there exists a state $\ket{\psi}$ with energy $\braket{\psi|H|\psi} \leq a$, or if instead all states satisfy $\braket{\psi|H|\psi} \geq b$, promised one of these is the case. The local Hamiltonian problem is well known to be $\QMA$-complete~\cite{Kitaev2002,doi:10.1137/S0097539704445226}.

We extend this to the quantified setting, in analogy with quantified Boolean formulae \cite{arora2009computational}.  
For instance, the $\exists\forall$-mixed local Hamiltonian problem ($\exists\forall$-MLH) is defined as follows: given a local Hamiltonian $H$, decide whether there exists a mixed state $\rho$ such that for all mixed states $\sigma$, $\tr(H(\rho \otimes \sigma))\leq a$, or if instead, for all $\rho$, there exists a $\sigma$ such that $\tr(H(\rho\otimes\sigma))\ge b$, promised one of these is the case. 
For this problem, we obtain the following containment:

\begin{proposition}[Restatement of \cref{cor:npqma-conpqma}]\label{prop:intro-npqma}
The $\exists\forall$-mixed local Hamiltonian problem is in $\NP^\QMA \cap \coNP^\QMA$.
\end{proposition}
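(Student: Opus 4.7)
The plan is to combine Sion's minimax theorem with the $\QMA$-completeness of the local-density-matrix consistency problem, together with a structural observation that admits a polynomial-size classical description of the relevant reduced operators. Writing the $k$-local Hamiltonian as $H = \sum_j A_j \otimes B_j$ with each $A_j$ and $B_j$ acting on $O(1)$ qubits, one has $M_\rho := \tr_A((\rho\otimes I_B)H) = \sum_j \tr(\rho_{A_j}A_j)\,B_j$, which is itself a local Hamiltonian on $B$ and depends on $\rho$ only through the poly-many constant-size marginals $\{\rho_{A_j}\}_j$. The symmetric statement holds for $M'_\sigma := \tr_B((I_A\otimes\sigma)H)$ on the $A$-side, and by bilinearity $\max_\sigma \tr(H(\rho\otimes\sigma)) = \lambda_{\max}(M_\rho)$ while $\min_\rho \tr(H(\rho\otimes\sigma)) = \lambda_{\min}(M'_\sigma)$.

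Given this, I would first invoke Sion's minimax theorem to write $v^* := \min_\rho\max_\sigma \tr(H(\rho\otimes\sigma)) = \max_\sigma\min_\rho \tr(H(\rho\otimes\sigma))$, so that the YES case is equivalent to $\exists\rho\colon \lambda_{\max}(M_\rho)\le a$ and the NO case to $\exists\sigma\colon \lambda_{\min}(M'_\sigma)\ge b$. The $\NP^\QMA$ protocol then proceeds as follows: the $\NP$-prover sends the classical marginals $\{\rho_{A_j}\}$ of an optimal $\rho^*$ (each a constant-size density matrix with $\poly(n)$-bit rational entries), and the verifier uses its $\QMA$ oracle twice. First, it checks whether the claimed marginals are consistent with some global density matrix on $A$; this is the consistency-of-local-density-matrices problem, known to be $\QMA$-complete (Liu). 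Second, it explicitly writes down $M_\rho$ from the marginals and queries whether $\lambda_{\max}(M_\rho)\le a$, a local-Hamiltonian question with inverted inequality that lies in $\coQMA$ and is thus decidable through the $\QMA$ oracle. The $\coNP^\QMA$ containment follows from the symmetric argument: the complement of $\exists\forall$-MLH is placed in $\NP^\QMA$ by having the prover send marginals $\{\sigma_{B_j}\}$ of a witnessing $\sigma^*$ and verifying consistency together with $\lambda_{\min}(M'_\sigma)\ge b$.

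The main technical obstacle is aligning the promise gaps: both the consistency problem and the local-Hamiltonian subproblem are promise problems with their own $1/\poly$ gaps, and a malicious prover on a NO instance might submit marginals that fall inside the consistency promise gap, where the oracle's answer is unconstrained. I expect to handle this via perturbation: any marginals passing a sufficiently tight consistency tolerance must be close in trace distance to the marginals of some actual global state $\rho_0$, and because $M_\rho$ depends linearly on the entries of $\{\rho_{A_j}\}$ via the Pauli expansion of $H$, the induced operator-norm error on $M_\rho$ is correspondingly small. Choosing the consistency tolerance polynomially smaller than $b-a$ then keeps $\lambda_{\max}(M_\rho)$ strictly above $a$ in any NO instance, so the second oracle query reliably rejects even on adversarial branches; the $\coNP^\QMA$ side is handled identically, yielding the claimed $\NP^\QMA \cap \coNP^\QMA$ containment.
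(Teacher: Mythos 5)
Your proof follows essentially the same route as the paper: reduce the first witness to polynomially many constant-size marginals, check consistency with one $\QMA$ oracle call (Liu's result), then reduce the $\forall\sigma$ side to a $\coQMA$ energy-threshold query on the induced local Hamiltonian $M_\rho$, and close the argument with a minimax theorem to get both the $\NP^\QMA$ and $\coNP^\QMA$ containments. Your explicit treatment of the promise-gap robustness (perturbing near-consistent marginals to a true global state $\rho_0$ and choosing the consistency tolerance polynomially smaller than $b-a$) is a careful addition that the paper's own exposition elides, and it is handled correctly.
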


Since no complete problems are known for $\NP \cap \coNP$, we find it implausible that $\exists\forall$-MLH is complete for $\NP^\QMA \cap \coNP^\QMA$. Moreover, hardness for either $\NP^\QMA$ or $\coNP^\QMA$ would collapse these two classes, which also seems unlikely.

In addition to the mixed/local case, we also study the mixed/sparse, pure/local, and pure/sparse variants. These are defined analogously: the ``sparse'' condition means the Hamiltonian is row-sparse rather than local, while the ``pure'' versus ``mixed'' distinction specifies whether the quantified states are pure or mixed.
For the mixed/sparse and pure/local variants, our findings parallel the mixed/local case: we can establish containments but are unable to prove hardness. In fact, existing circuit-to-Hamiltonian constructions appear inadequate for obtaining hardness in these settings, suggesting that either new techniques would be required or that these variants fail to be complete problems for any class studied in this work.

The pure/sparse variant stands out as the only case where we can establish a completeness result. To capture this formally, we define the $\PSHSigmai[i]$ and $\PSHPii[i]$ problems (generalizing the $\QMAtwo$-complete separable sparse Hamiltonian problem \cite{chailloux2012complexity}), in which the input is a sparse Hamiltonian and the problem quantifies over $i$ quantum proofs (see \cref{def:PSH}). In this setting, we obtain the following completeness theorem:

\begin{theorem}[Restatement of \cref{thm:qph-completeness}]
$\PSHSigmai$ is $\pureQSigmai$-complete and $\PSHPii$ is $\pureQPii$-complete. 
\end{theorem}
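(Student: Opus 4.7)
The plan is to prove both directions for $\PSHSigmai$; the $\pureQPii$ case follows by swapping every $\exists$ with $\forall$.

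For the containment $\PSHSigmai\subseteq\pureQSigmai$, consider an instance $(H,a,b)$ of $\PSHSigmai$ on $i$ registers. The $\pureQSigmai$ verifier asks provers $1,\ldots,i$ to send pure states $\ket{\psi_1},\ldots,\ket{\psi_i}$ on their respective registers and then uses sparse Hamiltonian simulation (e.g.\ Berry--Childs--Kothari) combined with phase estimation to approximate $\bra{\psi_1\otimes\dots\otimes\psi_i}H\ket{\psi_1\otimes\dots\otimes\psi_i}$ to within additive error $(b-a)/3$. The verifier accepts iff the estimate lies below $(a+b)/2$. Sparsity together with polynomially bounded norm ensures that the simulation runs in $\BQP$, so the alternating structure of $\PSHSigmai$ translates directly into the alternating structure of $\pureQSigmai$.

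For the hardness direction, start from an arbitrary $L\in\pureQSigmai$ with verifier $V$ acting on proof registers $R_1,\ldots,R_i$ and an ancilla register $A$; by standard gap amplification within $\pureQSigmai$ we may assume completeness and soundness errors are, say, inverse exponential. We construct a sparse Hamiltonian in the spirit of the Chailloux--Sattath reduction from $\QMA(2)$ to the separable sparse Hamiltonian problem. Specifically, let $H = H_{\mathrm{in}} + H_{\mathrm{clock}} + H_{\mathrm{prop}} + H_{\mathrm{out}} + \alpha\, H_{\mathrm{prod}}$, where the first four terms form Kitaev's clock Hamiltonian built from the circuit of $V$ (penalizing incorrect ancilla initialization, invalid clock states, incorrect propagation between time steps, and rejecting final configurations respectively), and $H_{\mathrm{prod}}$ is a sparse non-local projector whose null space consists of states that are product across the partition $R_1|R_2|\cdots|R_i$. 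As in Chailloux--Sattath, $H_{\mathrm{prod}}$ can be implemented as a sum of SWAP-like projectors across pairs of registers, each of which is row-sparse but not local. With $\alpha$ chosen polynomially large, the low-energy space is forced to consist of history states whose proof part is product across $R_1,\ldots,R_i$.

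The main obstacle will be showing that the quantified ground-state energy of $H$ under the alternation $\exists_{\ket{\psi_1}}\forall_{\ket{\psi_2}}\exists_{\ket{\psi_3}}\cdots$ tracks, up to $1/\poly$ error, the quantity $1-p_{\mathrm{acc}}(V)$ under the same alternation of pure-state provers. The completeness direction is routine: honest provers plus the history state on the ancilla/clock registers achieve energy $\le a$. For soundness, one must argue that an adversarial quantified prover cannot beat the product-state value by smuggling entanglement into $R_j$, and that the energy on the ancilla/clock registers can always be minimized independently given the proof registers, so the order of quantifiers commutes with the minimization over ancilla/clock. The first point is handled by the penalty $\alpha H_{\mathrm{prod}}$ together with a standard perturbation argument (states far from product across $R_1|\cdots|R_i$ pay energy at least $\Omega(\alpha)\cdot\delta^2$ in $\ell_2$ distance $\delta$); the second point follows because $H_{\mathrm{in}}+H_{\mathrm{clock}}+H_{\mathrm{prop}}+H_{\mathrm{out}}$ acts as a Hermitian operator whose minimum over ancilla/clock, for fixed proof registers, is exactly $1-p_{\mathrm{acc}}$, independent of the outer alternation. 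Combining these with a careful propagation of the Kitaev energy gap through each layer of the min--max yields the desired correspondence, and hence $\pureQSigmai$-hardness.
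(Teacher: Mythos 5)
Your containment direction matches the paper's: normalize $H$, then use Chailloux--Sattath's circuit (sparse Hamiltonian simulation plus phase estimation) to estimate $\bra{\psi}H\ket{\psi}$; this is fine.

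Your hardness argument has a structural gap, and it is precisely the novel part of the paper's reduction that is missing. The $\PSHSigmai$ instance you produce must specify $i$ disjoint registers $\calH_1,\dots,\calH_i$, one per quantifier. You place the Kitaev clock Hamiltonian on the circuit's native registers $R_1\cdots R_i\otimes A\otimes C$ and add a ``product penalty'' across $R_1|\cdots|R_i$, but you never say which of these registers is $\calH_j$. The only natural choice, $\calH_j=R_j$ for $j<i$ and $\calH_i=R_iAC$, fails outright: a Kitaev history state
\[
\frac{1}{\sqrt{m+1}}\sum_{t=0}^m U_t\cdots U_1\bigl(\ket{0}_A\ket{\psi}_{R_1\cdots R_i}\bigr)\otimes\ket{t}_C
\]
is generically entangled across $R_1\cdots R_i$ and $AC$, so it is not of the form $\rho_1\otimes\cdots\otimes\rho_i$ over that register decomposition. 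Thus the honest YES witness you propose cannot even be fed to the $\PSHSigmai$ instance, and the propagation term cannot be made small by any admissible (product-across-registers) state. The product penalty $\alpha H_{\mathrm{prod}}$ does not help; it addresses separability of the circuit's input registers, which is already taken for granted in the $\PSHSigmai$ promise, and it does nothing to make the clock/ancilla registers compatible with the $i$-way tensor structure.

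The paper's reduction avoids this by \emph{duplicating} registers: $\calH_1,\dots,\calH_{i-1}$ are fresh copies of the circuit's first $i-1$ proof slots, while $\calH_i$ holds the \emph{entire} Kitaev system $\calA\calB\calC$ (including its own internal copy $\calB_1\cdots\calB_i$ of all proof slots). The last prover then sends the whole history state on $\calH_i$, and a consistency penalty
\[
J_1\,\ketbra{0}{0}_{\calC_1}\otimes\bigl(I-\Pisym\bigr)_{\calH_1\cdots\calH_{i-1},\,\calB_1\cdots\calB_{i-1}}
\]
forces, at clock time $t=0$, the $\calB_j$ content of that history state to agree with the separately quantified $\calH_j$. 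This is what ties the alternation over $\calH_1,\dots,\calH_{i-1}$ to the circuit's execution; without some such SWAP/symmetry hook, the earlier quantifiers never actually influence $H^{(V_x)}$. (Note that this SWAP-style consistency term, not a product test, is also what Chailloux--Sattath use.) Finally, your appeal to ``standard gap amplification within $\pureQSigmai$'' to inverse-exponential error is unjustified: the paper's amplification takes $\pureQSigmai[r]$ into $\QSigmai[7r]$ and only reaches $1/\poly$ error. The paper's hardness proof instead works directly with the given $1/\poly$ gap via the projection and symmetry lemmas; you should do the same.
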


These completeness results show that $\pureQPH$ admits natural complete problems, in contrast to the other variants of the quantum polynomial hierarchy. This highlights $\pureQPH$ as perhaps the most natural quantifier-based definition of the quantum polynomial hierarchy.

\subsection{Main Ideas}

\paragraph{Proving $\QMAtwo \subseteq \pureQSigmai \subseteq \QSigmai$}
The proof that $\QMAtwo \subseteq \pureQSigmai[2]$ is similar to the simple two-player game described earlier: one player sends a pure state, and the other must reproduce it exactly. With pure states, honesty can be enforced by a SWAP test. If the second player deviates, the SWAP test detects the inconsistency with constant probability.

A structural result of Harrow and Montanaro~\cite{HM13} ensures that in $\QMA(2)$ the two unentangled proofs may be taken to be identical. Thus, in $\pureQSigmai[2]$, the existential prover sends one copy of this proof $\ket{\psi}$, while the universal prover is challenged to send the same state. The verifier first applies a SWAP test and, if the test fails, immediately accepts (since the universal prover failed its task). If the SWAP test passes, the verifier then runs the $\QMA(2)$ verification on the two states.
Interestingly, the $\QMA(2)$ verification procedure is run on the \emph{post-measurement states} after the SWAP test is applied; a careful analysis shows that this simulation succeeds. 

The second inclusion, $\pureQSigmai[2] \subseteq \QSigmai[3]$, also relies on the SWAP test, but now it is used to enforce purity rather than equality. 
In the $\QSigmai[3]$ setting, the verifier receives three states: $\rho_1$ and $\rho_3$ from the existential prover, and $\rho_2$ from the universal prover. The goal is to simulate the $\pureQSigmai[2]$ protocol, where one prover supplies a pure state $\ket{\psi_1}$ and the other supplies a pure state $\ket{\psi_2}$.

A simple observation is that the universal prover in $\pureQSigmai[2]$ has no incentive to send a mixed state, since they move last; hence, we can safely take $\rho_2$ to play the role of $\ket{\psi_2}$. 
To certify that $\rho_1$ is effectively pure, the $\QSigmai[3]$ verifier asks for two copies, $\rho_1$ and $\rho_3$, and with some probability runs a SWAP test between them (rejecting if the test fails, and accepting otherwise). 
Otherwise, the verifier simulates the original $\pureQSigmai[2]$ verification using $\rho_1$ and $\rho_2$.

\paragraph{Amplification and $\pureQPH = \QPH$}
Our amplification procedure is the most technically involved part of this work. 
As a first step, we transform the standard alternating-proof system---where the provers take turns sending states in the order $\exists \forall \exists \dots$---into a system where, in each turn, a prover sends \emph{four} unentangled proofs simultaneously. We achieve this by increasing the number of rounds by a factor of $7$. In seven rounds the verifier receives states $\rho_1, \dots, \rho_7$, where the odd-indexed states ($\rho_1, \rho_3, \rho_5, \rho_7$) come from the existential prover and the even-indexed states ($\rho_2, \rho_4, \rho_6$) come from the universal prover. The verifier discards the even-indexed states by default, leaving a block of four unentangled states from the existential prover, $\rho_1 \otimes \rho_3 \otimes \rho_5 \otimes \rho_7$. An analogous construction handles the universal prover’s turns. In this way, each turn of the game is simulated by a block of seven rounds, giving us a protocol where the verifier receives four unentangled proofs per prover per turn.

Our goal now is to simulate $\pureQSigmai[r]$ in this $r$-round system where each prover sends four unentangled mixed proofs per turn (which, as explained, can be simulated in $\QSigmai[7r]$). For each block of four proofs, the verifier applies our disentangler (\cref{lem:intro-pure-disentangler}) to reduce the input to a convex mixture over a \emph{small set} of product states. Conceptually, in the $i$th round, each product state in the mixture can be viewed as $\ket{T}\ket{\psi}$, where $\ket{T}$ encodes a transcript of the first $i-1$ rounds and $\ket{\psi}$ is the candidate response for the current round given that transcript. In this way, every round can be interpreted as producing a distribution over transcript–answer pairs, and the verifier’s job is to make sure the prover sends a pair consistent with the ongoing interaction.

The verifier maintains a ``canonical transcript'' that grows round by round. At each step, the disentangler outputs a mixture of possible continuations, each consisting of a transcript prefix and a candidate answer. The verifier checks consistency between the candidate transcript and the actual transcript accumulated so far using repeated SWAP tests; if the tests succeed, the answer is appended to the canonical transcript. If they fail, the current prover loses the round (accepting if it is the universal prover’s turn, rejecting otherwise). After $r$ rounds, the verifier runs the original $\pureQSigmai[r]$ verifier $V_x$ on fresh copies of the canonical transcript and performs standard majority amplification to decide the outcome.

\paragraph{Quantified Hamiltonian Complexity}
Our approach to proving \cref{prop:intro-npqma} proceeds in two steps. First, we show that $\exists\forall$-MLH lies in $\NP^\QMA$ and, similarly, that $\forall\exists$-MLH lies in $\coNP^\QMA$. This step uses the fact that checking consistency of local density matrices---given local reduced density matrices, decide whether they arise from some global quantum state---can be solved with a single $\QMA$ query~\cite{Liu06}.
At a high level, the $\NP$ prover supplies classical descriptions of the reduced density matrices for the first witness $\rho$. A single $\QMA$ oracle call is then used to check that these matrices are indeed consistent with some global state. Once this is certified, the verifier can ``compress'' the Hamiltonian $H$ into a smaller Hamiltonian $H'$ that only acts on the Hilbert space corresponding to the universal prover’s state. Deciding whether all states $\sigma$ satisfy $\tr(H'\sigma) \ge b$ can then be handled with a $\coQMA$ query, which is equivalent to a $\QMA$ query.
The proof concludes by observing that $\exists\forall$-MLH and $\forall\exists$-MLH are equivalent by a minimax theorem.

We now turn to our completeness proofs of $\PSHSigmai$ and $\PSHPii$. These problems can be viewed as the $\exists\forall$- and $\forall\exists$-pure sparse Hamiltonian (PSH) problems generalized to an arbitrary constant number of quantifiers. 
The containment is relatively straightforward. We use the techniques of Aharonov and Ta-Shma~\cite{aharonov2007adiabatic} to efficiently simulate the dynamics of sparse Hamiltonians in $\BQP$, which places $\PSHSigmai$ and $\PSHPii$ inside $\pureQSigmai$ and $\pureQPii$, respectively.

The hardness direction requires extending the circuit-to-Hamiltonian framework to the quantified setting. Our construction generalizes the Hamiltonians of Chailloux and Sattath~\cite{chailloux2012complexity}, who proved that $\exists\exists$-PSH is $\QMA(2)$-complete.\footnote{In the terminology of Chailloux and Sattath~\cite{chailloux2012complexity}, the \emph{separable sparse Hamiltonian problem} is $\QMA(2)$-complete. In our language, this corresponds exactly to the $\exists\exists$-PSH problem.} We extend their approach to handle an arbitrary constant number of alternating quantifiers, ensuring that the Hamiltonian faithfully encodes the transcript of the underlying quantified proof system.

\section{Preliminaries}

For matrices, $\norm{\cdot}_1$ denotes the Schatten $1$-norm (also known as the trace norm or nuclear norm).
For quantum states $\rho, \sigma$, define the trace distance between $\rho$ and $\sigma$ as $\td(\rho, \sigma) \coloneqq \frac{1}{2} \norm{\rho - \sigma}_1$. If $\rho = \ketbra{\psi}{\psi}$ and $\sigma = \ketbra{\phi}{\phi}$, then $\td(\ket{\psi}, \ket{\phi}) = \sqrt{1 - \abs{\braket{\psi|\phi}}^2}$.
Let $\calD(\calH)$ denote the set of density operators on Hilbert space $\calH$.

The following is a basic fact about trace distance. 

\begin{fact}\label{fact:td}
Let $\rho$ and $\sigma$ be quantum states with $\td(\rho, \sigma) \leq \eps$. Then for any POVM element $0 \le M \le 1$,
$\abs{\tr(M\rho) - \tr(M\sigma)} \le \epsilon$.
\end{fact}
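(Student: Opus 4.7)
The plan is to prove the bound via the Jordan (positive/negative) decomposition of the Hermitian operator $\rho - \sigma$, exploiting both the tracelessness of $\rho - \sigma$ and the constraint $0 \le M \le I$. The slight subtlety one must keep track of is that trace distance is defined as \emph{half} the trace norm, so a naive application of H\"older's inequality $\abs{\tr(MX)} \le \norm{M}_\infty \norm{X}_1$ would only yield $2\epsilon$; we need to use the structure of $M$ and $\rho-\sigma$ more carefully to shave off the factor of $2$.

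Concretely, I would first write $\rho - \sigma = P - Q$ where $P, Q \ge 0$ have mutually orthogonal supports (this decomposition is obtained by diagonalizing the Hermitian operator $\rho - \sigma$ and grouping the positive and negative eigenvalues). By definition of the trace norm, $\norm{\rho-\sigma}_1 = \tr(P) + \tr(Q)$, so $\td(\rho, \sigma) = \tfrac{1}{2}(\tr(P) + \tr(Q))$. Since $\tr(\rho) = \tr(\sigma) = 1$, we have $\tr(P) - \tr(Q) = \tr(\rho - \sigma) = 0$, and hence $\tr(P) = \tr(Q) = \td(\rho, \sigma) \le \epsilon$.

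Next I would compute
\[
\tr(M\rho) - \tr(M\sigma) = \tr(M(\rho - \sigma)) = \tr(MP) - \tr(MQ).
\]
Because $0 \le M \le I$ and $P, Q \ge 0$, both $\tr(MP)$ and $\tr(MQ)$ are nonnegative, and moreover $\tr(MP) \le \tr(P)$ and $\tr(MQ) \le \tr(Q)$ (for instance, because $I - M \ge 0$ implies $\tr((I-M)P) \ge 0$, i.e., $\tr(MP) \le \tr(P)$). Therefore
\[
-\epsilon \le -\tr(Q) \le -\tr(MQ) \le \tr(MP) - \tr(MQ) \le \tr(MP) \le \tr(P) \le \epsilon,
\]
from which $\abs{\tr(M\rho) - \tr(M\sigma)} \le \epsilon$ follows at once.

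There is no real obstacle here; the only thing to be careful about is the factor-of-two normalization between $\td$ and $\norm{\cdot}_1$, and the fact that we cannot simply invoke H\"older on $\tr(M(\rho-\sigma))$ because $\norm{M}_\infty \le 1$ times $\norm{\rho-\sigma}_1 = 2\epsilon$ gives the wrong constant. Using the tracelessness of $\rho - \sigma$ together with $M \le I$ to separately bound the positive and negative parts is what recovers the tight constant $\epsilon$.
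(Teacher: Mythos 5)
Your proof is correct, and it is the standard argument for this fact (indeed the paper states it without proof, as a black-box Fact). The Jordan decomposition $\rho - \sigma = P - Q$ with orthogonal supports, the use of tracelessness to get $\tr(P) = \tr(Q) = \td(\rho,\sigma)$, and the bounds $0 \le \tr(MP) \le \tr(P)$, $0 \le \tr(MQ) \le \tr(Q)$ from $0 \le M \le I$ are exactly what is needed to recover the tight constant $\epsilon$ rather than the $2\epsilon$ that naive \Holder would give.
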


We also make use of two standard tools in quantum computation and quantum information, the SWAP test and the Gentle Measurement lemma, which we record below.

\begin{lemma}[SWAP test {\cite{buhrman2001quantum}}]\label{lemma:swap-test}
The SWAP test between two quantum states $\rho$ and $\sigma$ fails with probability $\frac{1}{2} - \frac{\tr(\rho \sigma)}{2}$.
\end{lemma}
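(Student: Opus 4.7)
The plan is to analyze the SWAP test circuit directly and reduce the failure probability to a single trace identity involving $\SWAP$. Recall the circuit: prepare an ancilla qubit in $\ket{+}$, apply a controlled-$\SWAP$ to the two registers holding $\rho$ and $\sigma$ (with the ancilla as control), apply a Hadamard to the ancilla, and measure the ancilla in the computational basis. ``Failure'' is the outcome $\ket{1}$.

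The first step is to identify the POVM element on the two data registers associated with the $\ket{1}$ outcome. Propagating $\ket{+}\!\!\bra{+}$ through the controlled-$\SWAP$ and the Hadamard, then tracing out the ancilla, the effective POVM element is the antisymmetric projector $\Piasym = \tfrac{1}{2}(I - \SWAP)$. Hence the failure probability is $\tr\!\bigl(\Piasym(\rho\otimes\sigma)\bigr)$.

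The main step is then invoking the standard identity $\tr(\SWAP(A\otimes B)) = \tr(AB)$ for arbitrary operators $A,B$ on the same space, which is immediate from writing $\SWAP = \sum_{i,j}\ketbra{ij}{ji}$ in the computational basis and expanding. Substituting $A = \rho$, $B = \sigma$ yields $\tr(\Piasym(\rho\otimes\sigma)) = \tfrac{1}{2} - \tfrac{1}{2}\tr(\rho\sigma)$, which is exactly the claimed bound.

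There is no real obstacle here; the only thing to be careful about is the sign convention, namely that the $\ket{1}$ outcome really does correspond to $\Piasym$ rather than to the symmetric projector (which would flip a sign and give the wrong answer). As a sanity check, for pure states $\rho=\ketbra{\psi}{\psi}$ and $\sigma=\ketbra{\phi}{\phi}$ one recovers the familiar $\tfrac{1}{2} - \tfrac{1}{2}\abs{\braket{\psi|\phi}}^2$, consistent with the pure-state form of $\td$ given earlier in the preliminaries.
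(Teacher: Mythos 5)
The paper states this lemma as a standard fact with a citation to Buhrman et al.\ and does not supply its own proof, so there is nothing to compare against. Your derivation is the correct and canonical one: the $\ket{1}$ outcome of the SWAP test corresponds to the antisymmetric projector $\tfrac{1}{2}(I - \SWAP)$ on the data registers, and the identity $\tr(\SWAP(A\otimes B)) = \tr(AB)$ finishes it, with the sign convention handled correctly.
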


\begin{lemma}[Gentle measurement {\cite{Win99}}]\label{lemma:gentle-measurement}
Consider a quantum state $\rho$ and a measurement operator $0 \leq M \leq 1$ where $\tr(\rho M) \geq 1 - \eps$. 
Then, for the post-measurement state
\begin{equation}
    \rho' \coloneqq \frac{\sqrt{M}\, \rho\, \sqrt{M}}{\tr(M \rho)},  
\end{equation}
we have 
\begin{equation}
   \td(\rho, \rho') \le 2\sqrt{\epsilon}.
\end{equation}
\end{lemma}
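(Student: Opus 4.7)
The plan is to reduce to the pure-state case by purification and then verify the bound by a short operator computation. Given any purification $\ket\Psi\in\calH\otimes\calH'$ of $\rho$, the operator $M\otimes I$ satisfies $\bra\Psi(M\otimes I)\ket\Psi = \tr(M\rho)\ge 1-\epsilon$, and the normalized post-measurement vector $\ket{\Psi'}\propto (\sqrt M\otimes I)\ket\Psi$ purifies $\rho'$. Since the partial trace is contractive in the trace norm, $\td(\rho,\rho')\le \td(\ketbra\Psi\Psi,\ketbra{\Psi'}{\Psi'})$, so I may assume throughout that the input state is pure.

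Write $p = \bra\psi M\ket\psi \ge 1-\epsilon$ and $q = \bra\psi\sqrt M\ket\psi$ for a pure input $\ket\psi$. The key observation is the operator inequality $\sqrt M\ge M$ on $[0,1]$, which follows from $\sqrt x - x = \sqrt x(1-\sqrt x)\ge 0$ for $x\in[0,1]$ via the functional calculus; this yields $q\ge p\ge 1-\epsilon$. With $\ket{\psi'} = \sqrt M\ket\psi/\sqrt p$, a direct computation of the overlap gives
\[
\abs{\braket{\psi|\psi'}}^2 = \frac{q^2}{p} \ge \frac{p^2}{p} = p \ge 1-\epsilon,
\]
using $q\ge p$ and $p\le 1$. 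Plugging this into the pure-state identity $\td(\ketbra\phi\phi,\ketbra\chi\chi) = \sqrt{1-\abs{\braket{\phi|\chi}}^2}$ stated in the preliminaries yields $\td(\ketbra\psi\psi,\ketbra{\psi'}{\psi'})\le\sqrt\epsilon \le 2\sqrt\epsilon$.

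The only real subtleties are the operator inequality $\sqrt M\ge M$ and the normalization of the post-measurement vector; the purification step bypasses any issue with mixed inputs by reducing everything to an overlap computation between two unit vectors. As a side remark, the argument above actually produces the tighter bound $\td\le\sqrt\epsilon$; to recover the stated constant $2$ verbatim, one can instead run the elementary triangle-inequality proof, which bounds $\norm{\ket\psi-\sqrt M\ket\psi}^2 = 1-2q+p\le 2\epsilon$ and then absorbs the renormalization error $\abs{1-\sqrt p}\le 1-\sqrt{1-\epsilon}\le\epsilon$ separately.
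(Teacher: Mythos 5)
The paper does not supply its own proof of this lemma; it is quoted as a known result and attributed to Winter. Your argument is correct and in fact yields the sharper bound $\td(\rho,\rho')\le\sqrt{\epsilon}$, which of course implies the paper's $2\sqrt{\epsilon}$. The route you take---purify $\rho$, note that $\ket{\Psi'}\propto(\sqrt M\otimes I)\ket\Psi$ purifies $\rho'$, reduce by contractivity of the partial trace to an overlap computation between two unit vectors, and then use the operator inequality $\sqrt M\succeq M$ for $0\preceq M\preceq I$ to bound $\abs{\braket{\psi|\psi'}}^2 = q^2/p\ge p\ge 1-\epsilon$---is exactly the standard modern proof of the gentle measurement lemma, and every step is justified. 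The only cosmetic remark: in your aside, your own inequality $q\ge p$ already gives $\norm{\ket\psi-\sqrt M\ket\psi}^2 = 1-2q+p\le 1-p\le\epsilon$, tighter than the $2\epsilon$ you quote, and $1-\sqrt p\le\epsilon$ follows from the same pointwise fact $\sqrt x\ge x$; neither matters given the slack in the target constant.
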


Finally, we turn to the formal definitions of $\QPH$ and $\pureQPH$. We begin by specifying the individual levels of these hierarchies.

\begin{definition}[$\QSigmai$]\label{def:QSigmai}
  Let $A=(\Ayes,\Ano)$ be a promise problem. We say that $A$ is in $\QSigmai(c, s)$ for poly-time computable functions $c, s: \NN \to [0, 1]$ if there exists a polynomial $p(n)$ and a poly-time uniform family of quantum circuits $\{V_x\}_{x \in \{0,1\}^*}$ such that for every $n$-bit input $x$, $V_x$ takes in quantum proofs $\rho_1,\dots,\rho_i\in\calD(\CC^{2^{p(n)}})$ and outputs a single qubit, {such that:}
  \begin{itemize}
    \item Completeness: $x\in \Ayes$ $\Rightarrow$ $\exists \rho_1 \forall \rho_2 \ldots Q_i \rho_i$ s.t. $\Pr[V_x \text{ accepts } \rho_1\otimes\dotsm\otimes\rho_i] \geq c(n)$.
    \item Soundness: $x\in \Ano$ $\Rightarrow$ $\forall \rho_1 \exists \rho_2 \ldots \overline{Q}_i \rho_i$ s.t. $\Pr[V_x \text{ accepts }  \rho_1\otimes\dotsm\otimes\rho_i] \leq s(n)$.
  \end{itemize}
  Here, $Q_i$ is $\exists$ when $i$ is odd and $\forall$ otherwise, and $\overline{Q}_i$ is the complementary quantifier to $Q_i$. Finally, define
  \begin{equation}
  \QSigmai := \bigcup_{{ c(n) - s(n) \in \Omega(1/\poly(n))}} \QSigmai(c, s).
  \end{equation}
  Define $\pureQSigmai$ analogously, restricting $\rho_1,\dots,\rho_i$ to pure states.
\end{definition}

\begin{remark}\label{remark:padding}
  All messages being $p(n)$ qubits is without loss of generality, even for $\pureQSigmai$, as the verifier can project messages onto a smaller subspace and let the sender lose if the projection fails.
\end{remark}

$\QPH$ and $\pureQPH$ are the union over all levels of their hierarchies.

\begin{definition}[$\QPH$ and $\pureQPH$ {\cite{gharibian2022quantum,agarwal_et_al:LIPIcs.MFCS.2024.7}}]
The quantum polynomial hierarchy is defined as 
\[
\QPH \coloneqq \bigcup_{i=0}^\infty \QSigmai, 
\]
and the pure quantum polynomial hierarchy is defined as 
\[
\pureQPH \coloneqq \bigcup_{i=0}^\infty \pureQSigmai. 
\]
\end{definition}

One has to be careful when discussing oracles to promise problems, e.g., $\NP^\QMA$.
We say a deterministic Turing machine $M$ with access to a promise oracle $O = (\Oyes,\Ono)$ accepts/rejects \emph{robustly} if $M$ accepts/rejects regardless of how invalid queries are answered (see also \cite[Definition 3]{Goldreich2006}).

\begin{definition}[$\NP$ with promise oracle {\cite[Footnote 3]{agarwal_et_al:LIPIcs.MFCS.2024.7}}]\label{def:promise-oracle}
    Let $O$ be a promise problem.
    We say $A\in \NP^O$ if there exists a polynomial-time deterministic Turing machine $M$, such that
    \begin{itemize}
        \item $x\in\Ayes\ \Rightarrow\ \exists y\colon M^O(x,y)$ accepts robustly.
        \item $x\in\Ano\ \Rightarrow\ \forall y\colon M^O(x,y)$ rejects robustly.
    \end{itemize}
\end{definition}

This definition may be considered the weakest ``reasonable'' definition for $\NP$ with promise oracle, without outright forbidding invalid queries.

\section{Sandwiching the Second Level of \texorpdfstring{$\pureQPH$}{pureQPH}}

We prove the inclusions $\QMAtwo \subseteq \pureQSigmai[2] \subseteq \QSigmai[3]$, i.e., that the second level of $\pureQPH$ lies between $\QMAtwo$ and the third level of $\QPH$. 
It is known from prior work that $\QSigmai[3] \subseteq \NEXP$ \cite{gharibian2022quantum}. 
We begin by establishing the first inclusion, showing that any $\QMA(2)$ protocol can be simulated within the second level of $\pureQPH$.

\begin{theorem}\label{thm:qma2-in-psigma2}
  $\QMA(2)\subseteq\pureQSigmai[2]$.
\end{theorem}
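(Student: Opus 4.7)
The plan is to apply the Harrow--Montanaro product test~\cite{HM13} to put the underlying $\QMA(2)$ protocol into a canonical form and then have the $\pureQSigmai[2]$ verifier ask the two provers to send the \emph{same} pure witness, enforcing this via a SWAP test. Specifically, fix any $\QMA(2)$ verifier $V$ for $L\in\QMA(2)$ and use~\cite{HM13} to (i) assume the YES-witnesses are identical and of the form $\ket{\psi}\otimes\ket{\psi}$, (ii) assume $V$'s acceptance POVM $\Pi$ commutes with the register SWAP (by replacing $\Pi$ with its symmetrization $\tfrac{1}{2}(\Pi+\mathrm{SWAP}^\dagger\Pi\,\mathrm{SWAP})$, which preserves acceptance of $\ket{\psi}^{\otimes 2}$ and soundness on all product inputs), and (iii) amplify completeness to $\ge 1-\delta$ and soundness to $\le\delta$ for an inverse polynomial $\delta=1/q(n)$ of our choosing. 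The $\pureQSigmai[2]$ verifier $V'$ then receives pure messages $\ket{\psi_1}$ (existential) and $\ket{\psi_2}$ (universal), runs a SWAP test on them, \emph{accepts outright} if the test fails (blaming the universal prover for not matching), and otherwise runs $V$ on the normalized symmetric post-measurement state $\ket{\mathrm{sym}}\propto\ket{\psi_1}\ket{\psi_2}+\ket{\psi_2}\ket{\psi_1}$.

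Soundness is immediate from the ``copy'' strategy for the cooperative universal prover: on any NO-case existential message $\ket{\psi_1}$, playing $\ket{\psi_2}=\ket{\psi_1}$ makes the SWAP test pass with probability $1$, leaves the post-measurement state equal to the product state $\ket{\psi_1}^{\otimes 2}$, and hence yields overall acceptance probability $\tr(\Pi\ketbra{\psi_1}{\psi_1}^{\otimes 2})\le\delta$ by $\QMA(2)$ soundness on product inputs. Thus $\min_{\ket{\psi_2}}\Pr[V'\text{ accepts}]\le\delta$ for every $\ket{\psi_1}$, giving overall soundness $\le\delta$.

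For completeness, the honest existential prover sends the witness $\ket{\psi}$ with $\tr(\Pi\ketbra{\psi}{\psi}^{\otimes 2})\ge 1-\delta$, and we analyze an arbitrary universal response $\ket{\phi}$. Setting $u:=|\langle\psi|\phi\rangle|^2$, the SWAP test passes with probability $(1+u)/2$, and the overall acceptance probability decomposes as
\[
P_{\mathrm{acc}}(\ket{\phi})=\tfrac{1-u}{2}+\tfrac{1+u}{2}\cdot\tr\bigl(\Pi\,\ketbra{\mathrm{sym}}{\mathrm{sym}}\bigr).
\]
A direct calculation yields $|\langle\mathrm{sym}|\psi^{\otimes 2}\rangle|^2=\tfrac{2u}{1+u}$, so $\td(\ket{\mathrm{sym}},\ket{\psi}^{\otimes 2})\le\sqrt{(1-u)/(1+u)}$; applying \cref{fact:td} to $\Pi$ then gives $\tr(\Pi\ketbra{\mathrm{sym}}{\mathrm{sym}})\ge(1-\delta)-\sqrt{(1-u)/(1+u)}$. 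Substituting and minimizing the resulting one-variable expression over $u\in[0,1]$ (the optimum is $u=\delta/\sqrt{1+\delta^2}$) yields $P_{\mathrm{acc}}\ge\tfrac12\bigl(2-\delta-\sqrt{1+\delta^2}\bigr)=\tfrac12-\tfrac{\delta}{2}-O(\delta^2)$, so choosing $\delta$ a sufficiently small inverse polynomial makes $V'$'s completeness--soundness gap at least $\tfrac12-O(\delta)$, comfortably in $\Omega(1/\poly(n))$. The main obstacle is precisely this fidelity-based approximation step: $V$ is no longer fed a true product state but instead the (generally entangled) post-measurement $\ket{\mathrm{sym}}$, and the overlap calculation is what ensures that the completeness loss from this substitution is small in exactly the regime where the ``SWAP fails'' branch is not already carrying acceptance on its own.
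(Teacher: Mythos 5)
Your proposal is correct and uses the same verifier construction (SWAP test, accept on failure, else run the underlying $\QMA(2)$ verifier) and the same soundness argument as the paper. The only real divergence is in the completeness analysis: the paper conditions on the SWAP test succeeding and invokes the Gentle Measurement Lemma plus a triangle inequality to bound $\td(\ket{\psi}^{\otimes 2}, \rho')\le (1+\sqrt2)\delta$, whereas you compute the post-SWAP symmetric state $\ket{\mathrm{sym}}$ explicitly, evaluate $\abs{\braket{\mathrm{sym}|\psi^{\otimes 2}}}^2 = \frac{2u}{1+u}$ directly, and convert fidelity to trace distance. Your direct calculation yields a tighter error bound ($\sqrt{(1-u)/(1+u)}$ vs.\ $(1+\sqrt2)\sqrt{1-u}$) and hence a completeness constant approaching $1/2$ rather than the paper's $0.085$; you also pay for an extra (and actually unused) symmetrization step on the acceptance POVM in the setup.

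One small slip worth flagging: the stated minimizer $u = \delta/\sqrt{1+\delta^2}$ is the critical point of the \emph{unclipped} expression
\[
g(v)=\frac{v^2+1-\delta-v}{1+v^2},\qquad v=\sqrt{\tfrac{1-u}{1+u}},
\]
but that critical point lies in the regime $v>1-\delta$ where the true acceptance probability is $\frac{v^2}{1+v^2}$ (the $\max$ has already clipped to $0$). The genuine minimizer is at $v=1-\delta$, giving $P_{\mathrm{acc}}^{\min}=\frac{(1-\delta)^2}{1+(1-\delta)^2}$. Your bound $\tfrac12(2-\delta-\sqrt{1+\delta^2})$ is still a \emph{valid} lower bound, since $g(v)\le P_{\mathrm{acc}}(v)$ pointwise (the clipping only increases the value), so the conclusion stands; only the claim about where the optimum is attained is off.
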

\begin{proof}
  Let $A\in \QMA(2)$.
  By \cite{HM13}, there exists a verifier $V$, such that
\begin{subequations}
\begin{align}
  \forall x \in \Ayes,\, \exists \ket{\psi} \colon\;
    \tr\!\left( H_x \big( \ketbra{\psi}{\psi}_A \otimes \ketbra{\psi}{\psi}_B \big) \right)
    &\ge 1 - \epsilon, \label{eq:qma2-yes} \\
  \forall x \in \Ano,\, \forall \rho \,\forall \sigma \colon\;
    \tr\!\left( H_x ( \rho_A \otimes \sigma_B ) \right)
    &\le \epsilon, \label{eq:qma2-no}
\end{align}
\end{subequations}
  where $H_x$ denotes the POVM element corresponding to acceptance on input $x$, and $\epsilon\in 2^{-O(n)}$.
  We now construct a $\pureQSigmai[2]$ verifier $V'$ as follows.
  On input $x$, $V'$ receives two states $\ket{\psi}_A$ and $\ket{\phi}_B$ and acts as follows:
  \begin{enumerate}[label=(\arabic*)]
    \item Perform a SWAP test on registers $A$ and $B$. If the SWAP test fails, then accept.
    \item Otherwise, run $V$ on registers $A$ and $B$, and accept only if $V$ accepts.
  \end{enumerate}
  We let $H_x'$ denote the POVM element corresponding to acceptance on input $x$ for the verifier $V'$. 

  First, we prove the soundness of $V'$, which is straightforward. 
  Suppose $x\in\Ano$. By \cref{eq:qma2-no}, we have 
  \begin{equation}
    \forall \ket{\psi}\colon \tr\left(H_x(\ketbra{\psi}{\psi}\otimes \ketbra{\psi}{\psi})\right)\le\epsilon.
  \end{equation}
  In this case, the no-prover will always send $\ket{\phi}_B = \ket{\psi}_A$, which ensures the SWAP test accepts with probability $1$ (\cref{lemma:swap-test}) and leaves the state undisturbed.
  The verifier then proceeds to run $V$ on $\ket{\psi}_A \otimes \ket{\psi}_A$, which will accept with probability at most $\eps$. Therefore, the overall acceptance probability of $V'$ is at most $\eps$.
  
  Now suppose $x\in\Ayes$. We will show
  \begin{equation}
  \exists\ket{\psi}\,\forall\ket{\phi}\colon\tr\left(H'_x(\ketbra{\psi}{\psi}\otimes\ketbra{\phi}{\phi})\right)\ge c
  \end{equation}
  for some constant $c > 0$, which suffices to complete the proof.
  Let $\ket{\psi}$ be the witness guaranteed by \cref{eq:qma2-yes}.
  For an arbitrary state $\ket{\phi}$, define $\delta \coloneqq \td(\ket{\psi}, \ket{\phi})$, so $\abs{\braket{\psi|\phi}}^2 = 1-\delta^2$.
  By \cref{lemma:swap-test}, $V'$ accepts in step (1) (i.e., the SWAP test fails) with probability
  \begin{equation}
    \frac12-\frac12\abs{\braket{\psi|\phi}}^2 = \frac{\delta^2}2.
  \end{equation}
 If the SWAP test succeeds, let $\rho_{AB} = \ketbra{\psi}{\psi}\otimes\ketbra{\phi}{\phi}$, and let $\rho'_{AB}$ be the post-measurement state conditioned on the SWAP test succeeding. 
By the Gentle Measurement Lemma (\cref{lemma:gentle-measurement}), 
\begin{equation}
    \td(\rho,\rho') \leq \sqrt{2} \delta.
\end{equation}
Let $\rho^* = \ketbra{\psi}{\psi}\otimes \ketbra{\psi}{\psi}$. By the triangle inequality, 
  \begin{equation}
    \td(\rho^*, \rho') \le \td(\rho^*, \rho) + \td(\rho, \rho') \le \delta + \sqrt{2}\delta = (1 + \sqrt{2}) \delta. 
  \end{equation}
  Thus, by \cref{fact:td} and \cref{eq:qma2-yes}, $V$ accepts $\rho'$ with probability at least
  \begin{equation}
    \tr(H_x\rho') \ge 1-\epsilon - \td(\rho^*,\rho') \ge 1-\epsilon- (1+\sqrt{2})\delta. 
  \end{equation}
  Therefore, the overall acceptance probability of $V'$ on $\rho_{AB}$ is
  \begin{equation}
    \tr(H_x'\rho) = \frac{\delta^2}2 + \left(1-\frac{\delta^2}2\right)\max\left\{0,1-\epsilon-(1+\sqrt{2})\delta\right\}. 
  \end{equation}
Consider the case where $\delta \geq \frac{1-\eps}{1 + \sqrt{2}}$. The $\max$ term vanishes, so the acceptance probability is simply $\frac{\delta^2}{2}$, minimized when $\delta$ is as small as possible, i.e., $\delta = \frac{1-\eps}{1 + \sqrt{2}}$.
Now consider the case where $\delta \leq \frac{1-\eps}{1 + \sqrt{2}}$. 
The expression becomes $1 - \eps - (1 + \sqrt 2) \delta + \frac{\eps}{2} \delta^2 + \frac{1 + \sqrt{2}}{2} \delta^3$, which is decreasing on the interval $[0, \frac{1-\eps}{1 + \sqrt{2}}]$, so the minimum also occurs at $\delta = \frac{1-\eps}{1 + \sqrt{2}}$. 
In both cases, the minimizing value is $\delta^\star = \frac{1-\eps}{1 + \sqrt{2}}$, giving 
\begin{equation}
    \tr(H_x'\rho) = \frac{(1-\eps)^2}{2(1+\sqrt{2})^2} = \left( \frac{3}{2} - \sqrt{2}\right)\left( 1 - \eps\right)^2 > 0.085, 
\end{equation}
for sufficiently large $n$. 
\end{proof}

We now show that the second level of $\pureQPH$ is contained in the third level of $\QPH$.

\begin{theorem}\label{thm:pqsigma2-in-qsigma3}
  $\pureQSigmai[2] \subseteq\QSigmai[3]$.
\end{theorem}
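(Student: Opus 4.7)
The plan is to simulate the $\pureQSigmai[2]$ verifier $V_x$ for a problem $A \in \pureQSigmai[2](c, s)$ inside $\QSigmai[3]$, leveraging the $\exists\forall\exists$ structure to give the existential prover two messages $\rho_1$ and $\rho_3$ sandwiched around the universal $\rho_2$. The SWAP test cross-checks these two registers: if $\tr(\rho_1\rho_3)$ is close to $1$, then $\rho_1$ must be close to its top eigenstate. Concretely, on input $x$, the $\QSigmai[3]$ verifier $V'$ flips a biased coin and does one of the following:
\begin{enumerate}[label=(\arabic*)]
    \item With probability $p = 2/3$, run a SWAP test on $\rho_1$ and $\rho_3$, accepting iff the test passes.
    \item With probability $1 - p = 1/3$, run $V_x$ on $\rho_1 \otimes \rho_2$ and output its verdict.
\end{enumerate}
By linearity of $V_x$, the universal prover's optimal $\rho_2$ may be taken pure, so this register faithfully plays the role of the second pure message in the $\pureQSigmai[2]$ protocol.

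Completeness is immediate: for $x \in \Ayes$, the existential prover sets $\rho_1 = \rho_3 = \ketbra{\psi_1^*}{\psi_1^*}$, where $\ket{\psi_1^*}$ is the pure witness given by $\pureQSigmai[2]$ completeness. By \cref{lemma:swap-test} the SWAP test passes with probability $1$, and by linearity $V_x$ accepts $\ketbra{\psi_1^*}{\psi_1^*}\otimes\rho_2$ with probability at least $c$ for every $\rho_2$, giving a total acceptance probability of at least $p + (1-p)c = (2+c)/3$.

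For soundness ($x \in \Ano$), fix any $\rho_1$ and decompose $\rho_1 = \lambda \ketbra{\psi^*}{\psi^*} + (1-\lambda)\rho'$, where $\lambda = \lmax(\rho_1)$ and $\rho'$ is supported on the orthogonal complement of $\ket{\psi^*}$. Let $\ket{\psi_2^*}$ be the pure rejecting partner for $\ket{\psi^*}$ guaranteed by $\pureQSigmai[2]$ soundness, and have the universal prover set $\rho_2 = \ketbra{\psi_2^*}{\psi_2^*}$. By linearity, $V_x$ accepts $\rho_1 \otimes \rho_2$ with probability at most $\lambda s + (1-\lambda)$. By \cref{lemma:swap-test}, the SWAP test passes with probability $(1+\tr(\rho_1\rho_3))/2$, which is maximized over $\rho_3$ at $\rho_3 = \ketbra{\psi^*}{\psi^*}$ to give $(1+\lambda)/2$. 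Combining the two branches gives
\[
\tfrac{p(1+\lambda)}{2} + (1-p)\bigl(1 - \lambda(1-s)\bigr) = \tfrac{2+\lambda s}{3} \le \tfrac{2+s}{3}.
\]
The gap between completeness and soundness is therefore at least $(c-s)/3 \ge 1/(3\poly(n))$, placing $A \in \QSigmai[3]$.

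I do not foresee a serious obstacle: the argument reuses the SWAP-test template from \cref{thm:qma2-in-psigma2}, only now SWAP is used to certify \emph{purity} of $\rho_1$ rather than \emph{equality} between two players' messages. The one small subtlety is the choice of bias: a balanced $p = 1/2$ fails to produce a positive gap in the regime where both $c$ and $s$ lie below $1/2$, whereas $p = 2/3$ cancels the $\lambda$-dependence in the soundness bound up to a factor of $s$, yielding a positive gap for all admissible $(c, s)$ with $c - s \ge 1/\poly(n)$.
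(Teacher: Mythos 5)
Your proof is correct and takes essentially the same approach as the paper: probabilistically mix a SWAP test between the existential prover's two copies $\rho_1$ and $\rho_3$ (to certify $\rho_1$ is nearly pure) with a run of $V_x$ on $\rho_1\otimes\rho_2$, then bound the soundness by decomposing $\rho_1$ around its top eigenvector. The only real difference is the bias: you fix the SWAP-test probability at $2/3$, whereas the paper sets the $V_x$-branch probability to $1/(3-2s)$ so that the two linear-in-$\lambda$ soundness bounds balance exactly, yielding a marginally larger gap $\frac{c-s}{3-2s}$ versus your $\frac{c-s}{3}$; both remain $\Omega(1/\poly(n))$ and suffice.
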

\begin{proof}
  Let $A\in\pureQSigmai[2]$.
  Then there exists a verifier $V$ with completeness $c$, soundness $s$, and $c-s\ge n^{-O(1)}$, together with a POVM element $H_x$ for each input $x$, such that
  \begin{subequations}
    \begin{align}
      \forall x\in \Ayes,\,  \exists \ket\psi\;\forall\ket\phi\colon\tr(H_x(\ketbra{\psi}{\psi}_A\otimes \ketbra{\phi}{\phi}_B))&\ge c,\label{eq:pQSi2-yes}\\
      \forall x\in \Ano,\, \forall\ket{\psi}\;\exists\ket{\phi}\colon\tr(H_x(\ketbra{\psi}{\psi}_A\otimes \ketbra{\phi}{\phi}_B)) &\le s.\label{eq:pQSi2-no}
    \end{align}
  \end{subequations}
  Define a $\QSigmai[3]$ verifier $V'$ with POVM $H_x'$ on input $x$ as follows. Let $\rho_1$ be the proof sent by the yes-prover in the first round, $\rho_2$ be the proof sent by the no-prover in the second round, and $\rho_3$ be the proof sent by the yes-prover in the third round. $V'$ then proceeds as follows:
  \begin{enumerate}[label=(\arabic*)]
    \item With probability $p$, run $V(\rho_1,\rho_2)$, accepting or rejecting according to $V$'s output.
    \item With probability $1-p$, run a SWAP test between $\rho_1$ and $\rho_3$, and accept only if the SWAP test passes.
  \end{enumerate}
  We must exhibit completeness/soundness paramters $c',s'$ with $c'-s'\ge n^{-O(1)}$, such that 
  \begin{subequations}
    \begin{align}
      \forall x\in \Ayes,\, \exists \rho_1\;\forall\rho_2\;\exists\rho_3\colon\tr(H'_x(\rho_1\otimes\rho_2\otimes\rho_3))&\ge c'\label{eq:QSi3-yes}\\
      \forall x\in \Ano,\ \forall \rho_1\;\exists\rho_2\;\forall\rho_3\colon\tr(H'_x(\rho_1\otimes\rho_2\otimes\rho_3))&\le s'.\label{eq:QSi3-no}
    \end{align}
  \end{subequations}
  
  Suppose $x \in \Ayes$. We have
  \begin{equation}
    c'=(1-p) + p\cdot c = 1 - p(1-c),
  \end{equation}
  because the yes-prover can send $\rho_1=\rho_3= \ketbra{\psi}{\psi}$ from \cref{eq:pQSi2-yes} and the no-prover gains no advantage from sending a mixed state by convexity.

  Now suppose $x\in\Ano$. Let $1-\frac{\delta}{2}$ be the probability that the SWAP test between $\rho_1$ and $\rho_3$ passes. 
  Then, by \cref{lemma:swap-test}, $\tr(\rho_1 \rho_3) = 1 - \delta$, which implies $\lmax(\rho_1)\ge 1-\delta$ by H{\"o}lder's inequality. Let $\ket{\psi}$ be the corresponding eigenvector. 
  By \cref{eq:pQSi2-no}, there exists a ``refutation'' $\ket{\phi}$ (depending on $\ket{\psi}$) such that 
  \begin{equation}
      \tr(H_x(\ketbra{\psi}{\psi}_A\otimes \ketbra{\phi}{\phi}_B)) \le s.
  \end{equation}
  Let the no-prover  send $\rho_2 = \ketbra{\phi}{\phi}$. 
  Using linearity and that $0 \leq H_x \leq 1$, we get 
  \begin{align}
     \tr\left(H_x (\rho_1 \otimes \rho_2)\right)
     &= \lmax \tr\left(H_x (\ketbra{\psi}{\psi} \otimes \rho_2)\right) + (1-\lmax)\tr\left(H_x (\sigma \otimes \rho_2)\right) \\
     &\leq \lmax \cdot s + (1-\lmax) \cdot 1 \\
     &= s + (1-\lmax)(1-s) \\
     &\leq (1 - \delta) s + \delta,  
  \end{align}
  where, in the first line we use the fact that $\rho_1 = \lmax \ketbra{\psi}{\psi} + (1-\lmax) \sigma$ for some state $\sigma$, and, in the last line, we use the fact that $\lmax \geq 1 - \delta$.
  The overall acceptance probability $s'$ of $V'$ is thus
  \begin{align}
    s' 
    &= \max_{\delta\in[0,1]}\left((1-p)(1-\frac{\delta}{2}) + p((1-\delta)s + \delta)\right) \\ 
    &= 1 - p + ps + \delta\left(p (1-s) - \frac{1-p}{2} \right) \\ 
    &= \max\left(1-p+ps, \frac{1+p}{2}\right).
  \end{align}
We choose $p = \frac{1}{3-2s}$, which ensures that $p\in(0, 1]$ because $s\in[0,1]$. 
Then
\begin{equation}
   1-p+ps = 1 - \frac{1}{3-2s} + \frac{s}{3-2s} = \frac{3-2s-1+s}{3-2s} = \frac{2-s}{3-2s}, 
\end{equation}
   and
\begin{equation}
   \frac{1+p}{2} = \frac{1 + \frac{1}{3-2s}}{2} = \frac{\frac{3-2s+1}{3-2s}}{2} = \frac{4-2s}{2(3-2s)} = \frac{2-s}{3-2s}, 
\end{equation}
so the two terms in the $\max$ function become equal.
Therefore, $s' = \frac{2-s}{3-2s}$.
Our completeness parameter $c'$ becomes 
\begin{equation}
   c' = 1 - p(1-c) = 1 - \frac{1-c}{3-2s} = \frac{3-2s - (1-c)}{3-2s} = \frac{2-2s+c}{3-2s}.
\end{equation}
Therefore, the completeness/soundness gap is
   \begin{equation}
   c' - s' = \frac{2-2s+c}{3-2s} - \frac{2-s}{3-2s} = \frac{c-s}{3-2s}.
   \end{equation}
   Because $c-s \ge n^{-O(1)}$ by assumption and $3-2s \leq 3$, we have $c'-s' \ge \frac{c-s}{3} \ge n^{-O(1)}$, which completes the proof.
\end{proof}

\section{Amplification of \texorpdfstring{$\pureQPH$}{pureQPH} via Disentanglers}

In this section, we give an error reduction result for $\pureQPH$. 
We show that any protocol in $\pureQSigmai[r]$ can be converted into one in $\QSigmai[7r]$ whose completeness is arbitrarily close to 1 and whose soundness is arbitrarily close to 0 (up to $1/\poly(n)$). 
In other words, we can amplify the gap between YES and NO cases at the cost of a constant-factor increase in the number of alternations.

\begin{restatable}{theorem}{amplification}\label{thm:amplification}
  $\pureQSigmai[r]\subseteq\QSigmai[7r](c',s')$ with $c'\ge 1-1/q(n)$ and $s'\le 1/q(n)$, where $q$ is an arbitrary polynomial.
\end{restatable}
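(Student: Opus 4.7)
The plan is to combine a round-expansion scheme with the pure disentangler of \cref{lem:pure-disentangler}, using SWAP tests to enforce consistency of a ``canonical transcript'' that is built up one round at a time. The construction has two layers: (i) embed a $\pureQSigmai[r]$ protocol into a $\QSigmai[7r]$ protocol in which each original turn is replaced by a block of seven alternating rounds, and (ii) within each block, disentangle the active prover's messages and consistency-check them against the canonical transcript.

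For the outer embedding, the $i$-th turn of the original protocol is simulated by the block of seven consecutive rounds $7(i-1)+1,\dotsc,7i$ in $\QSigmai[7r]$. By strict alternation, the prover that is quantified over in turn $i$ of the original protocol receives exactly four of the seven rounds in this block, while the opposing prover receives three. The verifier simply discards the three opposing messages; since discarded messages can only hurt the opposing prover and can be imitated by the active prover (who could trivially resend its own state four times when honest), this embedding preserves both completeness and soundness while leaving the active prover with four unentangled (possibly mixed) messages per turn.

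Within each block I would apply the disentangler $\Gamma$ from \cref{lem:pure-disentangler} to the four active messages, with $k$ chosen large enough to support the consistency checks and final verification below, and $\delta\le 1/\poly(n)$ sufficiently small. On honest inputs $\ketbra{\psi}{\psi}^{\otimes 4\ell}$, $\Gamma$ outputs $\ketbra{\psi}{\psi}^{\otimes k}$ exactly, so honest provers lose nothing. On adversarial inputs, the output is $\delta$-close in trace distance to a convex mixture $\sum_{j=1}^m p_j\ketbra{\zeta_j}{\zeta_j}^{\otimes k}$ with $m\in O(\delta^{-2})$ dimension-independent. The verifier partitions the $k$ copies into groups: some are used immediately for SWAP-test consistency checks against the canonical transcript built in prior rounds, and the remainder are retained as fresh canonical-transcript copies for future rounds and for the final verification. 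Concretely, the relevant product state is read as $\ket{T_i}\otimes\ket{\psi_i}$, where $\ket{T_i}$ is the active prover's claimed transcript of rounds $1,\dotsc,i-1$ and $\ket{\psi_i}$ is the new message; SWAP tests between the $\ket{T_i}$ copies and the stored canonical-transcript copies detect inconsistency. If any test fails, the round is decided against the current prover---accept if the universal prover is speaking, reject if the existential prover is; if every test passes, the $\ket{\psi_i}$ copies are appended to the canonical transcript. After all $r$ turns, the verifier runs the underlying $\pureQSigmai[r]$ verifier $V_x$ on fresh copies of the canonical transcript in parallel and takes a majority vote, boosting the original inverse-polynomial gap to the claimed parameters by a Chernoff bound.

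The main obstacle is controlling how error propagates across the $r$ rounds. Each round contributes $O(\delta)$ trace-distance error from the disentangler, plus additional gentle-measurement error from SWAP tests performed on states that are only approximately the right pure product states. The key leverage is that the support size $m\in O(\delta^{-2})$ in \cref{lem:pure-disentangler} is independent of both dimension and round index, which lets us argue by induction on $i$ that, conditioned on all SWAP tests through round $i$ succeeding, the canonical transcript is trace-close to a convex combination of honest pure-state transcripts arising from some legitimate strategy in the original $\pureQSigmai[r]$ protocol. A union bound over rounds and over the finite mixture support shows that the total accumulated error is $O(r\delta)$, and choosing $\delta$ to be an inverse polynomial small compared to both the original $1/\poly$ gap and $1/q(n)$ preserves enough of a gap that parallel majority amplification on the final $V_x$ invocation yields completeness $\ge 1-1/q(n)$ and soundness $\le 1/q(n)$.
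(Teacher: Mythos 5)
Your high-level plan matches the paper's: expand each turn into a block of seven, discard the opposing prover's three messages to isolate four unentangled states, disentangle them, and use SWAP tests against a growing canonical transcript, finishing with parallel repetition of $V_x$ and a majority vote. But there is a real gap in how you describe the per-round message, and it is exactly where the proof has to do work.

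You write that each disentangled product state is read as a single pair $\ket{T_i}\otimes\ket{\psi_i}$, where $\ket{T_i}$ is ``the active prover's claimed transcript of rounds $1,\dots,i-1$.'' In $\QSigmai[7r]$ all messages are sent up front; the prover in round $i$ sees $\rho_1,\dots,\rho_{i-1}$ but \emph{not} the verifier's random choices. The canonical transcript $\ket{C_{i-1}}$, however, is determined by which of the $m_j$ branches of each prior round's disentangled mixture the verifier happened to land in. So the honest prover cannot know which $\ket{T_i}$ to send; with a single $(T_i,\psi_i)$ pair, an honest Alice fails the SWAP test with probability roughly $1-1/\prod_{j<i} m_j$, which destroys completeness. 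The fix in the paper is that each $\ket{\zeta_i^{(k)}}$ is not a single pair but a \emph{table}
\[
\ket{\zeta_i^{(k)}} = \bigotimes_{j=1}^{M_i}\ket{T_{ij}^{(k)}}\ket{\psi_{ij}^{(k)}},\qquad M_i = M_{i-1}\,m_{i-1},\ M_1=1,
\]
containing a response to every possible canonical transcript. The verifier then \emph{iterates} over $j=1,\dots,M_i$, SWAP-testing each $\ket{T_{ij}}$ against $\ket{C_{i-1}}$ until one matches (repeating each test $W$ times to drive down the false-match probability), and only decides against the active prover if no entry matches. You need this iteration; ``SWAP tests between the $\ket{T_i}$ copies and the stored canonical-transcript copies'' as a single check does not capture it.

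This is also why the claim $m\in O(\delta^{-2})$ is load-bearing in a way your write-up doesn't surface: it keeps $M_r = \prod_{j<r} m_j = O(\delta^{-2(r-1)})$ polynomial (recall $r$ is constant), which bounds the table size, the number of SWAP tests $W\cdot r M_r$, and therefore the $K$ you must feed the disentangler. Without the table the polynomial accounting you sketch (``union bound over rounds and over the finite mixture support'') has nothing to bound. Once you add the table and the explicit $W$ repeated SWAP tests per entry, the rest of your error analysis---two sources of loss (a cheating opponent's near-match $\ket{T_{ij}}\neq\ket{C_{i-1}}$, and selection of a ``wrong'' near-match in the honest prover's own round), unioned over rounds and entries, followed by Hoeffding on the final $T$ runs of $V_x$---is the paper's argument.
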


\cref{thm:amplification} shows that every level of $\pureQPH$ is contained in $\QPH$. 
The reverse inclusion $\QPH\subseteq\pureQPH$ is straightforward: given mixed-state proofs, the provers can instead send purifications, and the verifier can trace out the auxiliary registers to recover the original mixed states. Putting these two directions together, we conclude that the hierarchies are equal. 

\begin{corollary}\label{cor:equal}
  $\pureQPH = \QPH$.
\end{corollary}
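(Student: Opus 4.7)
The plan is to prove the equality by showing both inclusions separately. The forward inclusion $\pureQPH \subseteq \QPH$ is an immediate consequence of Theorem~\ref{thm:amplification}: for every $r$, that theorem gives $\pureQSigmai[r] \subseteq \QSigmai[7r](c',s')$ with a $1/\poly$ gap between $c'$ and $s'$, so $\pureQSigmai[r] \subseteq \QSigmai[7r] \subseteq \QPH$, and taking the union over $r$ finishes this direction. All of the real work in this direction is absorbed into Theorem~\ref{thm:amplification} and its underlying disentangler.

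For the reverse inclusion $\QPH \subseteq \pureQPH$, I fix $A \in \QSigmai[r]$ with verifier $V$ acting on mixed proofs $\rho_1,\ldots,\rho_r$, each of $p(n)$ qubits. I construct a $\pureQSigmai[r]$ verifier $V'$ as follows: in round $i$, the $i$-th prover sends a pure state $\ket{\psi_i}$ on $2p(n)$ qubits, split into a ``payload'' register of $p(n)$ qubits and a ``purifying'' register of $p(n)$ qubits. The verifier $V'$ traces out the purifying register of each $\ket{\psi_i}$ to obtain a mixed state $\sigma_i$ on $p(n)$ qubits, and then runs the original $V$ on $\sigma_1 \otimes \cdots \otimes \sigma_r$. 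By Remark~\ref{remark:padding}, doubling the message length from $p(n)$ to $2p(n)$ is without loss of generality.

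Correctness of this reduction follows from the Schmidt decomposition: the set of reduced states of a pure bipartite state on $\CC^{2^{p(n)}} \otimes \CC^{2^{p(n)}}$ is exactly the set of all density operators on $\CC^{2^{p(n)}}$. Hence for each round $i$, the set of $\sigma_i$ attainable by choosing a pure $\ket{\psi_i}$ coincides with the set of admissible mixed proofs $\rho_i$ in the original $\QSigmai[r]$ protocol, and this holds uniformly across existential and universal quantifiers. Since the acceptance probability of $V'$ depends on the purifications only through the reduced states $\sigma_i$, the alternating max/min value under pure proofs equals the alternating max/min value under mixed proofs; completeness and soundness transfer verbatim, giving $A \in \pureQSigmai[r] \subseteq \pureQPH$. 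The difficulty of the corollary is entirely discharged into Theorem~\ref{thm:amplification}, so I do not expect any nontrivial obstacle here beyond confirming that the purification correspondence is compatible with the alternating quantifier structure, which the argument above makes explicit.
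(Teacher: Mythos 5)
Your proof is correct and follows essentially the same route as the paper: the forward inclusion is delegated to \cref{thm:amplification}, and the reverse inclusion uses purification plus partial trace (applying \cref{remark:padding} to justify doubling the message length). The paper states the reverse direction in one sentence; your elaboration via the Schmidt decomposition and the observation that the game value depends only on the reduced states correctly fills in what the paper leaves implicit.
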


The remainder of this section is devoted to proving \cref{thm:amplification}. Before presenting the proof, we construct a disentangler tailored to our purposes, building on the dimension-independent disentangler of Jeronimo and Wu~\cite{JW24}.

\begin{theorem}[Disentangler \cite{JW24}]\label{lem:disentangler}
  Let $d,\ell\ge k\in\NN$ and $\calH = \CC^d$.
  There exists an efficient quantum channel $\Lambda:\calD(\calH^{\otimes 2\ell})\to\calD(\calH^{\otimes k})$, such that for all states $\rho_1,\rho_2$, there exists a distribution $\mu$ on pure states $\ket{\psi}\in\CC^d$, such that
  \begin{equation}\label{eq:disentangler-error}
    \norm{\Lambda(\rho_1\otimes\rho_2) - \int{\psi}^{\otimes k}d\mu}_1 \le \tilde O\left(\left(\frac{k^3}{\ell}\right)^{1/4}\right).
  \end{equation}
  Furthermore, $\Lambda({\psi}^{\otimes2\ell})={\psi}^{\otimes k}$ for all $\ket{\psi}\in\calH$.
\end{theorem}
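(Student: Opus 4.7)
This theorem is taken verbatim from Jeronimo and Wu~\cite{JW24}, so my plan is simply to invoke their construction as a black box. To indicate the shape of the argument, one would define $\Lambda$ as a symmetrization-and-truncation channel: apply a uniformly random permutation from $S_{2\ell}$ to the $2\ell$ input registers and then trace out $2\ell - k$ of the registers. The fixed-point property $\Lambda(\ketbra{\psi}{\psi}^{\otimes 2\ell}) = \ketbra{\psi}{\psi}^{\otimes k}$ then falls out immediately, because $\ket{\psi}^{\otimes 2\ell}$ is permutation-invariant and the partial trace of a product state on any subset of registers is again $\ket{\psi}^{\otimes k}$.

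The main technical step is the approximation bound in \cref{eq:disentangler-error}. A direct application of the finite quantum de Finetti theorem would introduce a factor $\poly(d)$ in the error, which is useless here since $d$ is arbitrary (and in the intended applications will be exponentially large). The key insight of [JW24] that one would need to reproduce is to exploit the unentangled product structure $\rho_1 \otimes \rho_2$: one half of the input effectively serves as a ``reference'' that can be measured in an informationally complete basis to reveal an approximate product direction, while by exchangeability the other half must be nearly product along that direction. Averaging over measurement outcomes yields the required convex combination of $\ketbra{\psi}{\psi}^{\otimes k}$'s, and the closeness is controlled via moment estimates on the symmetric subspace rather than dimension counting.

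The hardest step, and the real content of [JW24], is establishing the dimension-free error $\tilde{O}((k^{3}/\ell)^{1/4})$: the standard Schur--Weyl machinery used in finite quantum de Finetti gives bounds of the form $k d^{2}/\ell$, and removing the $d$ factor requires a genuinely new analysis that uses the product input to cancel dependence on dimension. Since the present paper uses this theorem as a building block, I would simply cite it. I also note that the authors will subsequently strengthen it in \cref{lem:intro-pure-disentangler} by refining the output to a discrete convex combination over only $m = O(\delta^{-2})$ product states (at the cost of four unentangled input blocks rather than two), so one natural route to that strengthening is to combine \cref{lem:disentangler} with an additional empirical-sampling step on the resulting convex combination.
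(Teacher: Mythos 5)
Your plan of citing \cite{JW24} as a black box is precisely what the paper does: the theorem is stated without proof, attributed to Jeronimo and Wu, and used directly as an ingredient in the subsequent construction of \cref{lem:pure-disentangler}. One small caveat on your informal sketch (which is inessential, since you ultimately defer to the citation): the JW24 channel is a \emph{conditional} symmetric-subspace projection (accept/reject based on $\Pi_{\mathrm{sym}}$, outputting a fixed junk state on rejection) followed by a partial trace, not a random-permutation twirl; a bare twirl-and-trace would land you back in the dimension-dependent finite de Finetti regime you correctly identify as insufficient, and there is no informationally-complete measurement on one half of the input in their argument.
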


Note that by Carathéodory's theorem \cite{Carathéodory1911}, we always have 
\begin{equation}\label{eq:caratheodory}
  \int{\psi}^{\otimes k}d\mu = \sum_{i} p_i{\psi_i}^{\otimes k}
\end{equation}
for some distribution $p_i$ on pure states $\ket{\psi_i}$.

We also use the following facts about the SWAP test and the product test of Harrow and Montanaro \cite{HM13}.
We let $\Pswap(\rho,\sigma)$ and $\Pprod(\rho, \sigma)$ denote the acceptance probability of the SWAP test and the product test, respectively. We let $\Pprod(\rho)\coloneq\Pprod(\rho,\rho)$. 

\begin{lemma}[Product Test {\cite[Theorem 3]{HM13}}]\label{lem:product-test}
  Given $\ket{\psi}\in\CC^{d_1}\otimes\dotsm\otimes\CC^{d_n}$, let 
  \begin{equation}
    1-\epsilon = \max\bigl\{\abs{\braket{\psi|\phi_1,\dots,\phi_n}}^2 \bigm| \ket{\phi_i}\in\CC^{d_i}, i\in[n] \bigr\}.
  \end{equation}
  Then $\Pprod({\psi}) = 1 - \Theta(\epsilon)$. 
\end{lemma}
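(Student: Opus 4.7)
The plan is to analyze the product test via a Fourier-type expansion of the symmetric-subspace projector and then establish the two directions of $\Pprod(\psi)=1-\Theta(\epsilon)$ separately. Writing each two-subsystem symmetric projector as $\Pi_{\mathrm{sym},i}=(I+F_i)/2$, where $F_i$ swaps the $i$-th factor of the two copies of $\ket{\psi}$, and expanding the product over $i\in[n]$, I would first derive the identity
\begin{equation}
  \Pprod(\psi) \;=\; \frac{1}{2^n}\sum_{S\subseteq[n]}\tr\!\bigl[\psi_S^2\bigr],
\end{equation}
where $\psi_S=\tr_{\bar S}(\ketbra{\psi}{\psi})$, using the standard fact $\tr\!\bigl[F_S\,(\ketbra{\psi}{\psi}\otimes\ketbra{\psi}{\psi})\bigr]=\tr[\psi_S^2]$.

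For the lower bound $\Pprod(\psi)\ge 1-O(\epsilon)$, I would take the optimal product approximation $\ket{\phi_1}\otimes\dotsm\otimes\ket{\phi_n}$ with squared overlap $1-\epsilon$, decompose $\ket{\psi}=\sqrt{1-\epsilon}\,\ket{\phi_1}\dotsm\ket{\phi_n}+\sqrt{\epsilon}\,\ket{\psi^\perp}$, and observe that for every $S$ the marginal $\psi_S$ is $O(\sqrt{\epsilon})$-close in trace distance to the pure state $\bigotimes_{i\in S}\ketbra{\phi_i}{\phi_i}$. Hence $\tr[\psi_S^2]\ge 1-O(\epsilon)$ uniformly in $S$, and averaging yields the bound.

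For the upper bound $\Pprod(\psi)\le 1-\Omega(\epsilon)$, I would induct on $n$, with the base case $n=1$ trivial. In the step, assuming $\Pprod(\psi)\ge 1-\delta$, the identity above forces $\tr[\psi_{\{i\}}^2]\ge 1-O(\delta)$ for every $i$, so a principal eigenvector $\ket{\phi_i}$ of $\psi_{\{i\}}$ satisfies $\bra{\phi_i}\psi_{\{i\}}\ket{\phi_i}\ge 1-O(\delta)$, and Schmidt decomposition across the cut $\{n\}\mid[n-1]$ yields an approximation $\ket{\psi}\approx\ket{\phi_n}\otimes\ket{\chi}$ on some $\ket{\chi}$ over the remaining subsystems. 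I would then verify that the product test on the first $n-1$ factors of $\ket{\chi}$ still accepts with probability $1-O(\delta)$, so by induction $\ket{\chi}$ is close to a full product state, and the two approximations combine.

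The main obstacle is avoiding a blow-up of the approximation error across the induction: a naive argument accumulates $\sqrt{\delta}$ losses at each level and gives only $\epsilon=O(\delta^{1/2^n})$ rather than the tight $\epsilon=O(\delta)$. The Harrow--Montanaro argument~\cite{HM13} sidesteps this by exploiting the full structure of $\sum_S\tr[\psi_S^2]$ and a simultaneous spectral analysis of all the marginals, so that the deviations compose additively rather than multiplicatively. This tight accounting is the delicate technical core of the proof and is why the clean identity for $\Pprod(\psi)$ is not by itself sufficient.
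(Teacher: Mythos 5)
This lemma is imported directly from Harrow and Montanaro~\cite{HM13} (their Theorem~3); the paper you are working from does not contain its own proof, so there is no in-paper argument to compare against, only the original HM~argument.

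Your starting identity
\begin{equation}
  \Pprod(\psi) \;=\; \frac{1}{2^n}\sum_{S\subseteq[n]}\tr\!\bigl[\psi_S^2\bigr]
\end{equation}
is correct and is indeed the point of departure in Harrow--Montanaro. However, your lower bound as written has a gap. From $\|\psi_S - \bigotimes_{i\in S}\ketbra{\phi_i}{\phi_i}\|_1 = O(\sqrt{\epsilon})$ you can only conclude $\tr[\psi_S^2]\ge 1-O(\sqrt{\epsilon})$, not $1-O(\epsilon)$: trace-distance $\delta$ between a state and a pure state gives purity at least $(1-\delta)^2$, and here $\delta=\Theta(\sqrt{\epsilon})$. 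So the step ``hence $\tr[\psi_S^2]\ge 1-O(\epsilon)$'' is a non sequitur. The correct, cleaner route is to observe that $\bra{\phi_S}\psi_S\ket{\phi_S} = \|(\bra{\phi_S}\otimes I_{\bar S})\ket{\psi}\|^2 \ge |\braket{\phi}{\psi}|^2 = 1-\epsilon$ by Cauchy--Schwarz (since $\ket{\phi_{\bar S}}$ is a unit vector in the complement), so $\lambda_{\max}(\psi_S)\ge 1-\epsilon$ and hence $\tr[\psi_S^2]\ge \lambda_{\max}(\psi_S)^2 \ge 1-2\epsilon$; averaging over $S$ then gives $\Pprod(\psi)\ge 1-2\epsilon$. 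This avoids passing through trace distance entirely and keeps the error linear as required.

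For the upper bound (soundness), you do not actually give an argument: you correctly identify that a naive level-by-level induction loses a square root at each step and say that HM sidestep this, but that observation is the entirety of what you offer. As a result, the proposal establishes only one direction of the asymptotic equality $\Pprod(\psi)=1-\Theta(\epsilon)$ (and that only after the fix above); the harder direction, which is the substance of~\cite[Theorem~2--3]{HM13} and requires a careful case analysis over the magnitude of $\epsilon$ and a global (rather than inductive) treatment of the marginals, remains unproven in your sketch.
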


\begin{lemma}[{\cite[Proof of Lemma 5]{HM13}}]\label{lem:Pprod-bound}
  $\Pprod(\rho,\sigma) \le \frac12(\Pprod(\rho) + \Pprod(\sigma))$.
\end{lemma}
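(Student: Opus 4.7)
The plan is to recast $\Pprod(\rho,\sigma)$ as an expectation of Hilbert-Schmidt inner products between reduced density matrices on random subsystems, and then finish by applying Cauchy-Schwarz twice followed by AM-GM. Write $\calH = \CC^{d_1}\otimes\dotsm\otimes\CC^{d_n}$ and let $F_i$ denote the SWAP on the $i$-th subsystem between the two copies. Since the product test accepts iff every one of the $n$ SWAP tests accepts, its acceptance projector factors as $\Piprod = \bigotimes_{i=1}^n \tfrac{I+F_i}{2} = \E_{S\subseteq[n]} F_S$, where $S$ is uniform over subsets of $[n]$ and $F_S := \prod_{i\in S} F_i$ is the partial swap on the subsystems indexed by $S$. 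Extending the well-known full-swap identity $\tr(F(\rho\otimes\sigma)) = \tr(\rho\sigma)$ to partial swaps gives
\[
  \tr(F_S(\rho\otimes\sigma)) = \tr(\rho_S\,\sigma_S),
\]
where $\rho_S := \tr_{\bar S}(\rho)$ and $\sigma_S := \tr_{\bar S}(\sigma)$ are the reductions of $\rho$ and $\sigma$ onto the subsystems in $S$. Consequently $\Pprod(\rho,\sigma) = \E_S\,\tr(\rho_S\sigma_S)$, and in particular $\Pprod(\rho) = \E_S\,\tr(\rho_S^2)$.

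Given these formulas, the bound follows by standard inequalities. For each fixed $S$, Cauchy-Schwarz in the Hilbert-Schmidt inner product gives $\tr(\rho_S\sigma_S) \le \sqrt{\tr(\rho_S^2)\,\tr(\sigma_S^2)}$. Taking expectations and applying Cauchy-Schwarz for random variables yields
\[
  \E_S\sqrt{\tr(\rho_S^2)\,\tr(\sigma_S^2)} \;\le\; \sqrt{\E_S\,\tr(\rho_S^2)\cdot\E_S\,\tr(\sigma_S^2)} \;=\; \sqrt{\Pprod(\rho)\,\Pprod(\sigma)},
\]
and a final application of AM-GM gives $\sqrt{\Pprod(\rho)\Pprod(\sigma)} \le \tfrac12(\Pprod(\rho)+\Pprod(\sigma))$, chaining to the target bound.

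The only step that requires genuine work is the partial-swap trace identity. This follows by expanding the trace in a product basis aligned with the $S,\bar S$ split, summing over the $\bar S$ indices on each copy first to collapse them into the reduced density matrices $\rho_S$ and $\sigma_S$, and then recognizing the remaining sum over $S$ indices as $\tr(\rho_S\sigma_S)$. Everything downstream of that identity is routine bookkeeping with Cauchy-Schwarz and AM-GM, so I do not anticipate any further obstacles.
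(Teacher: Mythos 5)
Your proof is correct and follows essentially the same route Harrow and Montanaro take in the cited proof of their Lemma~5: expand $\Piprod$ as the average of partial swaps $F_S$, use the identity $\tr(F_S(\rho\otimes\sigma)) = \tr(\rho_S\sigma_S)$ to obtain $\Pprod(\rho,\sigma) = \E_S\tr(\rho_S\sigma_S)$, and then bound. Your final three steps (Hilbert--Schmidt Cauchy--Schwarz, Cauchy--Schwarz in expectation, AM--GM) can be collapsed into one: $2\tr(\rho_S\sigma_S) \le \tr(\rho_S^2) + \tr(\sigma_S^2)$ follows immediately from $\tr\bigl((\rho_S-\sigma_S)^2\bigr)\ge 0$ since the reductions are Hermitian, and taking $\E_S$ of both sides gives the lemma directly.
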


\begin{lemma}\label{lem:prod-swap}
  $\Pprod(\rho,\sigma)\le \Pswap(\rho,\sigma)$. 
\end{lemma}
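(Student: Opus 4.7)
The plan is to interpret both tests as POVM-element measurements on the two-copy space $\calH \otimes \calH$, where $\calH = \bigotimes_{i=1}^n \calH_i$ is the tensor-product domain on which the product test is defined, and then reduce the comparison of acceptance probabilities to a clean operator inequality between two projectors.

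First I would record the accept operators explicitly. Let $F$ denote the full swap on $\calH \otimes \calH$ and let $F_i$ denote the swap on the $i$-th subsystem pair (acting as identity on the remaining subsystems), so that $F = F_1 F_2 \cdots F_n$. Then the SWAP test on $\rho,\sigma$ has accept POVM element $\Pisym = \tfrac12(I + F)$, the projector onto the symmetric subspace of $\calH \otimes \calH$. The product test runs independent SWAP tests on matching subsystems and accepts iff every one of them passes, so its accept POVM element is $\prod_{i=1}^n \Pisym^{(i)}$, where $\Pisym^{(i)} = \tfrac12(I + F_i)$ is the projector onto the symmetric subspace of $\calH_i \otimes \calH_i$. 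The operators $\Pisym^{(i)}$ mutually commute because they act on disjoint pairs of subsystems, so their product is itself a projector.

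The main step is the subspace containment
\[
\mathrm{range}\!\left(\prod_{i=1}^n \Pisym^{(i)}\right) \subseteq \mathrm{range}(\Pisym).
\]
Indeed, any vector stabilized by every $F_i$ is stabilized by their product $F = F_1 \cdots F_n$, hence lies in the symmetric subspace of $\calH \otimes \calH$. Combined with the standard fact that for projectors $P,Q$, if $\mathrm{range}(P) \subseteq \mathrm{range}(Q)$ then $Q - P$ is itself a projector (and thus PSD), this yields the operator inequality $\prod_i \Pisym^{(i)} \le \Pisym$. Evaluating both sides against the PSD operator $\rho \otimes \sigma$ gives $\Pprod(\rho,\sigma) \le \Pswap(\rho,\sigma)$, as desired.

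There is no real obstacle; the proof is a short operator inequality. The only thing to be careful about is the bookkeeping that identifies the product test's accept operator with the product of commuting symmetric-subspace projectors on correctly matched subsystem pairs within $\calH \otimes \calH$.
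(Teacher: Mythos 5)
Your proof is correct and follows essentially the same route as the paper: both identify the accept operators as $\Piswap = \tfrac12(I+F)$ and $\Piprod = \prod_i \tfrac12(I+F_i)$ and reduce the claim to the projector ordering $\Piprod \preceq \Piswap$, exploiting that the $F_i$ commute. The only cosmetic difference is that the paper verifies the ordering by comparing eigenvalues in a common product eigenbasis, while you deduce it from range containment of projectors; these are equivalent phrasings of the same observation.
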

\begin{proof}
  Denote the registers of $\rho$ by $\calA_1,\dots,\calA_n$ and $\sigma$ by $\calB_{1},\dots,\calB_{n}$.
  For each $i$, let $F_{i}$ denote the swap operator between $\calA_i$ and $\calB_i$.
  Note that $F_i^2=I$, so the eigenvalues of $F_i$ are $\pm1$. 
  The projectors onto the accepting subspaces are given by
  \begin{equation}
    \Piswap = \frac{I+F}{2},\quad F = \prod_{i=1}^n F_i,\qquad \Piprod = \prod_{i=1}^n\frac{I+F_i}{2}.
  \end{equation}
  We claim that $\Piswap \succeq \Piprod$.
  Since the $F_i$ act on disjoint registers, they commute. 
  Thus there exists a common eigenbasis consisting of product vectors $\ket\psi=\bigotimes_{i=1}^n\ket{\psi_i}_{\calA_i\calB_i}$, where each $\ket{\psi_i}$ is an eigenvector of $F_i$ with eigenvalue $\lambda_i\in\{1,-1\}$.
  For such an eigenstate $\ket\psi$, we have $\bra{\psi}\Piprod\ket{\psi} = 1$ iff all $\lambda_i=1$.
  On the other hand, $\bra{\psi}\Piswap\ket{\psi} = 1$ iff there is an even number of negative $\lambda_i$.
  Thus, whenever $\Piprod$ accepts, $\Piswap$ also accepts, while the converse need not hold. This establishes $\Piswap \succeq \Piprod$, and hence $\Pprod(\rho,\sigma) \leq \Pswap(\rho,\sigma)$.
\end{proof}

To proceed, we need the following combinatorial lemma about distributions. Intuitively, it says that if a certain event happens with non-negligible probability, then a small ``hitting set'' of outcomes suffices to capture the event with high conditional probability.

\begin{lemma}\label{lem:peaked}
  Let $p = (p_i), q = (q_j)$ be independent probability distributions over $[N]$. Let $S\subseteq [N]^2$ satisfy $\Pr[(i,j)\in S]=\epsilon$.
  Then there exists a set $X\subseteq [N]$ of size 
  \[m\le \left\lceil \frac{1}{e\epsilon\gamma}\right\rceil,\]
  such that $\Pr[\exists k\in X:(i,k)\in S\mid (i,j)\in S]\ge 1-\gamma$.
\end{lemma}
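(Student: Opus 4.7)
The plan is to prove the lemma by the probabilistic method: sample $X$ as $m$ independent draws from the distribution $q$ and bound the expected conditional failure probability. Write $S_i = \{k \in [N] : (i,k) \in S\}$ and $q(S_i) = \sum_{k\in S_i} q_k$, so that $\epsilon = \sum_i p_i q(S_i)$. Conditioned on the event $E = \{(i,j)\in S\}$, the marginal distribution of $i$ becomes $p_i q(S_i)/\epsilon$. For a given $X$, the conditional ``failure'' event $\{X \cap S_i = \emptyset\}$ has probability $(1-q(S_i))^m$ over the random choice of $X$, independently of the event $E$.

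Combining these observations, I would compute
\[
\mathbb{E}_X\bigl[\Pr[\forall k\in X:(i,k)\notin S \mid E]\bigr]
= \frac{1}{\epsilon}\sum_{i} p_i\, q(S_i)\,(1-q(S_i))^m
\le \frac{1}{\epsilon}\sum_i p_i\, q(S_i)\, e^{-m q(S_i)},
\]
using $1-x \le e^{-x}$. The next step is to apply the elementary inequality $x e^{-mx} \le 1/(em)$ for $x \ge 0$ (maximized at $x=1/m$) pointwise to each term $q(S_i) e^{-m q(S_i)}$, which collapses the sum to
\[
\mathbb{E}_X\bigl[\Pr[\forall k\in X:(i,k)\notin S \mid E]\bigr]\le \frac{1}{em\epsilon}\sum_i p_i = \frac{1}{em\epsilon}.
\]
Choosing $m = \lceil 1/(e\epsilon\gamma)\rceil$ makes the right-hand side at most $\gamma$.

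Finally, by the probabilistic method, there exists a realization of the $m$ samples (and hence a set $X$ of at most $m$ distinct elements) for which the conditional failure probability is at most $\gamma$, which is exactly the desired conclusion. There is no real obstacle here: the argument is a standard ``expected covering'' computation, and the only step that requires any care is recognizing that the reweighting by $q(S_i)$ inherent in the conditioning combines with the exponential tail $e^{-m q(S_i)}$ via the bound $xe^{-mx}\le 1/(em)$ to produce precisely the advertised $1/(e\epsilon\gamma)$ dependence.
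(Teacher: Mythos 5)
Your proposal is correct and follows essentially the same route as the paper's proof: sample $X$ by $m$ i.i.d.\ draws from $q$, bound $(1-q(S_i))^m \le e^{-m q(S_i)}$, and apply $x e^{-mx}\le 1/(em)$ termwise. The paper works with the unconditional quantity $\Pr[F_X\cap E]$ and shows its expectation is at most $\gamma\epsilon$, while you divide by $\Pr[E]=\epsilon$ up front and bound $\E_X[\Pr[F_X\mid E]]\le\gamma$ directly; since $\epsilon$ does not depend on $X$, these are the same computation rescaled, so the two presentations are equivalent.
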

\begin{proof}
  For each $i\in[N]$, define $S_i = \{k\in [N]\mid (i,k)\in S\}$, and $q(S_i) = \Pr_{k\sim q}[k\in S_i] = \sum_{k\in S_i} q_k$.
  Then $\epsilon = \sum_{i=1}^N p_i q(S_i)$.

  Let $E$ be the event that $(i,j)\in S$, and $F_X$ the event that $S_i\cap X =\emptyset$.
  Our goal is to find a set $X$ of size $m$ such that
  \begin{equation}
    \Pr[F_X\mid E] = \frac{\Pr[F_X\cap E]}{\Pr[E]} \le \gamma,
  \end{equation}
  which is equivalent to finding a set $X$ such that  
\begin{equation}\label{eq:PrFXE}
\Pr[F_X \cap E] = \sum_{i : S_i \cap X = \emptyset} p_i q(S_i) \leq \gamma \epsilon.
\end{equation}
  
  Now construct $X$ by sampling $m$ i.i.d. elements $x_1,\dots,x_m\sim q$ and let $X=\{x_1,\dots,x_m\}$. 
  Define the random variable $Z = \sum_{i:S_i\cap X = \emptyset} p_i q(S_i)=\Pr[F_X\cap E]$.
  Then
  \begin{subequations}
  \begin{align}
    \E_X[Z] &= \E_X\left[\sum_{i=1}^N p_i q(S_i)\cdot\indic(S_i\cap X = \emptyset)\right]  \\
    &= \sum_{i=1}^N p_iq(S_i)\E_X[\indic(S_i\cap X = \emptyset)]\\
    &=\sum_{i=1}^N p_i q(S_i) \Pr_X[S_i\cap X=\emptyset] \\ 
    &=\sum_{i=1}^N p_i q(S_i) (1-q(S_i))^m\\
    &\le \sum_{i=1}^N p_i q(S_i)e^{-m q(S_i)}\label{eq:EZ:c}\\
    &\le \frac{1}{em}\sum_{i=1}^N p_i \label{eq:EZ:d}\\ 
    &= \frac{1}{em},
  \end{align}
  \end{subequations}
  where \cref{eq:EZ:c} uses $1-x\le e^{-x}$ and \cref{eq:EZ:d} uses that $f(x)=xe^{-mx}$ has a global maximum at $x=1/m$.
  For $m = \lceil1/(e\epsilon\gamma)\rceil$, we have $\E_X[Z] \le \gamma\epsilon$ as desired. Thus, there must exist an $X$, for which \cref{eq:PrFXE} holds, as desired.
\end{proof}

We now combine our combinatorial lemma with the dimension-independent disentangler of Jeronimo and Wu to obtain a new disentangler suited for our setting.

\begin{lemma}\label{lem:pure-disentangler}
  Let $\calH = \CC^{d_1}\otimes\dotsm\otimes\CC^{d_s}$, $k\in \NN$, and $\delta>0$.
  There exist parameters $\ell\in\poly(\delta^{-1},k),m\in O(\delta^{-2})$ and a quantum channel 
  \[\Gamma\colon\calD(\calH^{\otimes 4\ell})\to\calD(\calH^{\otimes k}),\] 
  with the following properties:
  for all states $\rho_1,\dots,\rho_4\in\calD(\calH^{\otimes\ell})$, there exists a distribution $\{p_i\}_{i=1}^m$ over product states 
  \[
    \ket{\zeta_i} = \ket{\zeta_{i,1}} \otimes \dots \otimes \ket{\zeta_{i,s}}, \quad \ket{\zeta_{i,j}} \in \CC^{d_j}
    \]
  such that
  \begin{equation}
    \norm{\Gamma(\rho_1\otimes\rho_2\otimes\rho_3\otimes\rho_4) - \sum_{i=1}^m p_i \ketbra{\zeta_i}{\zeta_i}^{\otimes k}}_1 \le \delta.
  \end{equation}
  Furthermore, for 
 every pure product state $\ket{\psi}=\ket{\psi_1}\otimes\dotsm\otimes\ket{\psi_s}\in\calH$,
  $\Gamma(\ketbra{\psi}{\psi}^{\otimes4\ell}) = \ketbra{\psi}{\psi}^{\otimes k}$. 
\end{lemma}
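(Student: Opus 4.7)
The plan is to build on the Jeronimo-Wu disentangler (\cref{lem:disentangler}) by applying it twice---once to $(\rho_1,\rho_2)$ and once to $(\rho_3,\rho_4)$---to obtain two \emph{independent} pure-state Carathéodory mixtures $\sigma$ and $\tau$, and then to use a battery of SWAP tests and product tests to couple these two copies to each other and to enforce product structure across the $s$ tensor factors. The combinatorial ``hitting set'' lemma (\cref{lem:peaked}) will then show that the post-measurement output is effectively supported on only $m=O(\delta^{-2})$ distinct atoms, yielding the required reduction from an arbitrary Carathéodory decomposition to a small one.

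Concretely, $\Gamma$ operates as follows. First, apply $\Lambda$ of \cref{lem:disentangler} to $\rho_1\otimes\rho_2$ to produce $\sigma\in\calD(\calH^{\otimes K_1})$ that is $O(\delta)$-close in trace norm to $\sum_i p_i\ketbra{\psi_i}{\psi_i}^{\otimes K_1}$ for $K_1 = k+\poly(\delta^{-1})$ chosen so that the JW error is at most $\delta/4$. Symmetrically, apply $\Lambda$ to $\rho_3\otimes\rho_4$ to obtain $\tau\approx\sum_j q_j\ketbra{\phi_j}{\phi_j}^{\otimes K_2}$. Next, perform $t=\poly(\delta^{-1})$ parallel SWAP tests between fresh copies drawn from $\sigma$ and $\tau$, together with product tests (\cref{lem:product-test}) on fresh copies from each side; accept iff every test passes. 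On acceptance, output $k$ of the remaining copies of $\sigma$; on rejection, output the fixed product state $\ketbra{0}{0}^{\otimes k}$. The honest case $\rho_i = \ketbra{\psi}{\psi}^{\otimes\ell}$ with $\ket{\psi}$ product is immediate: the preservation property in \cref{lem:disentangler} forces $\sigma$ and $\tau$ to be exact tensor powers of $\ketbra{\psi}{\psi}$, every SWAP and product test passes with certainty, and the output is exactly $\ketbra{\psi}{\psi}^{\otimes k}$.

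For the general case, split on the overall acceptance probability $\epsilon$. If $\epsilon\le \delta/2$, the output is $(1-\delta/2)$-close to $\ketbra{0}{0}^{\otimes k}$, a single product atom, which suffices. Otherwise, let $S\subseteq [N]^2$ be the set of pairs $(i,j)$ such that the combined SWAP-plus-product POVM accepts on $\ket{\psi_i},\ket{\phi_j}$ with probability $\ge 1-O(\delta)$; $t$-fold amplification ensures $\Pr_{(i,j)\sim p\otimes q}[(i,j)\in S]\ge\Omega(\epsilon)\ge\Omega(\delta)$. Invoking \cref{lem:peaked} with $\gamma=\Theta(\delta)$ then yields a set $X$ of $\phi$-indices of size $|X|=O(\delta^{-2})$ such that, conditional on $(i,j)\in S$, there exists $j'\in X$ with $(i,j')\in S$ with probability $\ge 1-\gamma$. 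Whenever $(i,j')\in S$, the SWAP part implies $\ket{\psi_i}$ is $O(\delta)$-close to $\ket{\phi_{j'}}$ and the product part, via the converse direction of \cref{lem:product-test}, implies $\ket{\phi_{j'}}$ is $O(\delta)$-close to a product state $\ket{\zeta_{j'}}=\ket{\zeta_{j',1}}\otimes\dotsm\otimes\ket{\zeta_{j',s}}$. Combining with the Gentle Measurement Lemma (\cref{lemma:gentle-measurement}) to control the perturbation from conditioning on acceptance, the output is $\delta$-close in trace norm to $\sum_{j'\in X}\alpha_{j'}\ketbra{\zeta_{j'}}{\zeta_{j'}}^{\otimes k}$, as required.

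The main obstacle is the careful calibration of all error parameters to fit within a single budget $\delta$: the JW approximation error $\tilde O((k^3/\ell)^{1/4})$, the gentle-measurement perturbation of order $O(\sqrt{1-\epsilon})$, the hitting-set truncation $O(\gamma)$, and the SWAP-to-closeness and product-test-to-product-state conversions. Requiring each to be $O(\delta)$ simultaneously pins down $\ell\in\poly(\delta^{-1},k)$, $t\in\poly(\delta^{-1})$, and $m\in O(\delta^{-2})$. A secondary subtlety is that the $K_2$ copies of $\ket{\phi_j}$ inside $\tau$ are maximally correlated (sharing a single classical index $j$) rather than iid, so multiple tests applied to distinct copies of $\tau$ yield dependent outcomes; this dependence is benign for our analysis, and the independence hypothesis of \cref{lem:peaked} is satisfied at the level of the single pair $(i,j)\sim p\otimes q$ since $\sigma$ and $\tau$ come from disjoint input registers.
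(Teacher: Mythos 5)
Your proposal follows essentially the same blueprint as the paper: apply the Jeronimo--Wu channel once to $(\rho_1,\rho_2)$ and once to $(\rho_3,\rho_4)$ to get two Carathéodory mixtures, cross-test them to couple the two sides and to enforce product structure, invoke the combinatorial hitting-set lemma (\cref{lem:peaked}) to shrink the support to $O(\delta^{-2})$, and output a fixed fiducial state $\ket{\mathbf 0}^{\otimes k}$ on test failure. The preservation property for honest product inputs is argued identically. The mechanical difference is that you run three separate test families (SWAP on $\sigma$-vs-$\tau$, product on $\sigma$-vs-$\sigma$, product on $\tau$-vs-$\tau$), whereas the paper runs a \emph{single} battery of $k'$ product tests across the $\sigma$/$\tau$ cut and then uses $\Pprod \le \Pswap$ (\cref{lem:prod-swap}) and $\Pprod(\rho,\sigma)\le \tfrac12(\Pprod(\rho)+\Pprod(\sigma))$ (\cref{lem:Pprod-bound}) to extract both the closeness and the product-ness conclusions; both designs are sound, the paper's is just leaner.

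There is, however, a genuine quantitative gap in your calibration of the set $S$. You define $S$ by ``the combined POVM accepts on $(\ket{\psi_i},\ket{\phi_j})$ with probability $\ge 1-O(\delta)$'' and then claim the output is $\delta$-close to a mixture of $k$-fold tensor powers $\ketbra{\zeta_{j'}}{\zeta_{j'}}^{\otimes k}$. But single-copy closeness $\abs{\braket{\psi_i|\zeta_{j'}}}^2\ge 1-O(\delta)$ only yields $\norm{\psi_i^{\otimes k}-\zeta_{j'}^{\otimes k}}_1 = 2\sqrt{1-\abs{\braket{\psi_i|\zeta_{j'}}}^{2k}} = O(\sqrt{k\delta})$ by Bernoulli, which is not $O(\delta)$ once $k\gg 1/\delta$. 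To get the stated conclusion you need single-copy closeness on the order of $1-O(\delta^2/k)$, which in turn forces the number of test repetitions to scale as $\poly(k,\delta^{-1})$. The paper handles this precisely by thresholding the $k'$-fold acceptance probability $c_{ij}=\Pprod(\psi_i,\phi_j)^{k'}$ at $\alpha \pacc$ and then choosing $k'=\Theta(t\ln(1/(\alpha\delta)))$ with $t=\Theta(k/\delta^2)$ so that $c_{ij}^{1/k'}\ge 1-1/t$; this is the step your sketch elides and is exactly where the $k$-dependence must enter $\ell$. Two smaller issues: the Gentle Measurement Lemma has no role here---the channel's output is an unconditional mixture over accept/reject branches operating on disjoint registers, so the ideal-input analysis is exact without any conditioning perturbation, and in fact GM would give a useless bound since $\pacc$ can be as small as $\Theta(\delta)$; and the assertion ``$t$-fold amplification ensures $\Pr[S]\ge\Omega(\epsilon)$'' needs the more careful bound $\pacc \le \Pr[S]+\Pr[S^c]\cdot\sup_{S^c}c_{ij}$, which the paper gets cleanly by choosing the threshold to be $\alpha\pacc$.
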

\begin{proof}
Let $\Lambda \colon \calD(\calH^{\otimes 2\ell}) \to \calD(\calH^{\otimes (k+k')})$ denote the channel from \cref{lem:disentangler}, now parameterized so that the output has size $k+k'$ rather than $k$.  
We define $\Gamma(\rho_1 \otimes \rho_2 \otimes \rho_3 \otimes \rho_4)$ as follows:
  \begin{enumerate}
    \item Apply $\Lambda$ twice to obtain $\sigma_1 = \Lambda(\rho_1\otimes\rho_2)$ and $\sigma_2 = \Lambda(\rho_3\otimes\rho_4)$ on registers $\calA_1,\dots,\calA_{k+k'}$ and $\calB_1,\dots,\calB_{k+k'}$ respectively.
    \item For $i=1,\dots,k'$, perform a product test between $\calA_{i}$ and $\calB_{i}$.
    \begin{enumerate}
      \item If all product tests accept, output registers $\calA_{k'+1},\dots,\calA_{k+k'}$.
      \item Otherwise, output $\ket{\mathbf0}^{\otimes k}$, where $\ket{\mathbf0} = \ket{0_1,\dots,0_s}\in\calH$.
    \end{enumerate}
  \end{enumerate}
  Let $\eta = \Gamma(\rho_{1}\otimes\rho_2\otimes\rho_3\otimes\rho_4)$ be the output of our channel.
  Note that, by \cref{lem:disentangler,eq:caratheodory}, $\sigma_1$ and $\sigma_2$ in step 1 can be approximated as 
   \begin{equation}\label{eq:gamma2}
    \norm{\sigma_1\otimes\sigma_2 - \sigma'_1\otimes\sigma_2'}_1\le2\epsilon_\Lambda,\qquad  \sigma_1' =\sum_{i=1}^M p_i{\ketbra{\psi_i}{\psi_i}}^{\otimes (k+k')},\sigma_2' = \sum_{j=1}^M q_j{\ketbra{\phi_j}{\phi_j}}^{\otimes(k+k')},
  \end{equation} 
  where $\epsilon_\Lambda$ denotes the error due to $\Lambda$ (\cref{eq:disentangler-error}). We will eventually choose $\ell$ so that $2 \epsilon_\Lambda \le \tfrac{\delta}{2}$.
  
  Let $\Gamma_2$ be the channel corresponding to step 2 in the definition of $\Gamma$ above.
  By contractivity, we have
  \begin{equation}
  \norm{\eta - \Gamma_2(\sigma_1'\otimes\sigma_2')}_1 = 
  \norm{\Gamma_2(\sigma_1 \otimes \sigma_2) - \Gamma_2(\sigma_1'\otimes\sigma_2')}_1
  \le 2\epsilon_{\Lambda}.
  \end{equation}
  Let $\pacc$ be the probability that all product tests accept in step (2a).
  If $\pacc \le \delta/4$, then 
  \begin{equation}
  \norm{\eta - \ketbra{\mathbf0}{\mathbf0}^{\otimes k}}_1 \le 2\epsilon_\Lambda + 2\pacc \le \delta.
  \end{equation} 
  Hence, assume $\pacc>\delta/4$.
  It holds that
  \begin{equation}
    \pacc = \sum_{i,j}p_iq_j \Pprod({\ketbra{\psi_i}{\psi_i}},{\ketbra{\phi_j}{\phi_j}})^{k'} \eqcolon \sum_{i,j}p_iq_j c_{ij}.
  \end{equation}
  Define $\epsilon_S \coloneqq \alpha\pacc$ for $\alpha\in(0,1)$ to be determined later. Let $S = \{(i,j) \mid c_{ij}\ge \epsilon_S \}$.
  Then
  \begin{equation}
    \pacc = \E[c_{ij}] \le \epsilon_S\Pr[S^c] + \Pr[S] \;\Longrightarrow\; \Pr[S] \ge (1-\alpha)\pacc.
  \end{equation}
  We will use \cref{lem:peaked} to approximate $\ket{\psi_i}$ with $(i,j)\in S$ with a small distribution.
  However, there is still a chance that the product tests in (2a) accept (event $A$) even if $(i,j)\notin S$:
  \begin{equation}\label{eq:Sc}
    \Pr[A\cap S^c] = \sum_{ij\in S^c} p_iq_jc_{ij} \le \sum_{ij}p_iq_j\epsilon_S = \alpha\pacc
  \end{equation}
  For all $(i,j)\in S$,
  \begin{equation}
    \Pprod(\ketbra{\psi_i}{\psi_i}, \ketbra{\phi_j}{\phi_j}) = c_{ij}^{1/k'} \ge (\epsilon_S)^{1/k'} \eqcolon \tau,
  \end{equation}
  where $k'\ge-t\ln(\alpha\delta/4) > -t\ln(\epsilon_S)$ gives $\tau=(\epsilon_S)^{1/k'}\ge e^{-1/t}\ge1-1/t$ for $t>1$ to be determined later.
  By \cref{lem:peaked} with parameters $\epsilon\gets\Pr[S]$ and $\gamma$ to be determined later, there exists a set $X\subseteq[M]$ of size $m=O(1/((1-\alpha)\pacc\gamma))\le O(1/(\delta\gamma))$ (recall $\pacc > \delta/4$) such that
  \begin{equation}\label{eq:S'}
    \Pr[i \in S'\mid (i,j) \in S]\ge 1-\gamma,\qquad S' = \{i\in[M]\mid \exists j_i\in X\colon (i,j_i)\in S\}.
  \end{equation}
  For all $i\in S'$, define $\ket{\eta_i} = \ket{\phi_{j_i}}$ for some $j_i$ with $(i,j_i)\in S$.
  These $\ket{\eta_i}$ must be close to product as $\Pprod(\ketbra{\psi_i}{\psi_i},\ketbra{\eta_i}{\eta_i})\ge\tau\ge 1-1/t$. 
  By \cref{lem:Pprod-bound},
  \begin{equation}
    1-1/t\le \Pprod(\ketbra{\psi_i}{\psi_i},\ketbra{\eta_i}{\eta_i}) \le \frac12\bigl(\Pprod(\ketbra{\psi_i}{\psi_i})+\Pprod(\ketbra{\eta_i}{\eta_i})\bigr)\;\Longrightarrow\;\Pprod(\ketbra{\eta_i}{\eta_i})\ge 1-2/t.
  \end{equation}
  By \cref{lem:product-test}, there exists $\ket{\zeta_i} = \ket{\zeta_{i,1}}\otimes\dotsm\otimes\ket{\zeta_{i,s}}\in \calH$ such that $\abs{\braket{\eta_i|\zeta_i}}^2\ge 1-O(1/t)$.
  Additionally, $\abs{\braket{\psi_i|\eta_i}}^2 = 2\Pswap(\ketbra{\eta_i}{\eta_i},\ketbra{\psi_i}{\psi_i})-1\ge2\Pprod(\ketbra{\eta_i}{\eta_i},\ketbra{\psi_i}{\psi_i})-1\ge1-2/t$.
  Hence, for an appropriate constant $C$,
  \begin{equation}\label{eq:psik}
    \begin{aligned}
    \norm{\ketbra{\psi_i}{\psi_i}^{\otimes k} - \ketbra{\zeta_i}{\zeta_i}^{\otimes k}}_1 
    &\le \norm{\ketbra{\psi_i}{\psi_i}^{\otimes k} - \ketbra{\eta_i}{\eta_i}^{\otimes k}}_1 +  \norm{\ketbra{\eta_i}{\eta_i}^{\otimes k} - \ketbra{\zeta_i}{\zeta_i}^{\otimes k}}_1 \\
    &= 2\sqrt{1-\abs{\braket{\psi_i|\eta_i}}^{2k} } + 2\sqrt{1-\abs{\braket{\eta_i|\zeta_i}}^{2k}}\\
    &\le \sqrt{Ck/t},
    \end{aligned}
  \end{equation}
  where the last step uses Bernoulli's inequality.

  Finally we approximate the idealized output state $\eta' = \Gamma_2(\sigma_1'\otimes\sigma_2')$ as $\wteta$ (with small support):
  \begin{align}
    \eta'&= \sum_{i,j} p_iq_j\left((1-c_{ij})\ketbra{\mathbf0}{\mathbf0}^{\otimes k} + c_{ij}\ketbra{\psi_i}{\psi_i}^{\otimes k}\right)\\
    \wteta &= \sum_{(i,j) \in S\wedge i\in S'} p_iq_j\left((1-c_{ij})\ketbra{\mathbf0}{\mathbf0}^{\otimes k} + c_{ij}\ketbra{\zeta_i}{\zeta_i}^{\otimes k}\right) + \sum_{(i,j) \notin S\vee i\notin S'} p_iq_j\ketbra{\mathbf0}{\mathbf0}^{\otimes k}
  \end{align}
  To bound $\norm{\eta'-\wteta}_1$, we need to bound three sources of error:
  (i) Accepting $(i,j) \notin S$, which occurs with probability $\Pr[A\cap S^c] \le \alpha\pacc\le\alpha$ by \cref{eq:Sc};
  (ii) Getting $(i,j) \in S$, but $i\notin S'$, which occurs with probability $\Pr[(i,j) \in S\wedge i\notin S'] \le \Pr[i\notin S'\mid (i,j) \in S]\le \gamma$;
  (iii) Approximation error $\sqrt{Ck/t}$ from \cref{eq:psik}.
  Therefore, we get
  \begin{equation}
    \begin{aligned}
      \norm{\eta'-\wteta}_1&\le \norm{\sum_{(i,j)\in S\wedge i\in S'}p_iq_j c_{ij}\left(\ketbra{\psi_i}{\psi_i}^{\otimes k} - \ketbra{\zeta_i}{\zeta_i}^{\otimes k}\right)}_1 + \norm{\sum_{(i,j)\notin S\vee i\notin S'} p_iq_j c_{ij}\left(\ketbra{\psi_{i}}{\psi_i}
      ^{\otimes k}-\ketbra{\mathbf0}{\mathbf0}^{\otimes k}\right)}_1 \\
      &\le \max_{i\in S'}\norm{\ketbra{\psi_i}{\psi_i}^{\otimes k} - \ketbra{\zeta_i}{\zeta_i}^{\otimes k}}_1 + 2\Pr[A \cap S^c] + 2\Pr[S\cap S^{\prime c}] \\ 
      &\le 2\left(\sqrt{Ck/t} + \alpha + \gamma\right).
    \end{aligned}
  \end{equation}
  We set 
  \begin{equation}
    t = Ck(8/\delta)^2, \quad \alpha = \gamma = \delta/16, \quad k' = \lceil-t\ln(\alpha\delta/4)\rceil,\quad \ell = \tilde O\!\left((k+k')^3/\delta^4\right).
  \end{equation}
  Thus, $\epsilon_\Lambda \le \delta/4$ by \cref{lem:disentangler}, and the overall error is
  \begin{equation}
    \norm{\eta-\wteta} \le 2\epsilon_\Lambda + 2\left(\sqrt{Ck/t} + \alpha + \gamma\right) \le \delta/2 + 2(\delta/8 + \delta/16+\delta/16) \le \delta.\qedhere
  \end{equation}
\end{proof}

Using the above lemma, we now show the containment $\pureQPH \subseteq \QPH$.
The main challenge is that a mixed state behaves like a probability distribution over pure states.
For instance, when Bob sends a mixed state proof $\rho = \sum_{i=1}^m p_i \ketbra{\psi_i}{\psi_i}$, Alice does not know which $\ket{\psi_i}$ the verifier will actually observe.
The disentangler of \cref{lem:pure-disentangler} resolves this issue.
If each proof is required to be product across four registers, then, after running the disentangler, the proof can be written in the form $\sum_{i=1}^m p_i \ket{\zeta_i}\!\!\bra{\zeta_i}^{\otimes k}$, where each $\ket{\zeta_i} = \ket{\zeta_{i,1}} \otimes \cdots \otimes \ket{\zeta_{i,s}}$, and, crucially, the number of terms $m$ is polynomially bounded. 

This structure means that Alice no longer needs to know which $\ket{\zeta_i}$ the verifier observes.
Instead, she can provide a response to every possible $\ket{\zeta_i}$ simultaneously, encoded in tensor product form.
Concretely, the $i$th message contains a table (in tensor product form) listing all possible transcripts of rounds $1,\dots,i-1$ together with Alice’s corresponding responses.
The disentangler ensures enough copies of each transcript are available, and the verifier can then use SWAP tests to select which transcript to use.

To enforce that each message is a product across four registers, we increase the number of rounds by a factor of $7$.
In other words, we show $\pureQSigmai[r] \subseteq \QSigmai[7r]$.
Within each block of seven quantifiers, we discard every other quantifier (positions $2,4,6$) and bundle the remaining four into a single quantifier ranging over product states:

\begin{equation}\label{eq:discard-quantifiers}
  \exists \rho_1\forall \rho_2\exists\rho_3\forall\rho_4\exists\rho_5\forall\rho_6\exists\rho_7 \quad\longmapsto\quad \exists \rho = (\rho_1\otimes\rho_3\otimes\rho_5\otimes\rho_7)
\end{equation}
Thus, by increasing the number of rounds by a factor of $7$, we simulate a proof system where the provers send states that are in tensor product across the four registers.

\amplification*

\begin{proof}
Let $A \in \pureQSigmai[r]$. Then there exist functions $c,s:\NN\to[0,1]$ with $c(n)-s(n) \ge n^{-O(1)}$, a polynomial $p(n)$, and a polynomial-time uniform family of verifiers $\{V_x\}_{x\in\{0,1\}^*}$ such that, for every $x \in \{0,1\}^n$, we have
  \begin{subequations}
    \begin{alignat}{2}
      x&\in \Ayes &\quad\Rightarrow\quad& \exists\ket{\psi_1}\forall\ket{\psi_2}\dotsm \overline{Q_r}\ket{\psi_{r-1}}Q_r{\ket{\psi_r}}\colon P_x(\ketbra{\psi_1}{\psi_1}\otimes\dotsm\otimes\ketbra{\psi_r}{\psi_r}) \ge c(n),\label{eq:Pxyes}\\
      x&\in \Ano &\quad\Rightarrow \quad& \forall\ket{\psi_1}\exists\ket{\psi_2}\dotsm Q_r\ket{\psi_{r-1}}\overline{Q_r}{\ket{\psi_r}}\colon P_x(\ketbra{\psi_1}{\psi_1}\otimes\dotsm\otimes\ketbra{\psi_r}{\psi_r}) \le s(n),
    \end{alignat}
  \end{subequations}
where $Q_r = \forall$ if $r$ is even and $Q_r = \exists$ if $r$ is odd.  
Here $P_x(\rho)$ denotes the acceptance probability of $V_x$ on input state $\rho \in \calD(\calH^{\otimes r})$, with $\calH = \CC^{2^{p(n)}}$.

We prove that $A \in \QSigmai[7r](c',s')$ by constructing a verifier $V_x'$ that receives $7r$ messages in $\calD(\calH^{\otimes \ell})$, where $\ell$ will be determined later (and depend on our application of \cref{lem:pure-disentangler}).  
As described in \cref{eq:discard-quantifiers}, the verifier $V_x'$ discards $3r$ of these messages and simulates an $r$-round protocol in which each round-$i$ message has the product form $\rho_i = \rho_{i,1}\otimes\dotsm\otimes\rho_{i,4}$.
Thus, it suffices to prove
  \begin{subequations}
    \begin{alignat}{2}
      x&\in \Ayes &\quad\Rightarrow\quad& \exists \rho_1\dotsm Q_r \rho_r\colon P'_x(\rho_1\otimes\dotsm\otimes\rho_r)\ge c'(n),\label{eq:P'yes}\\
      x&\in \Ano &\quad\Rightarrow \quad& \forall \rho_1\dotsm \overline{Q_r}\rho_r\colon P'_x(\rho_1\otimes \dotsm\otimes\rho_r) \le s'(n),\label{eq:P'no}
    \end{alignat}
  \end{subequations}
where $P'_x(\rho)$ denotes the acceptance probability of $V_x'$ and each $\rho_i$ is restricted to the four-register product form above.
Given $\rho_1, \dots, \rho_r \in \calD(\calH^{\otimes \ell})$ as input with $\rho_i = \rho_{i,1} \otimes \dots \otimes \rho_{i,4}$, the verifier $V_x'$ acts as follows:

\begin{enumerate}
\item (\emph{Determine the canonical transcript}) For $i=1,\dots,r$:
    \begin{enumerate}
    \item (\emph{Disentangle}) 
    Let $\Gamma$ denote the disentangler from \cref{lem:pure-disentangler}.  
    For the $i$th iteration, we write $\Gamma_i$, since each application will act on a different-sized Hilbert space.  
    Define $\rho_i' = \Gamma_i(\rho_i)$, where the parameters $K,\delta$ will be specified later.  
     By \cref{lem:pure-disentangler}, there exists a mixed state $\eta_i = \sum_{k=1}^{m_i}p_{ik}\ketbra{\zeta_{i}^{(k)}}{\zeta_i^{(k)}}^{\otimes K}$ with $m_i=O(\delta^{-2})$ such that $\norm{\eta_i - \rho_i'}_1 \le \delta$. 
      Additionally, each $\ket{\zeta_i^{(k)}}$ can be written as $\ket{\zeta_{i}^{(k)}} = \bigotimes_{j=1}^{M_i}\ket{T_{{ij}}^{(k)}}\ket{\psi_{ij}^{(k)}}$ for $M_i = M_{i-1}m_{i-1}$ (and $M_1 = 1$), $\ket{T_{ij}^{(k)}} \in \calH^{\otimes (i-1)}$, $\ket{\psi_{ij}^{(k)}} \in \calH$.
      For analysis, fix a pure branch $\ket{\zeta_{i}} = \bigotimes_{j=1}^{M_i}\ket{T_{{ij}}}\ket{\psi_{ij}}$ from this distribution (pretending the verifier $V'_x$ receives a random pure state from the mixed state).
      Each $\ket{T_{ij}}\ket{\psi_{ij}}$ is a transcript-answer pair; i.e., $\ket{\psi_{ij}}$ is the prover's message on the $i$th round conditioned on $\ket{T_{ij}}$ being the transcript for the previous $i-1$ rounds. Note that $M_i$ grows with each round because the transcript gets progressively longer each round.
      
      \item (\emph{Select player response to current transcript}) If $i=1$, set $\ket{\phi_1} \coloneq \ket{\psi_{i,1}}$ and $\ket{C_1} \coloneq \ket{\phi_{1}}$.
      Otherwise, for each $j=1,\dots,M_i$, perform $W$ (to be determined later) SWAP tests between $\ket{C_{i-1}}$ and $\ket{T_{ij}}$ (choose $K$ sufficiently large to have enough copies $\ket{C_{i-1}}$ and $\ket{T_{ij}}$).\footnote{Note that each $\ket{\zeta_i}$ contains a copy of each transcript and that we have $K$ copies of $\ket{\zeta_i}$.}
      If all SWAP tests accept for some $j$, set $\ket{C_i} \coloneq \ket{T_{ij}}\otimes\ket{\psi_{ij}}$.
      Else, let the current player lose, i.e., accept if $i$ is even and reject if $i$ is odd.
    \end{enumerate}
    \item Simulate $V_x$ on fresh copies of $\ket{C_r}$ $T$ times and accept if the number accepting runs, $\Nacc$, satisfies $\Nacc \ge T(c+s)/2$.
  \end{enumerate}

  \emph{Completeness.}
  Let $x\in \Ayes$.
  By \cref{eq:Pxyes}, Alice (first player, odd rounds) can always win with probability at least $c(n)$, regardless of which pure state Bob sends in the even rounds.
  Let $\ket{\alpha_1}$ be the best state Alice can choose in round $1$.
  For any Bob message $\ket{\beta_2}$ in round $2$, there exists $\ket{\alpha_3(\beta_2)}$, such that Alice can win with probability $\ge c(n)$.
  Inductively define $\ket{\alpha_i(\beta_2,\dots,\beta_{i-1})}$ as Alice's best response in round $i$, given Bob's messages $\ket{\beta_2},\dots,\ket{\beta_{i-1}}$ and Alice's messages $\ket{\alpha_1},\dots, \ket{\alpha_{i-2}(\beta_2,\dots,\beta_{i-3})}$.

  We need to show that \cref{eq:P'yes} holds.
  We can analyze $V_x'$ on the disentangled states as
  \begin{equation}\label{eq:Px''}
    \abs[\big]{P_x'(\rho_1\otimes\dotsm\otimes\rho_r) - P_x''(\eta_1\otimes\dotsm\otimes\eta_r)} \le r\delta/2,
  \end{equation}
  where $P_x''(\eta_1\otimes\dotsm\otimes\eta_r)$ denotes the acceptance probability of $V_x'$ when each $\rho_i'$ is replaced by $\eta_i$.
  For Alice's rounds, we can assume $\eta_i = \rho_i' = \Gamma_i(\rho_i)$ since Alice can always send a state of the correct product form.
  Further, Alice's answer in round $i$ may depend on $\eta_{1},\dots,\eta_{i-1}$, since $\rho_{j}'$ only depends on $\rho_j$, and the $\rho_j' \approx \eta_j$ approximation is merely an analytical tool and so Alice can choose any $\eta_i$ satisfying \cref{lem:pure-disentangler}.

  For $\rho_1$, Alice simply sends $4\ell$ copies of $\ket{\alpha_1}$.
  Now consider odd round $i>1$.
  There are $M_i = M_{i-1}m_{i-1}$ possible choices for the canonical transcript $\ket{C_{i-1}}$ (which includes Bob's message) after round $i-1$.
  Alice sends $4\ell$ copies of $\bigotimes_{j=1}^{M_i}\ket{T_{ij}}\ket{\alpha(T_{ij})}$, where $\ket{\alpha(T_{ij})}$ denotes Alice's best answer given transcript $\ket{T_{ij}}$ in the first $i-1$ rounds.
  This let's Alice pass the SWAP test in (1b) with probability $1$ (in the $P_x''(\eta_1\otimes\dotsm\otimes\eta_r)$ analysis with $\eta_i$ chosen by Alice).

  We argue that $V_x'$ always chooses a transcript that is accepted with probability almost $c$.
  There are two sources of error.
  The first is Bob cheating and altering the transcript, so that the verifier selects $\ket{T_{ij}}\ne \ket{C_{{i-1}}}$ in step (1b) of Bob's round.
  For $\td(\ketbra{T_{ij}}{T_{ij}}, \ketbra{C_{i-1}}{C_{i-1}})\le \epsilon$, Alice's chance of winning decreases by at most $\epsilon$. 
  If $\td(\ketbra{T_{ij}}{T_{ij}}, \ketbra{C_{i-1}}{C_{i-1}}) = \sqrt{1-\abs{\braket{C_{i-1}|T_{ij}}}^2} \ge \epsilon$, then
  $\abs{\braket{C_{i-1}|T_{ij}}}^2 \le 1-\epsilon^2$
  and the probability of all $W$ SWAP tests accepting is
  \begin{equation}\label{eq:prob-wrong-transcript}
    \pars*{\frac12 + \frac12\abs{\braket{C_{i-1}|T_{ij}}}^2}^W \le \pars*{1 - \frac{\epsilon^2}{2}}^W \le e^{-W\epsilon^2/2} \le \frac{1}{4qrM_r}
  \end{equation}
  for $W = \lceil2\epsilon^{-2}\ln (4qrM_r)\rceil$ with $\epsilon = \gamma/4r$ and $\gamma = c-s$.
  The second source of error is (1b) choosing a wrong $\ket{T_{ij}}$ for $\ket{C_i}$ in Alice's round.
  Alice does not lose in (1b), but there may be multiple $\ket{T_{ij}}$ close to $\ket{C_{i-1}}$.
  Again, Alice's winning probability decreases by at most $\td(\ketbra{T_{ij}}{T_{ij}}, \ketbra{C_{i-1}}{C_{i-1}})$, and the probability of choosing a ``bad'' transcript is bounded by \cref{eq:prob-wrong-transcript}.
  We can take the union bound over all rounds and entries in the tables to bound the probability that a bad transcript is selected by $1/(4q)$.
  Thus, Alice's winning probability decreases at most $r\epsilon=\gamma/4$ in total, which gives
  \begin{equation}
  \Pr[P_x(C_r) \ge c-\gamma/4] \ge 1-\frac{1}{4q},
  \end{equation}
  where the probability is taken over the choice of $\ket{\zeta_i^{(k)}}$ in step (1a) and outcome of the SWAP tests in (1b).
  Assuming $P_x(C_r)\ge c-\gamma/4$ and thus $\E[\Nacc]\ge (c-\gamma/4)T$, the probability of $V'_x$ rejecting in step 2 can be bounded with Hoeffding's inequality
  \begin{equation}
  \Pr[\Nacc\le (c-\gamma/2)T] \le \exp\pars*{-\frac{2(\gamma T/4)^2}{T}} \le \exp(-\gamma^2 T/8) \le \frac1{4q},
  \end{equation}
  for $T = \lceil8\gamma^{-2}\ln (4q)\rceil$.
  Setting $\delta = 1/(rq)$, and taking into account the disentangler error $1/(2q)$ of \cref{eq:Px''}, Alice wins with probability $\ge 1-1/q$.
  We have now assigned all parameters to polynomials in $n$.
  For all of the SWAP tests and simulations of $V_x$, we need $K \ge W\cdot rM_r + T$ copies of each message.
  Note $M_i$ grows exponentially in $r=O(1)$.

  \emph{Soundness.} For $x\in \Ano$ the analysis is analogous, just swapping the roles of Alice and Bob, i.e., `$\exists$' is now Bob.
  The only difference is that now the second player wins, which is insignificant for the above analysis.
\end{proof}

\section{Quantified Hamiltonian Complexity}

In this section, we initiate the study of quantified Hamiltonian problems. 
Our primary motivation is to identify complete problems for the various definitions of $\QPH$ to better understand these classes and the relationship among their different variants.
At the same time, quantified Hamiltonian problems are natural in their own right: they naturally generalize quantified Boolean formulae from classical complexity theory~\cite{stockmeyer1973word} to the quantum Hamiltonian setting.  
From a physical perspective, these problems capture robust versions of ground-state questions. For example, these problems allow us to ask: ``Does there exist a state on one subsystem such that, no matter how the rest of the system is perturbed, the total system remains in a low-energy state?''

\subsection{Quantified Hamiltonian Problems}

We now formally define the quantified Hamiltonian problems studied in this work. 
There are four natural variants we consider, determined by the following choices: (i) whether the quantified states are restricted to be pure or may be mixed, and (ii) whether the Hamiltonian is local or sparse.

\begin{definition}[$\exists \forall$-$k$-LH]
Let $2n$ be the number of qubits, $k \geq 1$ a fixed constant, and $a,b \in \R$ satisfying $b - a \geq 1/\poly(n)$. Given a $k$-local Hamiltonian $H$ as input, the task is to decide, under the promise that one of these holds:
\begin{itemize}
    \item (YES case): $\exists\rho\forall\sigma\colon \tr(H(\rho\otimes\sigma)) \le a$, or 
    \item (NO case): $\forall\rho\exists\sigma\colon \tr(H(\rho\otimes\sigma)) \ge b$.
\end{itemize}
We write $\exists\forall$-$k$-MLH (mixed, local Hamiltonian) for the version where $\rho$ and $\sigma$ may be mixed states and $\exists\forall$-$k$-PLH (pure, local Hamiltonian) for the version where $\rho = \ketbra{\psi}{\psi}$ and $\sigma = \ketbra{\phi}{\phi}$ are pure states.
\end{definition}

To move from local to sparse Hamiltonians, we replace the locality constraint with a sparsity condition on the input operator.

\begin{definition}[Row-sparse operators]\label{def:row-sparse}
   An operator $A$ is row-sparse if: 
   \begin{itemize}
       \item Each row of $A$ has at most $\poly(n)$ nonzero entries, and
       \item There exists a polynomial-time algorithm which, given a row index $i$, outputs the list of all pairs $(i, A_{ij})$ such that $A_{ij} \neq 0$.
   \end{itemize}
\end{definition}

We can now extend the quantified local Hamiltonian problems to their sparse-Hamiltonian counterparts. We define the problems below for an arbitrary constant number of quantifiers, as we will later prove completeness at every level.

\begin{definition}[Quantified sparse Hamiltonian problems]\label{def:PSH}
  Let $i\in\NN$ and fix a polynomial $q$.
  The promise problem $\MSHSigmai$ is defined as follows:
  \begin{itemize}
    \item (Input): A $d$-sparse Hamiltonian $H$ on $n=n_1+\dotsm+n_i$ qubits with $\maxnorm{H}\le q(n)$, thresholds $a,b\in\RR$ with $b-a\ge 1/q(n)$, and $H$ is defined by a circuit that given $r$ outputs all entries in row $r$ (see \cref{def:row-sparse}).\footnote{The parameters $d,n$ are implicitly bounded in terms of input size via the circuit description of $H$.}
    \item (YES case): $\exists \rho_1\forall \rho_2\dotsm Q_i\colon \tr(H(\rho_1\otimes\dotsm\otimes\rho_i))\le a$.
    \item (NO case): $\forall \rho_1 \exists \rho_2\dotsm \overline{Q_i} \rho_i\colon \tr(H(\rho_1\otimes\dotsm\otimes\rho_i))\ge b$.
  \end{itemize}
    Here, $Q_i$ is $\exists$ when $i$ is odd and $\forall$ when $i$ is even, and $\overline{Q_i}$ is the complementary quantifier.
    Each $\rho_j$ is quantified over $\calD(\calH_j)$ with $\calH_j = \CC^{2^{n_j}}$.

The pure variant $\PSHSigmai$ is defined identically, except that $\rho_1,\dots,\rho_i$ are restricted to pure states.
Finally, $\MSHPii$ and $\PSHPii$ are obtained by inverting all quantifiers.
\end{definition}

Generally, a problem is in $\pureQSigmai$ if and only if its complement is in $\pureQPii$.
Although $\PSHSigmai$ is not equal to the complement of $\PSHPii$, there is a trivial poly-time reduction.

\begin{lemma}\label{lem:complement}
  $\PSHPii \le_p \overline{\PSHSigmai}$ and $\PSHSigmai \le_p \overline{\PSHPii}$ for all $i\in\NN$, i.e., $\PSHPii$ is the complement of $\PSHSigmai$, up to poly-time many-one (aka Karp) reductions. The analogous statement holds for the mixed/sparse, pure/local, and mixed/local variants. 
\end{lemma}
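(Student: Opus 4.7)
The plan is to exhibit a single ``negation'' reduction that handles both directions uniformly: send a $\PSHPii$ instance $(H,a,b)$ to the $\overline{\PSHSigmai}$ instance $(-H, -b, -a)$ with the same number of qubits and identical block sizes $n_1,\dots,n_i$, and do the symmetric thing for the other reduction. The first step is to verify that this map is a valid poly-time Karp reduction into the target problem. A sparse-access circuit for $-H$ is obtained from the circuit for $H$ by negating the returned matrix entries, so sparsity, the max-norm bound $\maxnorm{-H}=\maxnorm{H}$, and the row-access complexity are preserved. The new thresholds $a'\coloneqq -b$ and $b'\coloneqq -a$ satisfy $b'-a' = b-a \ge 1/q(n)$, and the Hilbert-space factorization $\calH_1\otimes\dotsm\otimes\calH_i$ is unchanged.

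The second step is the correctness check, which reduces to the linear identity $\tr((-H)\rho) = -\tr(H\rho)$. Under this identity, $\tr(H\rho)\le a$ is equivalent to $\tr((-H)\rho)\ge b'$, and $\tr(H\rho)\ge b$ is equivalent to $\tr((-H)\rho)\le a'$. Plugging these into the YES and NO clauses of $\PSHPii$ on $(H,a,b)$ yields exactly the NO and YES clauses of $\PSHSigmai$ on $(-H,a',b')$, which by definition are the YES and NO clauses of $\overline{\PSHSigmai}$. The key subtlety to double-check is that the quantifier prefix itself is \emph{not} altered by the reduction; the swap $\exists\leftrightarrow\forall$ needed to pass between $\Pi_i$- and $\Sigma_i$-style prefixes is provided ``for free'' by the YES/NO complementation built into $\overline{\PSHSigmai}$. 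The reverse reduction $\PSHSigmai\le_p\overline{\PSHPii}$ is obtained by an identical calculation.

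Finally, I would observe that the argument is insensitive to the other two choices in \cref{def:PSH}: negation preserves $k$-locality exactly as it preserves sparsity (writing $H=\sum_j H_j$ with each $H_j$ acting on at most $k$ qubits, we get $-H=\sum_j(-H_j)$ with the same locality), and the pure-vs-mixed restriction is applied identically on both sides of the map. Hence the same reduction establishes the analogous statement for the mixed/sparse, pure/local, and mixed/local variants verbatim. I do not expect a genuine obstacle here---the entire argument is routine bookkeeping---but the step most prone to error is lining up the YES clause of $\PSHPii$ with the YES clause of $\overline{\PSHSigmai}$ in the second paragraph, so I would write out that correspondence explicitly, once and for all $i$, before declaring the proof complete.
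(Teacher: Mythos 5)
Your proposal is correct and takes essentially the same approach as the paper; the paper simply asserts $(H,a,b)\in(\PSHPii)_\yes \Leftrightarrow (-H,-b,-a)\in(\PSHSigmai)_\no$ and cites \cref{def:PSH}, while you spell out the routine verifications (preservation of sparsity, norm and gap bounds, and the trace-linearity identity) that make that assertion go through.
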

\begin{proof}
  $(H,a,b)\in (\PSHPii)_\yes \;\Longleftrightarrow\; (-H,-b,-a)\in(\PSHSigmai)_{\no}$ follows directly from \cref{def:PSH}.
\end{proof}

In the remainder of this section, we establish containment and hardness results for the quantified Hamiltonian problems defined above.
The results we obtain for the two-quantifier versions are summarized in \cref{tab:hamiltonian-summary}; generalizing these to more quantifiers is relatively straightforward.

\begin{table}[ht]
\centering
\renewcommand{\arraystretch}{1.4} %
\setlength{\tabcolsep}{12pt} %
\begin{tabular}{c|c|c}
   & \textbf{Local} & \textbf{Sparse} \\ \hline
\textbf{Mixed} & $\in \NP^{\QMA} \cap \coNP^{\QMA}$ (\cref{cor:npqma-conpqma}) & $\in \QSigmai[2]$ (\cref{prop:msh}) \\ \hline
\textbf{Pure}  & $\in \NP^{\pureSuperQMA}$ (\cref{prop:puresuperqma}) & $\pureQSigmai[2]$-complete (\cref{thm:qph-completeness})
\end{tabular}
\caption{Variants of the $\exists\forall$-quantified Hamiltonian problems. All variants are contained in $\PSPACE$ 
except for the pure/sparse case, where the best known upper bound is $\NEXP$.}
\label{tab:hamiltonian-summary}
\end{table}

\subsection{Quantified Local Hamiltonian: \texorpdfstring{$\NP$ with a $\QMA$ Oracle}{NP with a QMA Oracle}}

We begin by analyzing the pure/local and mixed/local variants of the quantified Hamiltonian problem.
In particular, we show that these solved by oracle classes of the form $\NP^O$ for a promise class $O$, and we refer the reader to \cref{def:promise-oracle} for a definition of $\NP$ with an oracle to a promise class.

A key step in our proofs is checking the consistency of local density matrices: given a collection of reduced density matrices, does there exist a global quantum state that is consistent with all of them?
If the global state is allowed to be mixed, checking consistency is known to be $\QMA$-complete~\cite{Liu06}. If, instead, one must decide whether there exists a global pure state consistent with the reduced density matrices, the problem is $\pureSuperQMA$-complete~\cite{kamminga2025complexitypurestateconsistencylocal}.

Because the Hamiltonian is local, each term in $H$ acts on only a constant number of qubits. This means that for any purported proof state, it suffices for the prover to supply the reduced density matrices on just those local subsystems. The first witness for the quantified Hamiltonian problem can therefore be succinctly described by a classical list of local density matrices. The remaining task---verifying that these matrices are consistent with a true quantum state---can be outsourced to a $\QMA$ oracle.
We formalize this now.

\begin{proposition}\label{prop:existsforall}
  $\exists\forall$-$k$-$\mathrm{MLH} \in \NP^{\Vert \QMA[2]}$.
\end{proposition}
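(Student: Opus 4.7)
The plan is to have the $\NP$ prover send, as its classical witness, a list of the local reduced density matrices of the purported state $\rho$, and to resolve the two remaining quantum checks---consistency of those reduced density matrices with a global state, and the energy upper bound over the universal prover's register---using two parallel $\QMA$ queries.

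Concretely, write $H=\sum_j h_j$ where each $h_j$ is $k$-local with support partitioned as $A_j$ (inside the first subsystem) and $B_j$ (inside the second). The witness is a classical list $\{\tilde\rho_{A_j}\}_j$, each described to polynomially many bits. The first parallel query asks whether there exists a global state on the first subsystem whose reductions onto the $A_j$ equal the claimed $\tilde\rho_{A_j}$; this local consistency problem is $\QMA$-complete by Liu~\cite{Liu06}. The second query is on the residual Hamiltonian
\[
  H' \coloneqq \sum_j \tr_{A_j}\!\bigl(h_j(\tilde\rho_{A_j}\otimes I_{B_j})\bigr),
\]
which is itself a $k$-local Hamiltonian on the second subsystem. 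By negating $H'$ and swapping the thresholds, the question of whether $\lmax(H')\le a$ or $\lmax(H')\ge b$ reduces to a standard local Hamiltonian instance, hence to a single $\QMA$ query. The $\NP$ machine accepts iff the consistency query accepts and the energy query confirms $\lmax(H')\le a$; since both queries are functions only of the classical $\NP$ witness (and $H$), they are issued non-adaptively, which matches the $\NP^{\Vert\QMA[2]}$ bound.

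For completeness, in a yes-instance an honest prover supplies the exact reductions of a witnessing $\rho^\star$: consistency then holds and $\tr(H'\sigma)=\tr(H(\rho^\star\otimes\sigma))\le a$ for every $\sigma$, so both oracles return the desired answers. For soundness, in a no-instance any candidate $\{\tilde\rho_{A_j}\}$ is either inconsistent---in which case the first oracle rejects and the $\NP$ machine is designed to reject robustly regardless of the second oracle's reply---or is consistent with some global $\rho$, in which case the no-case promise yields a $\sigma$ with $\tr(H'\sigma)=\tr(H(\rho\otimes\sigma))\ge b$, forcing the second oracle to reject.

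The chief subtlety is robustness under promise oracles (\cref{def:promise-oracle}): when the claimed reductions are not consistent with any global state, $H'$ may land in the promise gap of the local Hamiltonian problem, and the second oracle is then free to respond arbitrarily. This is handled cleanly by wiring the $\NP$ machine to reject whenever the consistency query rejects, independently of the second answer. A minor secondary point is that the classical precision of $\tilde\rho_{A_j}$ must be small compared to $b-a\ge 1/\poly(n)$, which is routine because each $A_j$ has constant size and only $\poly(n)$ reductions need be specified.
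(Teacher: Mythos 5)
Your proposal matches the paper's argument essentially step for step: the $\NP$ witness is the list of local reduced density matrices of $\rho$; one $\QMA$ query checks Liu-consistency; the Hamiltonian is compressed to $H'$ acting on the second subsystem; and a second (non-adaptive) query decides the spectral threshold of $H'$ — you phrase it as a $\QMA$ query on $-H'$ while the paper phrases it as a $\coQMA$ query, but these are the same. Your treatment of robustness is, if anything, a bit more explicit than the paper's; the one remaining informality (present in both write-ups) is that when the claimed reductions fall in the \emph{promise gap} of the consistency problem, the first oracle can answer arbitrarily and your clean dichotomy "inconsistent or exactly consistent'' doesn't literally hold — one must additionally observe that in that regime the reductions are still $\beta$-close to those of some genuine global state, so $H'$ is close in operator norm to the honest compressed Hamiltonian and the energy query still falls on the rejecting side after a constant-factor tightening of thresholds.
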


\begin{proof}
  Let $H = \sum_i H_i$ be the given $k$-local Hamiltonian.  
  The $\NP$ prover provides the collection of reduced density matrices of the candidate state $\rho$ on the supports of the local terms $H_i$.  

  First, the verifier checks that these reduced density matrices are \emph{consistent} with some global state. This can be done using a single $\QMA$ query, since consistency of local density matrices is $\QMA$-complete~\cite{Liu06}.  

  Next, for each term $H_i$, the verifier computes an effective operator 
  \[
    H_i' \;=\; \sum_j p_{ij}\,\bra{\psi_{ij}} H_i \ket{\psi_{ij}},
  \]
  where $\rho_i = \sum_j p_{ij} \ketbra{\psi_{ij}}{\psi_{ij}}$ is the reduced density matrix of $\rho$ on the qubits that $H_i$ acts upon.  
  Let $H' = \sum_i H_i'$. Then $H'$ is an operator acting only on the Hilbert space corresponding to the $\forall$-prover's state $\sigma$.  

  Finally, the verifier queries a $\coQMA$ oracle to check whether
  \[
    \forall \sigma \colon \tr(H' \sigma) \;\ge\; b.
  \]
  Since the procedure requires only one $\QMA$ query (for consistency) and one $\coQMA$ query (which can be implemented using $\QMA$), the entire protocol lies in $\NP^{\Vert \QMA[2]}$.
\end{proof}

It turns out that $\NP^{\Vert\QMA[2]} = \NP^\QMA$. 

\begin{proposition}\label{prop:NP^QMA}
  $\NP^{\QMA} \subseteq \NP^{\Vert\QMA[2]}$.
\end{proposition}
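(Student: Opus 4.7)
The strategy I would take is the standard nondeterministic compression of adaptive queries: guess every oracle answer in advance, then batch the verification into two parallel queries, one confirming the guessed YES answers and one refuting the guessed NO answers. This mirrors the classical reduction showing $\NP^\NP \subseteq \NP^{\Vert\NP[2]}$, but must be adapted to handle the $\QMA$ promise structure.

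Given $A \in \NP^\QMA$ decided by a polynomial-time machine $M$ making up to $p(n)$ adaptive queries to a $\QMA$ promise oracle $O$, I would construct an $\NP^{\Vert\QMA[2]}$ machine $M'$ as follows. On input $x$, $M'$ nondeterministically guesses the original $\NP$ witness $y$ together with a bit string $a \in \{0,1\}^{p(n)}$ of purported oracle answers, then simulates $M(x,y)$ using $a_i$ as the response to the $i$th query, rejecting immediately if the simulation rejects. Otherwise $M'$ issues two parallel $\QMA$ queries. The first, $Q_1$, is the ``AND of $\QMA$'' instance asking whether there exist witnesses making every verifier $V_{q_i}$ accept for each $i$ with $a_i = 1$; a single $\QMA$ witness for $Q_1$ is the concatenation of individual witnesses, and standard amplification keeps the completeness--soundness gap above the polynomial union bound. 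The second, $Q_2$, is the ``OR of $\QMA$'' instance asking for some $i$ with $a_i = 0$ together with a witness making $V_{q_i}$ accept; its $\QMA$ witness is $(i, w)$. $M'$ accepts iff $Q_1$ returns YES and $Q_2$ returns NO.

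For correctness I would argue completeness and soundness separately. Completeness: for $x \in \Ayes$, invoke robust acceptance of $M$ to pick $y$ such that $M^O(x,y)$ accepts under every consistent oracle behavior, and let $a_i$ be the true label of $q_i$ on every valid query. Then $Q_1$ is a valid YES $\QMA$ instance (the honest witnesses concatenate into an accepting witness) and $Q_2$ is a valid NO $\QMA$ instance (no claimed-NO query admits any accepting $\QMA$ witness), so both parallel queries return the required answers and $M'$ accepts robustly. Soundness: for $x \in \Ano$, if $M'$ accepts on some $(y,a)$, then $Q_1$ returning YES certifies that every query with $a_i = 1$ admits an accepting witness and is therefore not a NO instance, while $Q_2$ returning NO certifies that every query with $a_i = 0$ is a NO instance. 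Hence $a$ is consistent with a legitimate execution of $M(x,y)$ against $O$, which must accept, contradicting robust rejection of $M$.

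The main obstacle I anticipate is the $\QMA$ promise: if $M$ ever issues a query $q_i$ outside the promise of $O$, the batched instances $Q_1$ and $Q_2$ can themselves fall outside the $\QMA$ promise, in which case the outer $\Vert\QMA[2]$ oracle is not required to answer consistently. I expect this to be discharged using the robustness clause of \cref{def:promise-oracle} applied at both levels: in the YES case the $\NP$ prover has the freedom to classify invalid queries $q_i$ in whichever direction keeps $Q_1$ and $Q_2$ inside their respective promises, exploiting the fact that $M$ accepts under every consistent extension of the oracle's valid answers; in the NO case any accepting $(y,a)$ forces the valid queries to be correctly labeled by the two parallel responses, and robust rejection rules out any completion of the labels on invalid queries that could make $M(x,y)$ accept.
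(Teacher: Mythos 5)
Your high-level plan — nondeterministically guess all adaptive oracle answers, then confirm them using a small number of nonadaptive queries — is exactly the paper's strategy, and your soundness argument matches. But there is a genuine gap in the completeness direction, and it is precisely the point you flag as the "main obstacle." You hope that "the $\NP$ prover has the freedom to classify invalid queries $q_i$ in whichever direction keeps $Q_1$ and $Q_2$ inside their respective promises," but with your construction — a straight AND and a straight OR of raw $\QMA$ sub-instances — no such direction exists. An invalid inner query $q_i$ is neither a YES nor a NO instance of $\QMA$. If you assign $a_i = 1$, then $Q_1$ has a conjunct that is not YES, so $Q_1$ cannot be a valid YES instance (and need not be a valid NO instance either). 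If you assign $a_i = 0$, then $Q_2$ has a disjunct that is not NO, so $Q_2$ cannot be a valid NO instance. Either way the invalid inner query poisons whichever batch it is placed in, and an invalid batched query may legitimately be answered "wrong" by the outer oracle, breaking robust acceptance for $M'$.

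The paper avoids this by working concretely with the local Hamiltonian formulation of $\QMA$ and shifting the decision threshold to the midpoint $m_i = (a_i + b_i)/2$ of each inner query's promise gap. Query 1 (for the claimed-NO set $J_0$) asks "$\forall i\in J_0\colon \lmin(H_i)\ge m_i$ versus $\exists i\in J_0\colon \lmin(H_i)\le a_i$," and Query 2 is the mirror image for $J_1$. The honest prover then classifies each query by whether $\lmin(H_i)\le m_i$. Because every query, valid or not, has $\lmin(H_i)$ falling unambiguously on one side of $m_i$, both batched queries are always valid $\QMA$ (or $\coQMA$) promise instances. That threshold-shifting slack is exactly what your AND/OR construction is missing; without it, the completeness argument does not go through. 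To repair your proof you would need to replace the raw AND/OR of the original $\QMA$ verifiers with a relaxed, midpoint-shifted batching along these lines, not rely on a prover-chosen direction that does not exist.
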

\begin{proof}
  Let $A\in \NP^\QMA$ be a promise problem.
  Consider an $\NP^{\QMA}$ verifier Turing machine $M$ for $A$.
  Construct $M'$ that only asks $2$ $\QMA$-queries.
  $M'$ receives as proof the original proof $y$ of $M$, as well as outcomes $z_1,\dots,z_m\in\{0,1\}$ to all queries of $M$.
  Assume the queries of $M$ are of the form $(H_i,a_i,b_i)$ for Hamiltonians $H_i$ with thresholds $a_i,b_i$.
  $M'$ can compute the $i$-th query by simulating $M$ with query answers $z_1,\dots,z_{i-1}$.
  $M'$ asks two $\QMA$ queries, one for all queries with $z_i=0$ and one for $z_i=1$.
  Let $m_i := (a_i+b_i)/2$.
  \begin{enumerate}
    \item Let $J_0 = \{i:z_i=0\}$.
    \begin{enumerate}
      \item[YES.] $\forall i\in J_0\colon \lmin(H_i) \ge m_i$.
      \item[NO.] $\exists i\in J_0\colon\lmin(H_i)\le a_i$.
    \end{enumerate}
    \item Let $J_1 = \{i:z_i=1\}$.
    \begin{enumerate}
      \item[YES.] $\forall i\in J_1\colon \lmin(H_i) \le m_i$.
      \item[NO.] $\exists i\in J_1\colon\lmin(H_i)\ge b_i$.
    \end{enumerate}
  \end{enumerate}
  Each can be done in a single query since $\QMA$ is closed under intersection and union.
  If both answers are YES, then $M'$ can fully simulate $M$.
  Otherwise $M'$ rejects.

  \emph{Completeness.}
  Let $x\in\Ayes$.
  Then $M$ has a robustly accepting branch (i.e., $M$ accepts regardless of the answers to invalid queries, see \cref{def:promise-oracle}).
  Assume the prover sends $z_i=1$ if $\lmin(H_i)\le m_i$ and $z_i=0$ otherwise.
  Then both queries are valid and the prover accepts.

  \emph{Soundness.}
  Let $x\in\Ano$.
  Then all branches of $M$ reject robustly.
  If the prover sends a wrong $z_i$ (to a valid query of $M$), then one of the two queries will reject and $M'$ rejects.
  So if both queries accept, $M'$ will still reject because $M$ rejects robustly.
\end{proof}

It is not clear whether a single query in \cref{prop:NP^QMA} suffices to simulate $\NP^\QMA$, because one query is a $\QMA$-query and the other is a $\coQMA$-query.

By \cref{lem:complement}, we immediately get an analogous result for $\forall\exists$-$k$-MLH.

\begin{proposition}\label{prop:forallexists}
$\forall\exists$-$k$-$\mathrm{MLH} \in \coNP^\QMA$.
\end{proposition}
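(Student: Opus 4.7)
The plan is to deduce this directly from \cref{prop:existsforall,prop:NP^QMA} together with the complement correspondence established (in the sparse case) by \cref{lem:complement} and explicitly claimed for the local/mixed case in the same lemma. The key observation is that $\forall\exists$-$k$-MLH is, up to a trivial poly-time transformation, the complement of $\exists\forall$-$k$-MLH.

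Concretely, given an instance $(H,a,b)$ of $\forall\exists$-$k$-MLH, I would consider the instance $(H',a',b') = (-H,-b,-a)$. Then
\begin{align*}
(H,a,b)\in(\forall\exists\text{-}k\text{-MLH})_\yes
&\iff \forall\rho\,\exists\sigma\colon\tr(H(\rho\otimes\sigma))\le a\\
&\iff \forall\rho\,\exists\sigma\colon\tr(H'(\rho\otimes\sigma))\ge -a = b',
\end{align*}
which is precisely the NO condition for $(H',a',b')$ viewed as an $\exists\forall$-$k$-MLH instance, and symmetrically the NO case of $\forall\exists$-$k$-MLH on $(H,a,b)$ maps to the YES case of $\exists\forall$-$k$-MLH on $(H',a',b')$. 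Note that $-H$ is still $k$-local with the same promise gap $b'-a' = b-a \ge 1/\poly(n)$, so this is a valid poly-time many-one reduction to the complement problem, matching the statement in \cref{lem:complement}.

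Having set up the reduction, the rest is a one-line invocation: by \cref{prop:existsforall}, $\exists\forall$-$k$-MLH lies in $\NP^{\Vert\QMA[2]}$, and by \cref{prop:NP^QMA} this is contained in $\NP^\QMA$. Taking complements of both sides of the reduction, $\forall\exists$-$k$-MLH reduces to the complement of a problem in $\NP^\QMA$, which by definition lies in $\coNP^\QMA$.

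There is no real obstacle here; the only subtlety worth double-checking is that the complement reduction preserves all required syntactic conditions (locality of the Hamiltonian, the direction of the promise gap, and polynomial bounds on $\maxnorm{H}$), which it plainly does since negating $H$ and swapping/negating the thresholds is a trivial transformation. I would phrase the proof as a two-line argument citing \cref{lem:complement}, \cref{prop:existsforall}, and \cref{prop:NP^QMA}.
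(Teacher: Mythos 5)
Your proposal is correct and takes essentially the same route as the paper: invoke \cref{lem:complement} to reduce $\forall\exists$-$k$-MLH to the complement of $\exists\forall$-$k$-MLH, and combine this with \cref{prop:existsforall,prop:NP^QMA} which place $\exists\forall$-$k$-MLH in $\NP^\QMA$. You merely unpack the negation reduction $(H,a,b)\mapsto(-H,-b,-a)$ explicitly, whereas the paper cites \cref{lem:complement} without re-deriving it.
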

\begin{proof}
Because $\exists\forall$-$k$-MLH is in $\NP^\QMA$, it's immediate that $\overline{\exists\forall\text{-$k$-MLH}}$ is in $\coNP^\QMA$. 
\cref{lem:complement} implies that there is a reduction from $\forall\exists$-$k$-MLH to $\overline{\exists\forall\text{-$k$-MLH}}$, which completes the proof.
\end{proof}

\begin{corollary}\label{cor:npqma-conpqma}
$\exists\forall$-$k$-$\mathrm{MLH} \in \NP^\QMA \cap \coNP^\QMA$.
\end{corollary}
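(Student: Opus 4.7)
The plan is to establish the two containments $\exists\forall$-$k$-MLH $\in \NP^\QMA$ and $\exists\forall$-$k$-MLH $\in \coNP^\QMA$ separately, each building directly on results established earlier in the section.

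For the first containment, I would combine \cref{prop:existsforall}, which places $\exists\forall$-$k$-MLH inside $\NP^{\Vert\QMA[2]}$, with the trivial inclusion $\NP^{\Vert\QMA[2]} \subseteq \NP^\QMA$ (two parallel oracle queries can always be simulated by two sequential adaptive queries). This immediately gives $\exists\forall$-$k$-MLH $\in \NP^\QMA$, with no additional work required.

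For the second containment, note that \cref{prop:forallexists} only gives $\forall\exists$-$k$-MLH $\in \coNP^\QMA$, so the real content is to transfer this result to $\exists\forall$-$k$-MLH. The key tool is the minimax theorem: since $f(\rho,\sigma) \coloneqq \tr(H(\rho \otimes \sigma))$ is bilinear in $\rho$ and $\sigma$, and the sets of density matrices on each subsystem are compact and convex, von Neumann's minimax theorem yields
\[
\min_\rho \max_\sigma \tr(H(\rho \otimes \sigma)) \;=\; \max_\sigma \min_\rho \tr(H(\rho \otimes \sigma)).
\]
Consequently, $\exists\rho\,\forall\sigma\colon f(\rho,\sigma)\le a$ iff $\forall\sigma\,\exists\rho\colon f(\rho,\sigma)\le a$, and the analogous equivalence holds for the NO condition with threshold $b$. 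Relabeling the two subsystems produces a Hamiltonian $H'$ of the same $k$-locality such that $(H,a,b)$ is a YES (resp.\ NO) instance of $\exists\forall$-$k$-MLH iff $(H',a,b)$ is a YES (resp.\ NO) instance of $\forall\exists$-$k$-MLH. Combining this poly-time reduction with \cref{prop:forallexists} gives $\exists\forall$-$k$-MLH $\in \coNP^\QMA$.

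I anticipate no serious obstacle: the only subtlety is verifying that minimax applies here, which is entirely standard for a bilinear form on a product of compact convex sets, and checking that the subsystem-swap reduction preserves $k$-locality, which follows immediately from qubit relabeling.
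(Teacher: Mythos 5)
Your proposal is correct and takes essentially the same route as the paper: both deduce membership in $\NP^\QMA$ from \cref{prop:existsforall} via the trivial inclusion $\NP^{\Vert\QMA[2]} \subseteq \NP^\QMA$, and both obtain membership in $\coNP^\QMA$ by invoking a minimax theorem to pass from $\exists\forall$ to $\forall\exists$ and then applying \cref{prop:forallexists}. The only difference is that you spell out the minimax application and the subsystem-relabeling reduction explicitly, whereas the paper compresses this to a one-line citation of a min-max theorem.
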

\begin{proof}
By a min-max theorem (e.g., \cite[Theorem 2.2]{grewal_et_al:LIPIcs.CCC.2024.6}), we have that $\exists\forall$-MLH = $\forall\exists$-MLH. Thus, \cref{prop:existsforall,prop:forallexists} implies the result. 
\end{proof}

An argument essentially identical to that of \cref{prop:existsforall} yields the following containment for the pure/local case.

\begin{proposition}\label{prop:puresuperqma}
$\exists\forall$-$k$-$\mathrm{PLH} \in \NP^{\Vert\pureSuperQMA[2]}$.
\end{proposition}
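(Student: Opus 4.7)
The plan is to imitate the proof of \cref{prop:existsforall} for $\exists\forall$-$k$-MLH almost verbatim, with one essential substitution: the $\QMA$ query used to test whether a collection of local density matrices is consistent with \emph{some} global mixed state is replaced with a $\pureSuperQMA$ query that tests consistency with a global \emph{pure} state, invoking the $\pureSuperQMA$-completeness of pure local consistency due to Kamminga \cite{kamminga2025complexitypurestateconsistencylocal}.

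Concretely, write $H = \sum_i H_i$ with each $H_i$ acting on at most $k$ qubits. The $\NP$ prover sends classical descriptions (to inverse-polynomial precision) of candidate reduced density matrices $\{\rho_i\}$ for the $\exists$-witness $\ketbra\psi\psi$ on the supports of the $H_i$'s. The verifier makes two parallel oracle queries. First, a $\pureSuperQMA$ query asks whether $\{\rho_i\}$ is consistent with some global pure state on the $\exists$-register. Second, the verifier forms the effective operator $H' = \sum_i H_i'$ on the $\forall$-register, with $H_i' \coloneq \tr_{\exists}(H_i (\rho_i \otimes I))$, so that $\tr(H'\sigma) = \tr(H(\rho \otimes \sigma))$ for any $\rho$ with the claimed local reductions; it then issues a $\coQMA$-type query asking whether $\lmin(H')\ge b$. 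Since $\coQMA$-queries can be simulated by $\QMA$-queries and $\QMA\subseteq\pureSuperQMA$, both queries live in $\pureSuperQMA$. The verifier accepts iff both return YES. The completeness and soundness analyses are the same as in the mixed/local case: in the YES case, take the reductions of an actual pure witness $\ket\psi$, making both queries accept robustly; in the NO case, any $\rho$ consistent with the submitted reductions has $\tr(H(\rho \otimes \sigma)) \ge b$ for some $\sigma$, so at least one of the two queries must robustly reject. Note that restricting the $\forall$-quantifier to pure states does not change the minimization, since the minimum of $\tr(H'\sigma)$ over mixed $\sigma$ is attained at a pure state by convexity.

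The main subtlety is ensuring both oracle queries are well-defined \emph{promise} instances in the sense of \cref{def:promise-oracle}. For the first query, if the prover provides reductions that are only approximately consistent with a pure state, the query may fall into the promise gap; this is handled by setting the precision of the submitted $\rho_i$'s small enough (still polynomially bounded) that any error propagates to at most a $(b-a)/3$ shift in $\tr(H'\sigma)$, which lets the second query's promise gap absorb the slack while keeping the first query's instance valid on honest input. For the second query, the promise gap of $\Omega(1/\poly(n))$ on $H'$ is inherited directly from the gap $b - a \geq 1/\poly(n)$ in the original instance of $\exists\forall$-$k$-PLH. With these precision choices, robust acceptance/rejection of the NP machine follows exactly as in the proof of \cref{prop:existsforall}, completing the containment.
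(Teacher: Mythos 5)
Your proposal matches the paper's approach exactly: replace the mixed-state consistency $\QMA$ query from \cref{prop:existsforall} with a pure-state consistency $\pureSuperQMA$ query (complete for $\pureSuperQMA$ by \cite{kamminga2025complexitypurestateconsistencylocal}), keep the second $\coQMA$-style query, and accept iff both accept. The extra observations you supply---that the $\forall$-minimization is attained at a pure state by convexity, and that precision in the submitted reductions must be chosen so the $(b-a)$ gap absorbs rounding slack---are correct clarifications of details the paper leaves implicit when it says the argument ``carries over verbatim.''
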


Here, one query to the $\pureSuperQMA$ oracle verifies that the provided local density matrices are consistent with some global \emph{pure} state (a complete problem for $\pureSuperQMA$ \cite{kamminga2025complexitypurestateconsistencylocal}), and a second $\coQMA$ query is used exactly as in the proof of \cref{prop:existsforall}. We omit the details, since the argument carries over verbatim.

We remark that, unlike in the mixed-state case, we do not obtain containment in $\NP^\pureSuperQMA \cap \coNP^\pureSuperQMA$, because the minimax theorem invoked in \cref{cor:npqma-conpqma} does not apply when the proofs are restricted to pure states.

At present, there is no known approach to proving hardness for the intersection class $\NP^\QMA \cap \coNP^\QMA$. Moreover, such hardness results appear unlikely: we do not know of any complete problems for even $\NP \cap \coNP$ and if the $\exists\forall$-MLH problem were hard for either $\NP^\QMA$ or $\coNP^\QMA$, it would imply $\NP^\QMA = \coNP^\QMA$, an equality that seems implausible.

\subsection{Quantified Sparse Hamiltonian: Complete Problems for \texorpdfstring{$\pureQPH$}{pureQPH}}

We now turn to the \emph{sparse} variants of the quantified Hamiltonian problem.
Our first result establishes that the $\exists\forall$-mixed sparse Hamiltonian problem lies in $\QSigmai[2]$. 
More significantly, we prove that the quantified pure sparse Hamiltonian (PSH) problems are \emph{complete} for each level of the pure quantum polynomial hierarchy.
That is, for every $i \in \mathbb{N}$, $\PSHSigmai$ is $\pureQSigmai$-complete and $\PSHPii$ is $\pureQPii$-complete.
This gives the first natural family of complete problems for $\pureQPH$. 

\begin{proposition}\label{prop:msh}
    $\exists\forall$-$\mathrm{MSH}$ is contained in $\QSigmai[2]$. 
\end{proposition}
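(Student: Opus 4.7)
The plan is to construct a two-round $\QSigmai[2]$ protocol in which the existential prover sends $\rho$, the universal prover (who sees $\rho$) responds with $\sigma$, and the verifier accepts with probability that is an affine function of the energy $\tr(H(\rho\otimes\sigma))$. This aligns directly with the min--max definition of $\exists\forall$-MSH: in the YES case the existential prover commits to an optimal $\rho$ for which every $\sigma$ has energy at most $a$, while in the NO case, for every $\rho$ the universal prover can pick some $\sigma$ achieving energy at least $b$.

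First, I would invoke the sparse Hamiltonian simulation of Aharonov and Ta-Shma \cite{aharonov2007adiabatic}, which converts the row-oracle access to $H$ guaranteed by \cref{def:row-sparse} into an efficient implementation of $e^{-iHt}$ to any inverse-polynomial precision in $\BQP$. Because $H$ is $d$-sparse with $\maxnorm{H}\le q(n)$, the spectral norm is bounded by $\|H\|\le d\cdot q(n)=\poly(n)$, which is crucial for keeping the precision requirements polynomial. I would then rescale to $\tilde H \coloneq \tfrac12(I + H/\|H\|)$, which satisfies $0\preceq \tilde H\preceq I$ and
\begin{equation*}
  \tr\!\bigl((I-\tilde H)(\rho\otimes\sigma)\bigr) \;=\; \tfrac12 \;-\; \frac{\tr(H(\rho\otimes\sigma))}{2\|H\|}.
\end{equation*}
This is at least $\tfrac12 - a/(2\|H\|)$ in the YES case and at most $\tfrac12 - b/(2\|H\|)$ in the NO case, giving a completeness-soundness gap of $(b-a)/(2\|H\|)\ge 1/\poly(n)$. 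To realize $I-\tilde H$ as the verifier's POVM, I would apply phase estimation to $e^{-iH/C}$ for a constant $C\ge\|H\|$ placing eigenvalues in a canonical range, obtaining a classical estimate $\tilde\lambda$ of a sampled eigenvalue of $H/\|H\|$ to precision well below the gap, and then output accept with probability $\tfrac12(1-\tilde\lambda)$ via a controlled rotation on an ancilla. By linearity of expectation, the overall acceptance probability equals $\tr((I-\tilde H)(\rho\otimes\sigma))$ up to $1/\poly(n)$ error.

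The main obstacle is purely bookkeeping: three approximation errors---the Trotter/simulation error in $e^{-iHt}$, the phase estimation precision, and the controlled-rotation discretization---must each be driven below an appropriate inverse polynomial so that the $1/\poly(n)$ gap survives while the verifier runs in polynomial time. Beyond these routine estimates, no essentially new tools are needed: the min--max semantics of the two quantifiers is automatic because the universal prover moves second with full knowledge of $\rho$, so its optimal response realizes $\max_\sigma \tr(H(\rho\otimes\sigma))$, which is precisely the quantity compared to $a$ and $b$ in the promise.
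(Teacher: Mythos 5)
Your proposal is correct and follows essentially the same route as the paper, which defers to the containment argument in \cref{thm:qph-completeness}: there the verifier also normalizes the sparse Hamiltonian to $[0,I]$ and runs a circuit built from sparse Hamiltonian simulation and phase estimation (the Chailloux--Sattath circuit, itself built on Aharonov--Ta-Shma) whose acceptance probability tracks $\tr(H(\rho\otimes\sigma))$ to inverse-polynomial precision, with the $\exists\forall$ / $\forall\exists$ quantifier structure of $\QSigmai[2]$ matching the MSH promise directly. Two cosmetic points: the scaling parameter you call a ``constant $C$'' must actually be a known \emph{polynomial} upper bound on $\norm{H}$ such as $d\cdot\maxnorm{H}\le d\,q(n)$ (since $\norm{H}$ itself is not directly computable from the row oracle), but that does not change the argument.
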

\begin{proof}
   The proof is identical to the containment result give in \cref{thm:qph-completeness} (below); we defer the details to that proof. 
\end{proof}

Our completeness result requires the following lemmas.
The first lemma is a simple but useful structural fact: if two registers are almost symmetric, then one register must be close to containing a copy of the other. This lets us ``pull out'' a clean copy of a state whenever the verifier enforces near-symmetry via a SWAP test.

\begin{lemma}\label{lem:sym-projection}
  Let $\ket{\psi}\in \calH_{A}$ and $\ket{\phi}\in \calH_{B}\otimes\calH_{C}$ with $\calH_A \cong \calH_{B}$, such that $\tr((\Pisym)_{AB} (\ketbra{\psi}{\psi}\otimes\ketbra{\phi}{\phi}))\ge 1-\epsilon>1/2$.
  Then there exists $\ket{\phi_2}\in \calH_{C}$, such that $\td(\ket{\phi},\ket{\psi,\phi_2}) \le \sqrt{2\epsilon}$.
\end{lemma}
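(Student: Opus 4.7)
The plan is to reduce the hypothesis on $\Pisym$ to a clean statement about the overlap of $\ket{\phi}$ with vectors of the form $\ket{\psi}_B\otimes\ket{\cdot}_C$, and then read off the desired $\ket{\phi_2}$ by projecting onto the $\ket{\psi}_B$ component.

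First I would expand $\Pisym = (I+F_{AB})/2$, where $F_{AB}$ is the swap on registers $A,B$. This reduces the hypothesis to the bound
\begin{equation}
\tr\!\bigl(F_{AB}(\ketbra{\psi}{\psi}_A\otimes\ketbra{\phi}{\phi}_{BC})\bigr)\;\ge\;1-2\epsilon.
\end{equation}
Next I would decompose $\ket{\phi}_{BC}$ along the $\ket{\psi}_B$ direction: write
\begin{equation}
\ket{\phi}_{BC} \;=\; \ket{\psi}_B\otimes\ket{\alpha}_C \;+\; \ket{\chi}_{BC},
\end{equation}
where $\ket{\alpha}_C = (\bra{\psi}_B\otimes I_C)\ket{\phi}$ is an unnormalized vector and $\ket{\chi}_{BC}$ lies in the subspace $(I-\ketbra{\psi}{\psi})_B\otimes\calH_C$, so that the two summands are orthogonal. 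Setting $p = \braket{\alpha|\alpha}$, this gives $\braket{\chi|\chi}=1-p$.

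The key computation is then to plug this decomposition into $\bra{\psi}_A\bra{\phi}_{BC}\,F_{AB}\,\ket{\psi}_A\ket{\phi}_{BC}$ and observe that each cross term vanishes because swapping $\ket{\psi}_A$ with a $B$-register component orthogonal to $\ket{\psi}_B$ produces a state whose $B$-register is $\ket{\psi}_B$ (hence orthogonal to $\ket{\chi}$), while the ``diagonal'' $\ket{\chi}$-$\ket{\chi}$ term involves $\braket{\psi|j}\braket{k|\psi}$ factors that vanish by construction of $\ket{\chi}$. The only surviving contribution is from the $\ket{\psi}_B\ket{\alpha}_C$ piece, which is fixed by $F_{AB}$ and contributes exactly $\braket{\alpha|\alpha}=p$. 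Combining with the previous step yields $p\ge 1-2\epsilon$.

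Finally I would define $\ket{\phi_2}\coloneqq \ket{\alpha}/\sqrt{p}\in\calH_C$ (well-defined since $p>0$ by the assumption $1-\epsilon>1/2$), so that $\ket{\psi}_B\otimes\ket{\phi_2}_C$ is a unit vector. A direct computation gives $\braket{\psi,\phi_2|\phi} = \braket{\phi_2|\alpha} = \sqrt{p}$, hence
\begin{equation}
\td(\ket{\phi},\ket{\psi,\phi_2}) \;=\; \sqrt{1-\abs{\braket{\psi,\phi_2|\phi}}^2} \;=\; \sqrt{1-p} \;\le\; \sqrt{2\epsilon}.
\end{equation}
I do not anticipate a serious obstacle here: the computation in the second step (checking that the three ``off-diagonal'' contributions to $\tr(F_{AB}\cdots)$ vanish) is the only place where care is needed, and it follows directly from the defining orthogonality property of $\ket{\chi}$ together with the action of $F_{AB}$.
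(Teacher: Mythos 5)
Your proposal is correct and takes essentially the same approach as the paper: both reduce the hypothesis via $\Pisym=(I+F)/2$ to a bound on the swap expectation, identify that expectation with the weight $p=\braket{\alpha|\alpha}=\bra{\psi}(\tr_C\ketbra{\phi}{\phi})\ket{\psi}$, and define $\ket{\phi_2}$ as the normalized projection $(\bra{\psi}_B\otimes I_C)\ket{\phi}/\sqrt{p}$. The only cosmetic difference is that the paper computes the swap expectation via the identity $\tr(F(X\otimes Y))=\tr(XY)$, whereas you expand $\ket{\phi}$ orthogonally and kill the cross and $\chi$-$\chi$ terms directly.
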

\begin{proof}
  We have $\Pisym = \frac12(I+F)$, where $F$ is the SWAP operation on $AB$.
  It holds that
  \begin{equation}
    \begin{aligned}
    \tr(F(X\otimes Y)) &= \sum_{ij} \tr\bigl((\ketbra{i}{j}\otimes \ketbra{j}{i})(X\otimes Y)\bigr) = \sum_{ij} \tr(\ketbra{i}{j}X)\tr(\ketbra{j}{i}Y) \\
    &= \sum_{ij} \bra{j}X\ket{i}\bra{i}Y\ket{j} = \sum_{ij}\bra{j} XY\ket{j} = \tr(XY).
    \end{aligned}
  \end{equation}
  Let $\rho = \tr_{C}\ketbra{\phi}{\phi}$. 
  Thus,
  \begin{equation}
    \tr\bigl((\Pisym)_{AB} (\ketbra{\psi}{\psi}\otimes \ketbra{\phi}{\phi})\bigr) = \frac12\Bigl(1 + \tr\bigl(F(\ketbra{\psi}{\psi}\otimes \rho)\bigr)\Bigr) = \frac12(1 + \bra{\psi}\rho\ket{\psi}).
  \end{equation}
  Hence, $\bra{\psi}\rho\ket{\psi} \ge 1-2\epsilon$.
  Define
  \begin{equation}
    \ket{\phi_2} \coloneq \frac{(\bra{\psi}_B\otimes I_C)\ket{\phi}}{\sqrt{\bra\psi\rho\ket\psi}}.
  \end{equation}
  Therefore $\braket{\psi,\phi_2|\phi}  = \bra{\phi}(\ketbra{\psi}{\psi}\otimes I)\ket{\phi}/\sqrt{\bra\psi\rho\ket\psi} = \sqrt{\bra{\psi}\rho\ket{\psi}} \ge \sqrt{1-2\epsilon}$.
  Finally, $\td(\ket{\psi,\phi_2},\ket{\phi}) = \sqrt{1-\abs{\braket{\psi,\phi_2|\phi}}} \le \sqrt{2\epsilon}$.
\end{proof}

Next, we recall Kitaev’s circuit-to-Hamiltonian construction, which is the backbone of essentially all Hamiltonian complexity reductions. 

\begin{lemma}[{Kitaev's circuit-to-Hamiltonian mapping \cite{Kitaev2002}\protect\footnotemark}]\label{lem:kitaev}
  \footnotetext{The statement of this lemma is not explicitly made in \cite{Kitaev2002}. A direct proof can be found in \cite[Remark 3.3]{rudolph_et_al:LIPIcs.ITCS.2025.85}.}
  Let $V=U_m\dotsm U_1$ be a quantum circuit of $m$ $2$-local gates with $a$ ancilla qubits and $b$ input qubits.
  Then there exists a Hamiltonian $H^{(V)}$ that is the sum of $O(m)$ $5$-local projectors, such that 
  \begin{equation}
    \ker H = \spn\left\{\frac{1}{\sqrt{m+1}}\sum_{t=0}^m U_t\dotsm U_1(\ket{0^a}_{\calA}\ket{\psi}_\calB)\otimes\ket{1^t0^{m-t}}_{\calC}\middle|\ket{\psi}\in\CC^{2^b}\right\},
  \end{equation}
  where $\calA$ is the ancilla register, $\calB$ is the input register, and $\calC$ is the clock register.
  $H^{(V)}$ has a spectral gap of $\Omega(1/m^2)$.
\end{lemma}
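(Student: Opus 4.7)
The plan is to construct $H^{(V)} = H_{\mathrm{in}} + H_{\mathrm{clock}} + H_{\mathrm{prop}}$ as a sum of three families of $5$-local projectors, then verify that its kernel is the span of history states and separately establish the spectral gap. First I would define each piece. Encoding the clock in unary on $m$ qubits, set
\begin{equation}
  H_{\mathrm{clock}} = \sum_{t=1}^{m-1}\ketbra{01}{01}_{\calC_t\calC_{t+1}},
\end{equation}
which penalizes any clock configuration other than $\ket{1^t 0^{m-t}}$; set $H_{\mathrm{in}} = \sum_{i=1}^a \ketbra{1}{1}_{\calA_i}\otimes \ketbra{0}{0}_{\calC_1}$, which penalizes a nonzero ancilla when $t=0$; and set $H_{\mathrm{prop}} = \sum_{t=1}^m H_{\mathrm{prop},t}$ where each $H_{\mathrm{prop},t}$ is the standard $5$-local term enforcing $\ket{\phi_t} = U_t\ket{\phi_{t-1}}$ on the gate qubits, controlled by the clock qubits $\calC_{t-1},\calC_t,\calC_{t+1}$ (boundary terms $t\in\{1,m\}$ use two clock qubits). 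Each $U_t$ is $2$-local, so the propagation terms act on at most $2+3=5$ qubits.

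Second, I would verify the kernel characterization. The history states $\ket{\eta_\psi} = \tfrac{1}{\sqrt{m+1}}\sum_t U_t\dotsm U_1(\ket{0^a}\ket{\psi})\otimes \ket{1^t 0^{m-t}}$ are annihilated by all three pieces: the clock register is supported on valid unary strings, the ancilla is $\ket{0^a}$ on the $t=0$ branch, and the equal-weight superposition together with the unitary propagation kills each $H_{\mathrm{prop},t}$ term. For the reverse inclusion, perform the Feynman--Kitaev change of basis $W = \sum_t (U_t\dotsm U_1)\otimes \ketbra{t}{t}_{\calC}$; in this basis, the restriction of $H_{\mathrm{prop}}$ to the valid-clock subspace becomes $I_{\calA\calB}\otimes L$, where $L$ is the path-graph Laplacian on $m+1$ nodes, with kernel $\spn\{\tfrac{1}{\sqrt{m+1}}\sum_t \ket{t}\}$. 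Combining with the constraints that $H_{\mathrm{clock}}$ forces valid clock states and $H_{\mathrm{in}}$ forces the $t=0$ ancilla to be $\ket{0^a}$ pins the kernel down exactly to the span of the $\ket{\eta_\psi}$.

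The main obstacle is proving the $\Omega(1/m^2)$ spectral gap, and the plan is to apply Kitaev's geometric lemma (the ``nullspace projection'' lemma): for two positive semidefinite operators with kernels $\calK_1,\calK_2$ and gaps $\Delta_1,\Delta_2$ above zero, the sum $H_1+H_2$ has gap at least $\min(\Delta_1,\Delta_2)\sin^2(\theta/2)$, where $\cos\theta$ is the largest singular value of the product of projectors onto $\calK_1$ and $\calK_2$ restricted to the complement of $\calK_1\cap\calK_2$. I would apply this in two stages: first combine $H_{\mathrm{prop}}$ with $H_{\mathrm{clock}}$, using that $H_{\mathrm{clock}}$ has a constant gap and that $H_{\mathrm{prop}}$ restricted to valid-clock states is unitarily equivalent to the path Laplacian $L$ with spectral gap $\Theta(1/m^2)$; the geometric lemma preserves an $\Omega(1/m^2)$ gap for the sum because the angle between the corresponding kernels is bounded below by a constant. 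Then add $H_{\mathrm{in}}$, whose only effect on the surviving history-state subspace is to remove those $\ket{\eta_\psi}$ with $\ket{\psi}$ supported outside $\ket{0^a}\otimes\CC^{2^b}$, again with constant angle between the relevant kernels. Tracking the constants carefully gives the claimed $\Omega(1/m^2)$ lower bound on the spectral gap of $H^{(V)}$.
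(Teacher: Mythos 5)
The paper does not actually prove this lemma; it cites it as a folklore consequence of Kitaev's circuit-to-Hamiltonian mapping, pointing to \cite[Remark 3.3]{rudolph_et_al:LIPIcs.ITCS.2025.85} for a direct proof. Your construction reproduces precisely that standard Feynman--Kitaev Hamiltonian with no output penalty ($H^{(V)} = H_{\mathrm{in}} + H_{\mathrm{clock}} + H_{\mathrm{prop}}$), and the kernel characterization via the unitary $W = \sum_t (U_t\dotsm U_1)\otimes\ketbra{t}{t}_{\calC}$ together with the locality count are correct.

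One step in your spectral-gap argument needs repair. When you apply the geometric lemma to $H_{\mathrm{clock}} + H_{\mathrm{prop}}$, you justify the $\Omega(1/m^2)$ input by noting that ``$H_{\mathrm{prop}}$ \emph{restricted to valid-clock states} is unitarily equivalent to the path Laplacian.'' But the geometric lemma requires a lower bound on the smallest \emph{nonzero} eigenvalue of $H_{\mathrm{prop}}$ on all of $(\ker H_{\mathrm{prop}})^\perp$, including the invalid-clock sector, and that bound is not what you have established. The cleanest fix avoids the geometric lemma at this stage entirely: each $H_{\mathrm{prop},t}$ sends valid unary clock strings to valid ones and invalid to invalid (flipping bit $t$ preserves validity under the constraint $\calC_{t-1}\calC_{t+1}=10$), so both $H_{\mathrm{clock}}$ and $H_{\mathrm{prop}}$ commute with the projector onto the valid-clock subspace $\mathcal{S}_0$, and $H_{\mathrm{clock}}+H_{\mathrm{prop}}$ is block-diagonal with respect to $\mathcal{S}_0\oplus\mathcal{S}_0^\perp$. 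On $\mathcal{S}_0^\perp$ we have $H_{\mathrm{clock}}\succeq I$, so that block has smallest eigenvalue at least $1$; on $\mathcal{S}_0$, conjugating by $W$ gives $I\otimes L$ (up to the usual factor of $1/2$) with kernel the history states over all inputs and ancillas and gap $\Theta(1/m^2)$. You then invoke the geometric lemma exactly once, to add $H_{\mathrm{in}}$, where the constant lower bound on the angle between the two kernels is the standard calculation you sketch. With that modification the rest of your argument goes through and matches the cited proof.
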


The final tool we need is a variant of the well-known Projection Lemma, which frequently appears in Hamiltonian complexity.

\begin{lemma}[State Projection Lemma \cite{kamminga2025complexitypurestateconsistencylocal}]\label{lem:state-projection}
  Let $H = H_1 + H_2$ be the sum of two Hamiltonians acting on Hilbert space $\calH = \calS \oplus \calS^\perp$, where $\calS$ is the kernel of $H_2$ and the other eigenvalues are at least $J$.
  Let $\rho$ be a state in $\calH$ such that $\tr(H\rho) \le \epsilon$.
  Then there exists a state $\sigma$ (pure if $\rho$ is pure) in $\calS$, such that $\td(\rho,\sigma) \le \delta$ and $\tr(H\sigma) \le \epsilon + 2\delta\norm{H_1}$, for $\delta = \sqrt{(\epsilon+\norm{H_1})/J}$.
\end{lemma}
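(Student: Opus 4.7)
The plan is to show that a state $\rho$ with low energy under $H=H_1+H_2$ must concentrate on the kernel $\calS$ of $H_2$, define $\sigma$ as the (renormalized) projection of $\rho$ onto $\calS$, and then control the induced energy shift via the trace-distance bound.

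First, I would quantify how much of $\rho$ lies outside $\calS$. From $\tr(H\rho)\le\epsilon$ together with $\tr(H_1\rho)\ge-\norm{H_1}$ (since $H_1\succeq-\norm{H_1}I$), one obtains $\tr(H_2\rho)\le\epsilon+\norm{H_1}$. Combined with $H_2\succeq J\,\Pi_{\calS}^\perp$ (which follows from the assumption that all nonzero eigenvalues of $H_2$ are at least $J$), this yields
\[
\tr(\Pi_{\calS}^\perp\rho)\;\le\;\frac{\epsilon+\norm{H_1}}{J}\;=\;\delta^2,
\]
so $p:=\tr(\Pi_\calS\rho)\ge 1-\delta^2$.

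Second, I would define $\sigma$. If $\rho=\ketbra{\psi}{\psi}$ is pure, set $\ket\sigma=\Pi_\calS\ket\psi/\norm{\Pi_\calS\ket\psi}$; this is a pure state in $\calS$ with $\abs{\braket{\psi|\sigma}}^2=p$, hence $\td(\ketbra{\psi}{\psi},\ketbra{\sigma}{\sigma})=\sqrt{1-p}\le\delta$. If $\rho$ is mixed, set $\sigma=\Pi_\calS\rho\,\Pi_\calS/p$, which is again supported on $\calS$. To get the bound $\td(\rho,\sigma)\le\delta$ rather than the $2\delta$ one would get by directly invoking the Gentle Measurement Lemma, I would pass through fidelity: taking the natural purification $\ket{\psi_\sigma}=(\Pi_\calS\otimes I)\ket{\psi_\rho}/\sqrt p$ of $\sigma$ (with $\ket{\psi_\rho}$ any purification of $\rho$), one reads off $\braket{\psi_\rho|\psi_\sigma}=\sqrt p$, so Uhlmann's theorem gives $F(\rho,\sigma)\ge\sqrt p$ and the Fuchs--van de Graaf inequality yields $\td(\rho,\sigma)\le\sqrt{1-F(\rho,\sigma)^2}\le\sqrt{1-p}\le\delta$.

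Third, the energy bound is a short calculation. Since $\sigma$ is supported on $\calS$, we have $\tr(H_2\sigma)=0$, so $\tr(H\sigma)=\tr(H_1\sigma)$. Moreover $\tr(H_1\rho)=\tr(H\rho)-\tr(H_2\rho)\le\epsilon$ (using $H_2\succeq 0$), and Hölder's inequality applied to $H_1$ and $\rho-\sigma$ gives $\abs{\tr(H_1\rho)-\tr(H_1\sigma)}\le\norm{H_1}\norm{\rho-\sigma}_1=2\norm{H_1}\,\td(\rho,\sigma)$. Combining, $\tr(H\sigma)\le\epsilon+2\delta\norm{H_1}$, as claimed. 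The one delicate step is the mixed-state trace-distance bound: the naive gentle-measurement argument loses a factor of $2$, and one must route through Uhlmann/Fuchs--van de Graaf (or equivalently compute the fidelity directly from the projection) to recover the tight constant $\delta$ stated in the lemma. The remaining pieces---the mass bound via the spectral gap of $H_2$ and the energy shift via Hölder's inequality---are routine.
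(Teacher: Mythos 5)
Your proof is correct and complete. Note that the paper itself does not prove this lemma — it imports it from \cite{kamminga2025complexitypurestateconsistencylocal} — so there is no in-paper argument to compare against; your argument is the standard projection-lemma reasoning (mass bound on $\calS^\perp$ from the spectral gap of $H_2$, projection and renormalization, energy shift via \Holder), and your use of Uhlmann plus Fuchs--van de Graaf to get $\td(\rho,\sigma)\le\sqrt{1-p}$ in the mixed case, rather than the lossier gentle-measurement bound, is exactly what is needed to match the stated constant $\delta$.
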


We are now ready to prove our completeness result.

\begin{theorem}\label{thm:qph-completeness}
  $\PSHSigmai$ is $\pureQSigmai$-complete and $\PSHPii$ is $\pureQPii$-complete.
\end{theorem}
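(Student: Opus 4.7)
My plan is to prove that $\PSHSigmai$ is $\pureQSigmai$-complete; the completeness of $\PSHPii$ for $\pureQPii$ then follows immediately from \cref{lem:complement} by inverting all quantifiers. The proof splits into a containment direction and a hardness direction.

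For the containment $\PSHSigmai \subseteq \pureQSigmai$, I would construct a $\pureQSigmai$ verifier that, on input $(H,a,b)$ together with pure-state proofs $\ket{\psi_1},\dots,\ket{\psi_i}$, estimates $\tr(H(\ketbra{\psi_1}{\psi_1}\otimes\dotsm\otimes\ketbra{\psi_i}{\psi_i}))$ to precision $(b-a)/4$ and accepts iff the estimate lies below $(a+b)/2$. Since $H$ is row-sparse with $\maxnorm{H}\le\poly(n)$, the Aharonov--Ta-Shma sparse Hamiltonian simulation algorithm~\cite{aharonov2007adiabatic} efficiently implements $e^{-iH\tau}$ in $\BQP$. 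Feeding this unitary into the Hadamard test at sufficiently small $\tau$ (or equivalently, into phase estimation) yields an estimate of the expectation value with inverse-polynomial precision and polynomial confidence. The $1/\poly(n)$ promise gap between $a$ and $b$ translates directly into the required polynomial completeness--soundness gap.

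For the hardness direction, I would extend the construction of Chailloux and Sattath~\cite{chailloux2012complexity}, who established $\QMA(2)$-hardness of $\exists\exists$-PSH, to arbitrary constantly many alternating quantifiers. Given a $\pureQSigmai$ verifier $V_x=U_m\dotsm U_1$ with proof registers $\calP_1,\dots,\calP_i$, ancilla $\calA$, and clock $\calC$, I would first apply \cref{lem:kitaev} to obtain a Feynman--Kitaev Hamiltonian $H^{(V_x)}$ whose kernel is spanned by history states parameterized by product pure inputs $\ket{\psi_1}\otimes\dotsm\otimes\ket{\psi_i}$. The PSH Hamiltonian would act on enlarged registers $\calR_1,\dots,\calR_i$, with one designated existential register (say $\calR_1$) additionally carrying the ancilla, clock, and a ``scratch'' copy of every other witness. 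The full Hamiltonian takes the form
\begin{equation}
H = J_1\,H^{(V_x)} + J_2\,H_{\text{consist}} + H_{\text{out}},
\end{equation}
where $H_{\text{consist}}$ is a sum of SWAP-type projectors onto the symmetric subspace between each scratch copy in $\calR_1$ and the actual witness in $\calR_j$, and $H_{\text{out}}$ penalizes a rejecting output at the final clock state. Row-sparseness with polynomially bounded max-norm is preserved since each term is either $5$-local or a SWAP-based projector.

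The main obstacle is the energy analysis under alternating quantifiers. In $\QMA(2)$ both provers cooperate, so symmetric penalties trivially bind consistent witnesses; in $\pureQSigmai$ the adversarial universal provers may deliberately violate consistency with the scratch copies to inflate the energy, so careful placement of which register holds the scratch state and which consistency tests are activated in each round is essential. My plan is to address this by iteratively invoking the State Projection Lemma (\cref{lem:state-projection}) together with \cref{lem:sym-projection}, choosing polynomially large weights $J_1\gg J_2\gg 1$. Starting from any quantifier configuration achieving energy at most $a$, one first projects into the kernel of $H_{\text{consist}}$ (forcing each scratch copy in $\calR_1$ to be $\poly(1/J_2)$-close to its corresponding witness via \cref{lem:sym-projection}), then into the kernel of $H^{(V_x)}$ (obtaining a genuine history state), at which point $H_{\text{out}}$ reduces to the rejection probability of $V_x$ on legitimate pure product inputs. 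Chaining the resulting trace-distance bounds preserves the $1/\poly(n)$ promise gap, and a symmetric argument for the NO case translates any low-energy configuration back into a valid $\pureQSigmai$ strategy with rejection probability at least $b$, completing the reduction.
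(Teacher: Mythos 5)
Your containment argument matches the paper's approach in substance: estimate $\tr(H(\psi_1\otimes\dotsm\otimes\psi_i))$ via sparse Hamiltonian simulation plus phase estimation, exactly as Chailloux--Sattath do for the $\exists\exists$ case, and this is fine. The hardness direction, however, has a structural flaw that breaks completeness.

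You place the ancilla, clock, and scratch copies of all other witnesses in $\calR_1$, i.e.\ in the register sent by the \emph{first} prover. But the first prover commits to its message before seeing any of the later provers' moves, so it cannot possibly populate scratch copies of $\rho_2,\dots,\rho_i$ correctly: in the YES direction of $\PSHSigmai$, an adversarial $\forall$-prover at position $j\ge 2$ can simply send a state orthogonal to the prover-1 scratch copy, incurring the large $J_2$ consistency penalty and forcing the energy far above $a$ --- so no choice of $\rho_1$ can witness the YES case, even though the underlying $\pureQSigmai$ instance is a genuine YES. This cannot be patched by removing the consistency checks on adversarial registers either, because the existential responses $\rho_3,\rho_5,\dots$ themselves depend on the adversary's earlier answers and so are equally unknowable to prover~1. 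The paper avoids this by putting all of the ancilla, clock, and full input register $\calB=\calB_1\dotsm\calB_i$ in the \emph{last} register $\calH_i$ (sent by the final $\exists$-prover, who has already seen the entire transcript), and enforcing $\Pisym$ only between $\calH_1\dotsm\calH_{i-1}$ and $\calB_1\dotsm\calB_{i-1}$, conditioned on the clock being at time $0$; this makes the proof cases for odd-$i$ $\PSHSigmai$ and even-$i$ $\PSHPii$ (last quantifier $\exists$) direct, with the remaining two obtained by \cref{lem:complement}. Secondarily: with your weights $J_1\gg J_2$ and $J_1$ on $H^{(V_x)}$, the State Projection Lemma should be applied to the \emph{highest}-penalty term first, i.e.\ project onto $\ker H^{(V_x)}$ before $\ker H_{\mathrm{consist}}$; your sketch projects in the opposite order, and you also omit the clock-conditioning on the consistency check, which is needed once the Kitaev circuit begins overwriting the input register.
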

\begin{proof}
  \emph{Containment.} $\PSHSigmai\in\pureQSigmai$ is completely analogous to the containment of the Separable Sparse Hamiltonian problem (i.e. $\exists\exists\mhyphen\mathrm{PSH}$) in $\QMA(2)$ \cite{chailloux2012complexity}.
  Given $d$-sparse $n$-qubit Hamiltonian $H$ with $0\preceq H\preceq I$ and error $\epsilon$, \cite{chailloux2012complexity} constructs a circuit $Q$ (using Hamiltonian simulation and phase estimation) that runs in time $\poly(d,n,\epsilon^{-1})$, such that for all states $\ket{\psi}$
  \begin{equation}
    \abs[\big]{\Pr[Q \text{ accepts }\ket{\psi}] - \bra{\psi}H\ket{\psi}} \le \epsilon.
  \end{equation}
  So all the $\pureQSigmai$-verifier needs to do is normalize the input Hamiltonian to satisfy $0\preceq H\preceq I$ and simulate $Q$ with $\epsilon = 1/(4q(n))$, which gives a promise gap of $1/(2q(n))$.

  \emph{Hardness.} Let $i\in\NN$. We will show $\PSHPii$ is $\pureQPii$-hard for even $i$, and $\PSHSigmai$ is $\pureQSigmai$-hard for odd $i$. The other two cases follow by \cref{lem:complement}.
  Let $i$ be even and $A\in \pureQPii$ (the proof for odd $i$ and $A\in \pureQSigmai$ is completely analogous).
  Given $x\in\{0,1\}^n$, we construct Hamiltonian $H_x$ in time $\poly(n)$, such that for $a,b$ to be determined later,
  \begin{subequations}
    \begin{alignat}{2}
      x&\in \Ayes &\quad\Rightarrow\quad& (H_x,a,b) \in (\PSHPii)_\yes\\
      x&\in \Ano &\quad\Rightarrow \quad& (H_x,a,b) \in (\PSHPii)_\no
    \end{alignat}
  \end{subequations}
  There exists a poly-time uniform family of verifiers $\{V_x\}$, such that 
  \begin{subequations}
    \begin{alignat}{2}
      x&\in \Ayes &\quad\Rightarrow\quad& \forall\ket{\psi_1}\exists\ket{\psi_2}\dotsm \forall\ket{\psi_{i-1}}\exists{\ket{\psi_i}}\colon P_x(\psi_1\otimes\dotsm\otimes\psi_i) \ge c(n)\label{eq:xAyes},\\
      x&\in \Ano &\quad\Rightarrow \quad& \exists\ket{\psi_1}\forall\ket{\psi_2}\dotsm \exists\ket{\psi_{i-1}}\forall{\ket{\psi_i}}\colon P_x(\psi_1\otimes\dotsm\otimes\psi_i) \le s(n)\label{eq:xAno},
    \end{alignat}
  \end{subequations}
  where $P_x(\rho)$ denotes the acceptance probability of $V_x$ on input $\rho$.
  Let $m \le n^{O(1)}$ be the number of gates of $V_x$ and $p<m$ (without loss of generality) the number of qubits in each message.
  Denote the $i$ message registers of $V_x$ by $\calB_1,\dots,\calB_i$ of $p$ qubits each.
  The Hamiltonian $H$ will act on registers $\calH_1=\calB_1,\dots,\calH_{i-1}=\calB_{i-1},\calH_{i}=\calA\calB\calC$, where $\calA$ is the ancilla register of $n_{\calA}\le m$ qubits, $\calB=\calB_1\dots\calB_i$ is the input register to $V$ of $ip$ qubits, and $\calC$ is the clock register of $m$ qubits.
  In terms of \cref{def:PSH}, we have $n_1=\dotsm=n_{i-1}=p$ and $n_i = ip+m+n_{\calA}$.
  Finally define the Hamiltonian
  \begin{equation}
    H_x = \ketbra{0}{0}_{\calA_1}\otimes\ketbra{1}{1}_{\calC_m} + J_1 \ketbra{0}{0}_{\calC_1}\otimes (I-\Pisym)_{\calH_1\dots\calH_{i-1},\calB_1\dots\calB_{i-1}} + J_2 H^{(V_x)}_{\calA\calB\calC},
  \end{equation}
  with $H^{(V_x)}$ from \cref{lem:kitaev}, $\Pisym$ the projector onto the symmetric subspace across the cut $\calH_1\dots\calH_{i-1}$ / $\calB_1\dots\calB_{i-1}$, and sufficiently large $1\ll J_1 \ll J_2 \le n^{O(1)}$.
  $H_x$ is $O(m)$-sparse since $H^{(V_x)}$ is local and has $O(m)$ terms, and $\Pisym$ is $2$-sparse.

  Let $a=(1-c)/(m+1)$ and $b=(1-c+\gamma/4)/(m+1)$, where $\gamma = c-s$.
  Note that the gap $b-a\ge 1/q(n)$ and bound $\maxnorm{H_x}\le O(J_2 m)\le q(n)$ for $n=n_1+\dotsm+n_i$ can be achieved by padding the last message (i.e. increasing $n_i$) and letting $H_x$ act as identity on the padding qubits.
  We lose purity of the last message, but that is not an issue since \cref{eq:xAyes,eq:xAno} are still true when replacing the pure $\ket{\psi_i}$ with a mixed $\rho_i$ (by convexity).

  For $x\in \Ayes$, we have \cref{eq:xAyes} we argue that
  \begin{equation}
    \forall \ket{\phi_1}\exists\ket{\phi_2}\dotsm\forall\ket{\phi_{i-1}}\exists\ket{\phi_i}\colon \tr\bigl(H(\phi_1\otimes\dotsm\otimes\phi_i)\bigr)\le a,
  \end{equation}
  holds with $\ket{\psi_j}\in\calH_j$ for all $j\in[i]$.
  For rounds $2,4,\dots,i-2$, Alice (taking the game interpretation with Bob as first player ($\forall$) and Alice as second player $(\exists)$) can simply send the same response as in \eqref{eq:xAyes}, i.e., $\ket{\phi_j} = \ket{\psi_j}$ for even $j <i$.
  In the last round, Alice sends a valid history state of the form 
  \begin{equation}
    \ket{\phi_i} = \frac1{\sqrt{m+1}}\sum_{t=0}^m U_{t}\dotsm U_1(\ket{0}_{\calA}\ket{\psi_1,\dots,\psi_i}_{\calB})\otimes\ket{1^t0^{m-t}}_{\calC},
  \end{equation}
  where $\ket{\psi_i}$ corresponds to Alice's last message in \eqref{eq:xAyes}, and $U_1,\dots,U_m$ are the gates of $V_x$ with output register $\calA_1$.
  Then we get $\tr(H_x(\phi_1\otimes\dotsm\otimes\phi_i)) \le (1-c)/(m+1)$ since $V_x$ rejects $\ket{\psi_1,\dots,\psi_i}$ with probability $\le 1-c$.

  Now consider $x\in\Ano$ and assume
  \begin{equation}\label{eq:contradiction}
    \forall \ket{\phi_1}\exists\ket{\phi_2}\dotsm\forall\ket{\phi_{i-1}}\exists\ket{\phi_i}\colon \tr\bigl(H(\phi_1\otimes\dotsm\otimes\phi_i)\bigr) < b.
  \end{equation}
  We will show that this contradicts \cref{eq:xAno}.
  Let $\ket\phi = \ket{\phi_1,\dots,\phi_{i}}$, and $\epsilon = \gamma/(8(m+1))$.
  By \cref{lem:kitaev,lem:state-projection} and choosing sufficiently large $J_2\in\poly(J_1,m,\gamma^{-1})$, there exists a state $\ket{\phi_i'}$, such that for some input state $\ket{\eta}$:
  \begin{equation}
    \begin{aligned}
    &\ket{\phi_i'} = \frac{1}{\sqrt{m+1}}\sum_{t=0}^m U_t\dotsm U_1 \ket{0}_\calA\ket{\eta}_\calB\otimes\ket{1^t0^{m-t}}_{\calC},\\
    &\quad\td\bigl(\ket{\phi_i},\ket{\phi_i'}\bigr)\le \epsilon, \quad \text{and }\;\tr\bigl(J_1(I-\Pisym)(\phi_1\otimes\dotsm\otimes\phi_{i-1}\otimes\phi_{i}')\bigr) \le b + J_1\epsilon.
    \end{aligned}
  \end{equation}
  By \cref{lem:sym-projection} and choosing sufficiently large $J_1$, there exists a state $\ket{\eta_i}$, such that
  \begin{equation}\label{eq:eta}
  \td(\ket{\phi_1,\dots,\phi_{i-1},\eta_i},\ket{\eta}) \le \epsilon,
  \end{equation}
  and therefore $\td(\ket{\phi_i},\ket{\phi_i''}) \le 2\epsilon$, with
  \begin{equation}
    \ket{\phi_i''} = \frac{1}{\sqrt{m+1}}\sum_{t=0}^m U_t\dotsm U_1 \ket{0}_\calA\ket{\phi_1,\dots,\phi_{i-1},\eta_i}_\calB\otimes\ket{1^t0^{m-t}}_{\calC}.
  \end{equation}
  Hence,
  \begin{equation}
    \tr\bigl(H(\phi_1\otimes\dotsm\otimes\phi_{i-1}\otimes\phi_{i}'')\bigr) = \frac{1}{m+1}\bigl(1-P_x(\phi_1\otimes\dotsm\otimes\phi_{i-1}\otimes\eta_i)\bigr) \le b + 2\epsilon.
  \end{equation}
  Thus, $P_x(\phi_1\otimes\dotsm\otimes\phi_{i-1}\otimes\eta_i) \ge c-\gamma/4-\gamma/4 =c-\gamma/2=s+\gamma/2$.
  Therefore,
  \begin{equation}
    \forall \ket{\psi_1}\exists\ket{\psi_2}\dotsm\forall\ket{\psi_{i-1}}\exists\ket{\eta_i}: P_x(\psi_1\otimes\dotsm\otimes\psi_{i-1}\otimes\eta_i) \ge s+\gamma/2,
  \end{equation}
  which means that Alice can win with probability at least $s+\gamma/2$ by choosing the even messages $\ket{\psi_2}=\ket{\phi_{2}},\dots,\ket{\psi_{i-2}}=\ket{\phi_{i-2}},\ket{\psi_i}=\ket{\eta_i}$ with $\ket{\phi_1},\dots,\ket{\phi_i}$ as in \cref{eq:contradiction}, and $\ket{\eta_i}$ as in \cref{eq:eta}.
  This contradicts \cref{eq:xAno}.
\end{proof}

We note that proving that $\exists\forall$-MSH is $\QSigmai[2]$-hard seems out of reach with current techniques. 
\cref{thm:qph-completeness} relies on the SWAP test, but this approach fails in the mixed-state setting, because the SWAP test fails to check equality of mixed states. 
Any hardness proof would therefore require fundamentally new ideas.

\section{Open Problems}

We conclude by highlighting several natural directions for future work.  

\begin{enumerate}
    \item Are there complete problems for variants of the quantum polynomial hierarchy beyond $\pureQPH$? For example, can the pure/local, mixed/local, or mixed/sparse Hamiltonian problems be shown complete for any class?  
    
    \item It is striking that the local variants of the quantified Hamiltonian problem only required quantum computation in the \emph{oracle} part of the machine. For instance, in \cref{prop:puresuperqma}, the base machine is merely $\NP$. A natural direction is to better understand the relationship between $\NP^\QMA$ and $\QMA^\QMA$.  
    
    \item Is $\exists\forall$-$k$-PLH complete for $\NP^{\Vert \pureSuperQMA[2]}$? A positive answer would imply that \[\NP^{\Vert \pureSuperQMA[2]} \subseteq \pureQSigmai[2],\] giving the first connection between oracle-based and quantifier-based definitions of the quantum polynomial hierarchy.  
    
    \item  Can one construct disentanglers with guarantees stronger than those in \cref{lem:disentangler}? In particular, is it possible to design a disentangler whose output is always (close to) a pure state?  
\end{enumerate}

\section*{Acknowledgements}
We thank Justin Yirka, Sevag Gharibian, and William Kretschmer for helpful conversations. 
SG was supported in part by an IBM PhD Fellowship.
DR was supported in part by the DFG under grant number 432788384.
This work was done in part while the authors were visiting the Simons Institute for the Theory of Computing.

\bibliographystyle{alphaurl}
\bibliography{bibliography.bib}

\appendix

\end{document}